\def\E{\mathbb E}
\def\V{\text{Var}}
\def\cv{\text{Cov}}
\def\cm{\text{Cum}}
\newcolumntype{$}{>{\global\let\currentrowstyle\relax}}
\newcolumntype{^}{>{\currentrowstyle}}
\tikzstyle{vertex}=[circle, draw, inner sep=0pt, minimum size=6pt]
\newcommand{\vertex}{\node[vertex]}
\newtheorem{prop}{Proposition}[section]     
\newtheorem{thm}{Theorem}[section]
\newtheorem{lemma}{Lemma}[section]
\newtheorem{Remark}{Remark}[section]
\newtheorem{Corollary}{Corollary}[section]
\newtheorem{assumption}{Assumption}[section]
\newtheorem{properties}{Properties}[section]
\theoremstyle{definition}
\newtheorem*{exmp}{Example}
\newcommand{\im}{\mathrm{i}}
\newcommand{\mynegspace}{\hspace{-0.12em}}
\newcommand{\bigsnorm}[1]{\Big\rvert\mynegspace\Big\rvert\mynegspace\Big\rvert\mynegspace {#1} \Big\rvert\mynegspace \Big\rvert\mynegspace \Big\rvert}
\newcommand{\snorm}[1]{\rvert\mynegspace\rvert\mynegspace\rvert\mynegspace {#1} \rvert\mynegspace \rvert\mynegspace \rvert}
\newcommand{\opnorm}[1]{{\|}{#1}{\|}_{\infty}}
\newcommand{\Bignorm}[1]{\Bigg{\|}{#1}\Bigg{\|}}
\newcommand{\innprod}[2]{\langle #1, #2 \rangle}
\newcommand{\biginnprod}[2]{\Big\langle #1, #2 \Big\rangle}
\newcommand{\fdft}[3]{ D_{#3}^{{u_{j_{{#1}}}},{\omega_{k_{#2}}}}}
\newcommand{\fdftc}[3]{ D_{#3}^{{u_{j_{{#1}}}},{-\omega_{k_{#2}}}}}
\newcommand{\fdftl}[3]{ D_{#3}^{{u_{j_{{#1}}}},{\omega_{k_{#2}-1}}}}
\newcommand{\fdftcl}[3]{ D_{#3}^{{u_{j_{{#1}}}},{-\omega_{k_{#2}-1}}}}
\newcommand{\rnum}{\mathbb{R}}
\newcommand{\znum}{\mathbb{Z}}
\newcommand{\nnum}{\mathbb{N}}
\newcommand{\XT}[1]{X_{#1, T}}
\newcommand{\Xu}[1]{X^{(u)}_{#1}}
\newcommand{\cumbp}[2]{\kappa_{{#1};t_1,\ldots,t_{{#2}-1}}}
\newcommand{\F}{\mathcal{F}}
\newcommand{\Eps}{\mathcal{E}}
\newcommand{\A}{\mathcal{A}}
\newcommand{\cnum}{\mathbb{C}}
\newcommand{\Vn}{\hat{\mathcal{U}}}
\newcommand{\mH}{\mathcal{H}}
\newcommand\tageq{\addtocounter{equation}{1}\tag{\theequation}}
\DeclareMathOperator{\Tr}{Tr}
\DeclareMathOperator*{\argmin}{arg\,min}
\begin{document}

\title{A similarity measure for second order properties of non-stationary functional time series with applications to clustering and testing}

\author{Anne van Delft}
\author{Holger Dette}
\affil{Ruhr-Universit\"at Bochum,\\ Fakult\"at f\"ur Mathematik, \\44780 Bochum, Germany}

 \maketitle
\begin{abstract}
Due to the surge of data storage techniques, the need for the development of appropriate techniques to identify patterns and to extract knowledge from the resulting enormous data sets, which can be viewed as collections of dependent functional data, is of increasing interest in many scientific areas. 
We develop a similarity measure for spectral density operators of a collection of functional time series, which is based on the aggregation of Hilbert-Schmidt differences of the individual time-varying spectral density operators. Under fairly general conditions, the asymptotic properties of the corresponding estimator are derived and asymptotic normality is established. The introduced statistic lends itself naturally to quantify  (dis)-similarity between functional time series, which we subsequently exploit in order to build a spectral clustering algorithm. Our algorithm is the first of its kind in the analysis of non-stationary (functional) time series and enables to discover particular patterns by grouping together `similar' series into clusters, thereby reducing the complexity of the analysis considerably. The algorithm is simple to implement and computationally feasible. 
As a further application we provide a simple  test for the hypothesis that the second order properties of two non-stationary functional time series coincide.

\end{abstract}
\noindent
Keywords: time series, functional data, clustering, spectral analysis, local stationarity \\
AMS Subject classification:  Primary: 62M15; 62H15, Secondary: 62M10, 62M15

 
\section{Introduction}  \label{sec1}
\def\theequation{1.\arabic{equation}}
\setcounter{equation}{0}

The surge in data storage techniques over the past two decades has led to more and more data sets that are almost continuously recorded from their domain of definition. The development of tools to model these type of data is the main focus of functional data analysis. In functional data analysis, the variables of interest are perceived as random smooth functions that vary on a continuum $D$,  i.e., $X(\tau), \tau \in D$. While the intrinsically infinite variation of such random functions can be considered a rich source of information, extracting relevant information and to identify patterns becomes more and more a challenge. Especially when the data is collected sequentially over time and the curves exhibit serial dependence, i.e., when the data set consists of a collection of $d$ \textit{functional time series}, $\{X_{i,t}(\tau): \tau \in D\}_{t \in \mathbb{Z}, i \in 1,\ldots, d}$. This type of data arises naturally in a wide range of scientific disciplines such as  astronomy, biology, finance, meteorology, medicine or yet engineering. 
In addition, in most real-world applications, the second order characteristics of time series change gradually over time. In meteorology, the distribution of the daily records of temperature, precipitation and cloud cover for a region, viewed as three related functional surfaces, may change over time due to global climate changes. Other relevant examples appear in the study of cognitive functions such as high-resolution recordings from local field potentials, EEG and MEG or from the financial industry where implied volatility of an option as a function of moneyness changes over time.  The development of appropriate exploratory techniques that allow to discover patterns or anomalies is therefore of foremost interest for this type of data. 

The most widely used technique for this preliminary step of data exploration is known as \textit{cluster analysis}. Clustering is concerned with partitioning a data-set into a set of disjoint homogeneous groups (clusters) of realizations. Unlike supervised learning, clustering does not rely on prior knowledge of the groups or on building classifiers based upon a training set. It is therefore especially meaningful when little is known about the nature of the process and the data set is large. 

A large body of literature on clustering (and related learning techniques) of Euclidean-valued time series has been published. Depending on the goal of the application, clustering algorithms can differ in a variety of aspects such as the representation of the data, how similarities are measured and the way clusters are constructed.
 The first two aspects are known to be crucial in terms of efficiency and accuracy of the solution and this is where most research focuses on\citep[see section 5 of][for a full taxonomy of the different aspects of clustering time series]{ASW2015}. 
For instance, in parametric approaches, clusters are usually built based upon similarity of their estimated parameters \citep[see e.g.,][]{KalGP01,CoPic08} some of which take a Bayesian approach \citep[][]{BaRom07,FrKa08,JuSt10}. Nonparametric methods are often based upon comparing similarity of the estimated power spectra, which is a research topic on its own \citep[see e.g.,][]{CoDig86,Eichler08,DetPap09,DetHil12,JP2015}. This approach is taken in, i.a., \citet{KaShTa98,SaPF08,FokPro11,HolRav18}. A wavelet-based approach can be found in \citet{VlLK03}. 

Clustering and classification methods have also been extended to non-stationary time series. For example, \citet{ST04} use the framework of locally stationary time series \citep{dahlhaus1997} for clustering while \citet{CP06} use it to develop a shape-based approach discriminant analysis. Another branch of literature focuses on piecewise stationary processes using Smooth Localized Complex EXponentials (SLEX) transforms, which were introduced by \citet{OmRvSM01}, or variations thereof \citep[see e.g.,][]{HuOmSt04,Harvill2017} for clustering approaches and \citet{BoOmvSS10} for classification of multivariate series. 

In contrast to the Euclidean case, the literature on cluster analysis for functional data is not that rich. Some methods have been developed for clustering of i.i.d. functional data \citep[and references therein]{JP14}. A popular  technique is to first reduce dimension by projecting the curves onto a basis of finite dimension and then apply a standard classical clustering algorithm such as $k$-means \citep[see][]{PengMul08,Abretal03}. Alternatively, nonparametric methods have been proposed that use specific distances for functional data \citep[][]{FerVi06,Ievaetal13} and parametric (Bayesian) approaches assuming a particular probability distribution for multivariate functional data \citep[see ][among others]{JacPr14,HHS06}.

Despite of the vast amount of literature available on various  data structures, existing methods are inappropriate to cluster possibly non-stationary functional time series. A clustering technique for functional time series requires to take into account its infinite variation and hence a clustering approach must be able to capture the complex within-curves dynamics as well as the between-curve dynamics. At the same time, it needs to be efficient to apply because of the high dimensionality of the data. 
In this article, we address this problem from several perspectives. We develop a new measure to compare the second order properties of non-stationary functional time series. This measure is based on the aggregation of Hilbert-Schmidt differences of the individual time-varying spectral density operators. Under fairly general conditions, the asymptotic properties of the corresponding estimator are derived and asymptotic normality is established. 
We then use this methodology for two purposes. Firstly, we consider this measure and its estimate to develop a new spectral clustering algorithm for functional time series, which are allowed to be non-stationary and nonlinear. Our algorithm is novel in the sense that not only is it the first of its kind for exploratory analysis of functional time series but moreover because, to our knowledge, spectral clustering has also not yet been considered for Euclidean time series. The underlying principle of spectral clustering is to reformulate the problem into a graph partitioning problem (see \autoref{fig:example}) 
Geometrical properties of graphs can be conveniently described by the spectral properties of the corresponding graph Laplacian \citep[see e.g.,][]{Chung1997,CR2011,VLux07}. 
Using these properties, we can detect clusters in non-convex regions, which classical clustering techniques may not be able to find. Furthermore, it can be solved efficiently via classical linear algebra operations. 
 We will show that our introduced measure of similarity provides a meaningful basis for the adjacency matrix underlying the graph Laplacian.  Secondly, we use this measure to develop a particularly simple level $\alpha$-test for the hypothesis of equality of two time-varying spectral density operators, which uses the quantiles of the standard normal distribution.

The structure of this paper is as follows. We first introduce necessary notation and background on the type of processes considered in this paper. We then define a measure of similarity 
for a pair of functional time series
and derive a consistent estimator to construct an empirical adjacency matrix. In \autoref{sec3}, the spectral clustering algorithm is discussed in  detail  and it is shown that the algorithm based upon an empirical graph Laplacian, a transformation of the estimated adjacency matrix, is consistent.  In \autoref{sec5}, we discuss the application of hypothesis testing, whereas in \autoref{sec4}  we study the properties of our algorithm in finite samples. In \autoref{sec6}, the clustering method is illustrated by means of an application to high-resolution meteorological data. All proofs and technical assumptions for the main statements are relegated to the Appendix.

\section{A measure of similarity}  \label{sec2}
\def\theequation{2.\arabic{equation}}
\setcounter{equation}{0}

In this section we introduce a measure of similarity for functional time series, which is appropriate to use as a basis for a similarity matrix to cluster functional time series.

\begin{exmp}
We have generated  $90$ functional time series uniformly from  models I, II, III (described in detail in Section \ref{sec5}), which should be clustered according to their second order properties.  To visualize the difficulties of this task we exemplary depict $9$  series in  Figure \ref{fig:FTSplots}, where we took three from each distribution. As can be seen, the high dimensionality of the data makes a visual assessment of their second order properties almost impossible for 6 of them, while 3 of them appear more obvious to distinguish from the other 6.
  \begin{figure}[h!]
    \centering
    \includegraphics[width=1.0\textwidth]{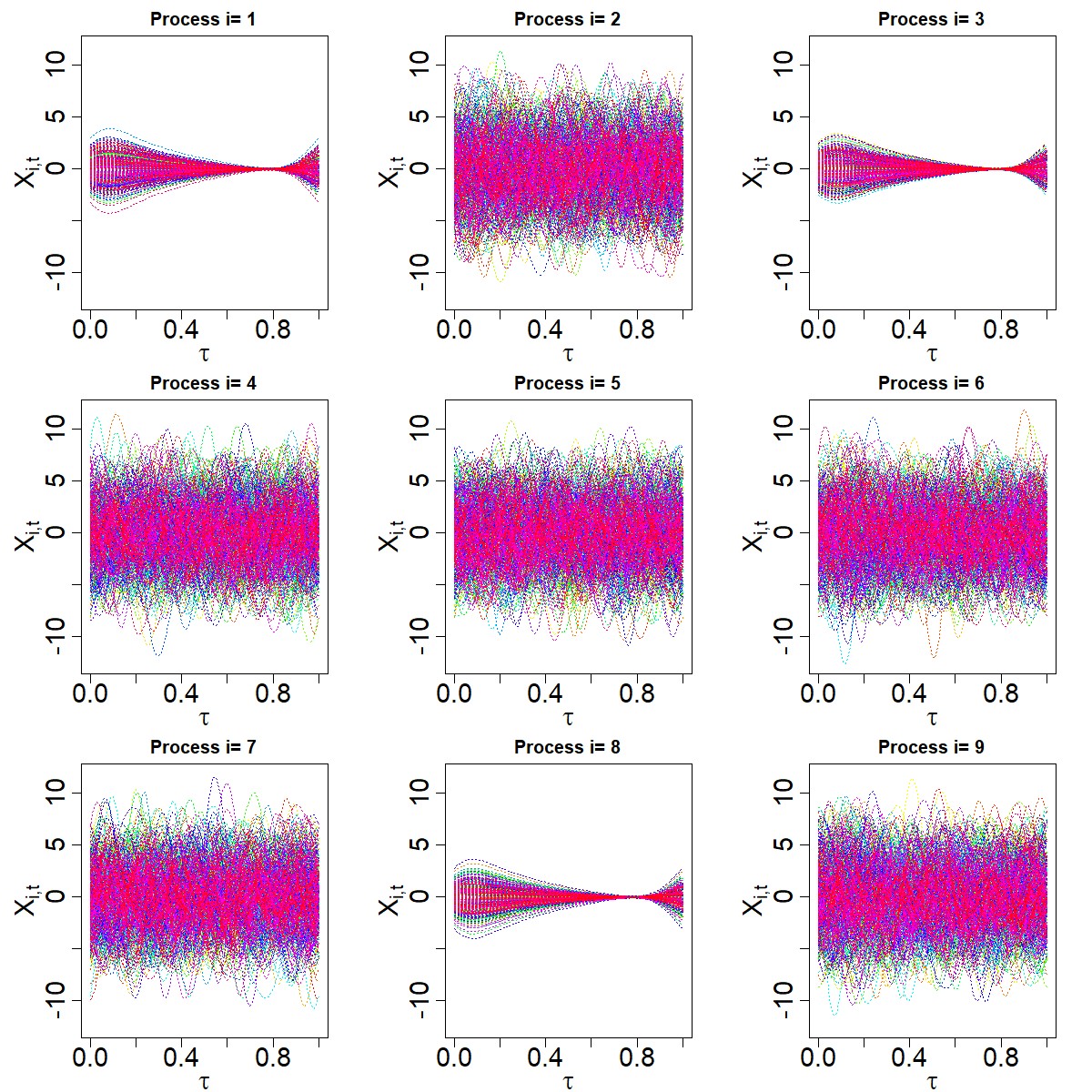}
    \caption{\it Functional time series from $3$ different distributions, $T=512$.}%
    \label{fig:FTSplots}
\end{figure} 
\end{exmp}
\setcounter{figure}{2}

\subsection{Notation}

First, let us introduce some necessary notation. For a separable Hilbert space $\mH$, we denote the inner product as $\innprod{\cdot}{\cdot}: \mH \times \mH \to \mathbb{C}$ and its induced norm by $\|\cdot\|$. The Banach space of bounded linear operators $A: \mH \to \mH$ with operator norm  $\snorm{A}_\infty=\sup_{\|x\|\le1}\|Ax\|$ shall be denoted by $\mathcal L(\mathcal H)$, the adjoint of $A \in \mathcal L({\mH})$ by $A^\dagger$ 
. $A\in\mathcal L(\mathcal H)$ is called self-adjoint if $A=A^\dagger$ and non-negative definite if $\langle Ax,x\rangle\ge0$ for each $x\in\mathcal H$. If well-defined we denote the trace of $A \in \mathcal{L}(\mH)$ by $\Tr(A)$. 
A compact operator $A\in\mathcal L(\mathcal H)$ belongs to the Schatten class of order $1\le p< \infty$, denoted by $A\in S_p(\mathcal H)$, if $\vert\kern-0.25ex\vert\kern-0.25ex\vert A\vert\kern-0.25ex\vert\kern-0.25ex\vert_p^p=\sum_{j\ge1}s_j^p(A)<\infty$, where $\{s_j(A):j\ge1\}$ are the singular values of $A$. Operators that belong to the Banach spaces $(S_1(\mathcal H),\snorm{\cdot}_1)$ or $(S_2(\mathcal H),\snorm{\cdot}_2)$ will be referred to as trace-class operators and Hilbert-schmidt operators, respectively. We remark in particular that $(S_2(\mathcal H),\snorm{\cdot}_2)$ is a Hilbert space with the inner product given by $\langle A,B\rangle_{\mathrm HS}=\Tr(A B^\dagger)=\sum_{j\ge1}\langle Ae_j,Be_j\rangle$ for each $A,B\in S_2(\mathcal H)$ and $\{e_j\}_{j \in \mathbb{N}}$ an orthonormal basis of $\mathcal{H}$. For $f,g \in \mathcal{H}$, we define the tensor product $f\otimes g: \mathcal{H}  \otimes \mathcal{H}\to \mathcal{H} $ as the bounded linear operator defined by
\[
(f \otimes g) v = \innprod{v}{g}f \quad  \forall v \in \mathcal{H}.
\]  
Without loss of generality, we consider the Hilbert space $\mathcal{H}=\L^2_{\cnum}([0,1])$ of equivalence classes of square integrable measurable functions $f:[0,1] \to\mathbb C$ with inner product $\int_0^1 f(\tau) \overline{g(\tau)} d\tau$, where the complex conjugate of $x \in \cnum$ is denoted as usual by $\overline{x}$. Since the mapping  $\mathcal{T}:\mathcal{H} \otimes\mathcal{H} \to S_2(\mathcal{H})$ defined by the linear extension of $ \mathcal{T}(f \otimes g) = f \otimes \overline{g} $ is an isometric isomorphism, it defines a Hilbert-Schmidt operator with the kernel in $\mathcal{H}$ given by $(f\otimes g)(\tau,\sigma)=f(\tau)\overline{g(\sigma)}$ for each $\tau,\sigma\in[0,1]$ in an $L^2$-sense. We refer to Appendix \autoref{ap:background} for further details and background. 

\subsection{A measure of similarity for functional time series}

The main object of this paper are $d$ zero-mean stochastic processes $X_i =\{X_{i,t,T}\}_{t=1,\dots,T; T \in \mathbb{N}}$ $i =1, \ldots, d$ that take values in the Hilbert space $L^2_{\rnum}([0,1])$. The second order dynamics are assumed to be  finite but allowed to change over time. Processes of this type fit the framework of {\em locally stationary functional time series} as defined in \cite{vde16} --which extends the concept of local stationarity \citep{dahlhaus1997} to the function space-- and contain weakly stationary functional processes as a subclass. Asymptotic properties can be described by so-called infill-asymptotics, such that, as $T \to \infty$, we obtain more and more observations at a local level. Formally, a process $\{X_{t,T}\}_{t=1,\dots,T; T \in \mathbb{N}}$ is called {\em functional locally stationary} if, for all rescaled times $u\in[0,1]$, there exists an $L^2_{\mathbb{R}}([0,1])$-valued strictly stationary process $\{X^{(u)}_t:t\in\mathbb Z\}$ such that
\begin{equation}\label{Local stationarity}
\Bigl\|\XT{t}-\Xu{t}\Bigr\|_{2}
\leq\big(\big|\tfrac{t}{T}-u\big|+\tfrac{1}{T}\big) \,P_{t,T}^{(u)}\qquad a.s.
\end{equation}
for all $1\leq t\leq T$, where $\{P_{t,T}^{(u)}\}_{t=1,\dots,T; T \in \mathbb{N}}$ is a positive real-valued process such that for some $\rho>0$ and $C<\infty$ the process satisfies $\E\big(\big|P_{t,T}^{(u)}\big|^\rho\big)<C$ for all $t$ and $T$ and uniformly in $u\in[0,1]$.  We refer to \cite{vde16} for further details.

What we will exploit throughout this paper is that the full second order dynamics of the the triangular array $\{X_{t,T}\}_{t=1,\dots,T; T \in \mathbb{N}}$ are completely and uniquely characterized by the {\em time-varying spectral density operator}
\begin{align}\label{eq:spdens}
\F_{u,\omega} = \frac{1}{2 \pi}\sum_{h\in \mathbb{Z}}
\E\big(X^{(u)}_{t+h} \otimes X^{(u)}_{t}\big)
e^{-\im  \omega h}
\end{align}
for each $u\in[0,1]$ and $\{X_{t}^{(u)}:t\in\mathbb Z\}$. In particular, for weakly stationary functional time series we can drop the dependence on local time and thus $\F_{u,\omega} \equiv \F_{\omega}$. This uniquely characterizing object of a functional time series lends itself naturally as a basis for a measure of similarity. \\

More specifically, let, ${\F}^{(i_1)}_{u,\omega}$ and ${\F}^{(i_2)}_{u,\omega}$ denote the time-varying spectral density operator of processes $\{X_{i_1,t,T}\}_{t=1,\dots,T; T \in \mathbb{N}}$ and $\{X_{i_2,t,T}\}_{t=1,\dots,T; T \in \mathbb{N}}$, respectively. As a measure of pairwise similarity between two functional time series, we consider
\begin{equation}\label{eq:dist}
\A_{i_1,i_2} = \frac{ \int_0^1 \int_{-\pi}^{\pi} \snorm{ {\F}^{(i_1)}_{u,\omega} -{\F}^{(i_2)}_{u,\omega}}_2^2 du d\omega}{\int_0^1 \int_{-\pi}^{\pi}\snorm{{\F}^{(i_1)}_{u,\omega}}^2_2+\snorm{{\F}^{(i_2)}_{u,\omega}}_2^2  du d\omega}.
\end{equation}
Clearly, if this distance is zero, then processes $\{X_{i_1,t,T}\}_{t=1,\dots,T; T \in \mathbb{N}}$ and $\{X_{i_2,t,T}\}_{t=1,\dots,T; T \in \mathbb{N}}$ must have the same second order properties and hence must belong to the same cluster. Note that \eqref{eq:dist} takes values in the interval $[0,1)$. 
The local scaling via the denominator is an essential aspect for its usage in a spectral clustering procedure. Differences in scales can lead spectral clustering to fail. While most similarity graphs are based upon a global scaling parameter of which the optimal value is difficult to determine and can highly affect the clustering performance [see \cite{VLux07}], we find that making the measure scale-invariant by accounting for local scales avoids this issue. We discuss this further in \autoref{sec4}. 

 Since ${\F}^{(i_1)}_{u,\omega}$ and ${\F}^{(i_2)}_{u,\omega}$ are unknown, we will have to estimate \eqref{eq:dist}. 
Similar to \citet{vDCD18}, we shall do this using integrated periodogram tensors. That is, we split the sample into $M$ blocks with $N$ elements inside each of these blocks so that $T=MN=M(T)N(T)$ for each  $T\in\mathbb N$, where $M\in\mathbb N$ and $N$ is an even number. $M$ and $N$ will correspond to the number of terms used in a Riemann sum approximating the integrals in \eqref{eq:dist} with respect to $du$ and $d\omega$ and therefore they have to be reasonable large. The functional discrete Fourier transform (fDFT) at time point $u$, is a random function with values in $L^2([0,1],\mathbb C)$ defined by
\begin{equation}\label{eq:fDFT}
	D_i^{u,\omega}
    :=\frac1{\sqrt{2\pi N}}\sum_{s=0}^{N-1}X_{i,\lfloor uT \rfloor - N/2 +s+1,T}e^{-\im  \omega s}.
\end{equation}
 The local periodogram tensor for the $i$-th time series can be given as
\begin{equation}\label{eq:per}
I_i^{u,\omega} := D_i^{u,\omega}\otimes {D_i^{u,\omega}},
\end{equation}
for $i=1, \dots,d$.  We base our estimator upon a linear combination of the following Hilbert-Schmidt inner products
\begin{equation}
\label{Fij}
F_{i_1i_2} = \frac{1}{T}\sum_{j=1}^M \sum_{k=1}^{\lfloor N/2 \rfloor} \langle I_{i_1}^{u_j,\omega_k},I_{i_2}^{u_j,\omega_{k-1}}\rangle_{HS}.
\end{equation}
In particular, a suitable and (symmetric) estimator for the distance \eqref{eq:dist} is given by
\begin{equation}
\label{eq:est_dist}
\hat{\A}_{i_1,i_2} := \frac{F_{i_1i_1} + F_{i_2i_2} - F_{i_1i_2} - F_{i_2i_1}}{(F_{i_1i_1} + F_{i_2i_2})},
\end{equation}
To ease notation, we provide empirical quantities with $\hat{\cdot}$. The dependence of these quantities on $T$ is therefore implicit. We find under suitable regularity conditions, which are postponed to section \autoref{sec:dist},
\begin{thm}[consistency] \label{thm:con}
Under \autoref{cumglsp} with $m=8$ and \autoref{ratesNM}, we find that 
\[
\hat{\A}_{i_1,i_2} - \A_{i_1,i_2} = O_p({T}^{-1/2}).
\]
\end{thm}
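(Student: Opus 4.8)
The plan is to write $\hat\A_{i_1,i_2}$ and $\A_{i_1,i_2}$ as the same smooth function of four bilinear functionals and then to control each functional at rate $T^{-1/2}$. Set $g(a,b,c,d)=\frac{a+b-c-d}{a+b}$, so that $\hat\A_{i_1,i_2}=g(F_{i_1i_1},F_{i_2i_2},F_{i_1i_2},F_{i_2i_1})$, and define the population targets $\mu_{i_1i_2}:=\frac1{4\pi}\int_0^1\int_{-\pi}^{\pi}\langle\F^{(i_1)}_{u,\omega},\F^{(i_2)}_{u,\omega}\rangle_{\mathrm{HS}}\,d\omega\,du$ (the precise constant is immaterial, as it cancels in $g$). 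Because the operators $\F^{(i)}_{u,\omega}$ are self-adjoint and non-negative, each $\mu_{i_1i_2}$ is real, and $\langle A,B\rangle_{\mathrm{HS}}$ expands so that $\mu_{i_1i_1}+\mu_{i_2i_2}-\mu_{i_1i_2}-\mu_{i_2i_1}=\frac1{4\pi}\int_0^1\int_{-\pi}^{\pi}\snorm{\F^{(i_1)}_{u,\omega}-\F^{(i_2)}_{u,\omega}}_2^2\,d\omega\,du$; hence $g(\mu_{i_1i_1},\mu_{i_2i_2},\mu_{i_1i_2},\mu_{i_2i_1})=\A_{i_1,i_2}$ and the denominator $\mu_{i_1i_1}+\mu_{i_2i_2}>0$. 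Since $g$ is continuously differentiable near this point, a first-order Taylor (delta-method) argument gives $\hat\A_{i_1,i_2}-\A_{i_1,i_2}=O_p\big(\max_{a,b}|F_{ab}-\mu_{ab}|\big)$, so it suffices to prove $F_{i_1i_2}-\mu_{i_1i_2}=O_p(T^{-1/2})$ for every pair.

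For each functional I would split $F_{i_1i_2}-\mu_{i_1i_2}=(F_{i_1i_2}-\E F_{i_1i_2})+(\E F_{i_1i_2}-\mu_{i_1i_2})$ and treat the bias first. The tensor identity $\langle f\otimes f,h\otimes h\rangle_{\mathrm{HS}}=|\langle f,h\rangle|^2$ rewrites each summand of $F_{i_1i_2}$ as $|\langle D_{i_1}^{u_j,\omega_k},D_{i_2}^{u_j,\omega_{k-1}}\rangle|^2$, a fourth-order quantity in the data. A cumulant expansion of its expectation has a leading part that factorises---precisely because fDFT ordinates at the distinct adjacent frequencies $\omega_k$ and $\omega_{k-1}$ are asymptotically uncorrelated---and reproduces $\langle\F^{(i_1)}_{u_j,\omega_k},\F^{(i_2)}_{u_j,\omega_{k-1}}\rangle_{\mathrm{HS}}$, while the residual fourth-order cumulant term is of smaller order by the summability in \autoref{cumglsp}. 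The remaining bias comes from (i) replacing the triangular array by its stationary approximant, controlled by \eqref{Local stationarity}; (ii) the Riemann-sum discretisation of the two integrals, controlled by smoothness of $u\mapsto\F_{u,\omega}$ and $\omega\mapsto\F_{u,\omega}$; and (iii) the frequency offset $\omega_{k-1}$ versus $\omega_k$, costing $O(N^{-1})$. \autoref{ratesNM} is exactly what forces all these contributions to be $O(T^{-1/2})$.

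The crux is the variance bound $\V(F_{i_1i_2})=O(T^{-1})$, which yields $F_{i_1i_2}-\E F_{i_1i_2}=O_p(T^{-1/2})$ by Chebyshev's inequality. Expanding $\V(F_{i_1i_2})=T^{-2}\sum_{j,k}\sum_{j',k'}\cv\big(\langle I_{i_1}^{u_j,\omega_k},I_{i_2}^{u_j,\omega_{k-1}}\rangle,\langle I_{i_1}^{u_{j'},\omega_{k'}},I_{i_2}^{u_{j'},\omega_{k'-1}}\rangle\big)$ shows that each covariance involves a product of four periodogram tensors, hence moments of order eight in the observations; this is precisely why \autoref{cumglsp} is imposed with $m=8$. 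The plan is to apply the product theorem for cumulants to each covariance and to organise the resulting $L^2$ kernel integrals according to which of the eight fDFT arguments share a common time block and an approximately common frequency. The diagonal pairings (coincident block and frequency indices) produce the leading $O(T^{-1})$ term, whereas every genuinely off-diagonal grouping is negligible because of cumulant decay across distinct time blocks together with the asymptotic independence of fDFTs at separated frequencies within a block.

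Combining the $O(T^{-1/2})$ bias with the $O(T^{-1/2})$ standard deviation gives $F_{i_1i_2}-\mu_{i_1i_2}=O_p(T^{-1/2})$ for each pair, and the delta-method reduction of the first step then delivers the claim. I expect the genuine obstacle to be the variance step: tracking the combinatorics of the eighth-order cumulant expansion, and in particular verifying that every non-diagonal pairing really is of smaller order than $T^{-1}$ uniformly in the block and frequency indices, is where the regularity content of \autoref{cumglsp} and the bandwidth balance of \autoref{ratesNM} are fully used; by contrast the delta-method reduction and the bias analysis are comparatively routine.
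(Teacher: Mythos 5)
Your proposal follows essentially the same route as the paper: the paper likewise computes $\E F_{i_1 i_2}$ by a fourth-order cumulant expansion of $\langle I_{i_1}^{u_j,\omega_k},I_{i_2}^{u_j,\omega_{k-1}}\rangle_{HS}$ (obtaining a bias of order $O(M^{-2})+O(N^{-1})$, which \autoref{ratesNM} renders negligible), bounds $T\cv(F_{i_1i_2},F_{i_3i_4})=O(1)$ via the eighth-order indecomposable-partition argument of \autoref{thm:highcumF1234} (hence the requirement $m=8$ in \autoref{cumglsp}), and then passes the rate through the ratio by a continuous-mapping/delta-method step. Your explicit Taylor expansion of $g$ is, if anything, slightly more careful than the paper's appeal to the continuous mapping theorem, but the substance is identical.
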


We remark that, under suitable moment conditions, it is moreover asymptotically multivariate normal and therefore lends itself for other statistical applications such as a test for equality of time-varying spectral density operators. We shall briefly discuss this application together with more details on the statistic $\hat{\A}_{i_1,i_2} $ in \autoref{sec5}. In the next section, we  define a similarity graph based upon the measure \eqref{eq:est_dist} and introduce a spectral clustering algorithm to cluster the functional time series.



\begin{exmp}[continued]
 For $6$ of the $9$ functional time series depicted in  Figure \ref{fig:FTSplots} it is difficult to visually distinguish the second order properties. This becomes even more difficult for all $90$ series. However, we can use the measure $\hat \A $ to identify similarities. A heat map of the corresponding estimates for all 90 time series are displayed in Figure \ref{fig:ahat}(a). As can be seen the empirical measure gives the pairs of time series different weights ranging from $0$ to $1$,  but it is difficult to identify any structure. For the sake comparison, we ordered the times series  assuming knowledge of the clusters in the right part of Figure \ref{fig:ahat}(b). We observe that the similarity measure $\hat \A $ make the clusters visible.
\begin{figure}[h!]
    \centering
    {\renewcommand{\arraystretch}{0}
    \begin{tabu}{c@{}c}
    \begin{subfigure}[t]{.5\columnwidth}
        \centering
        \includegraphics[width=0.75\textwidth]{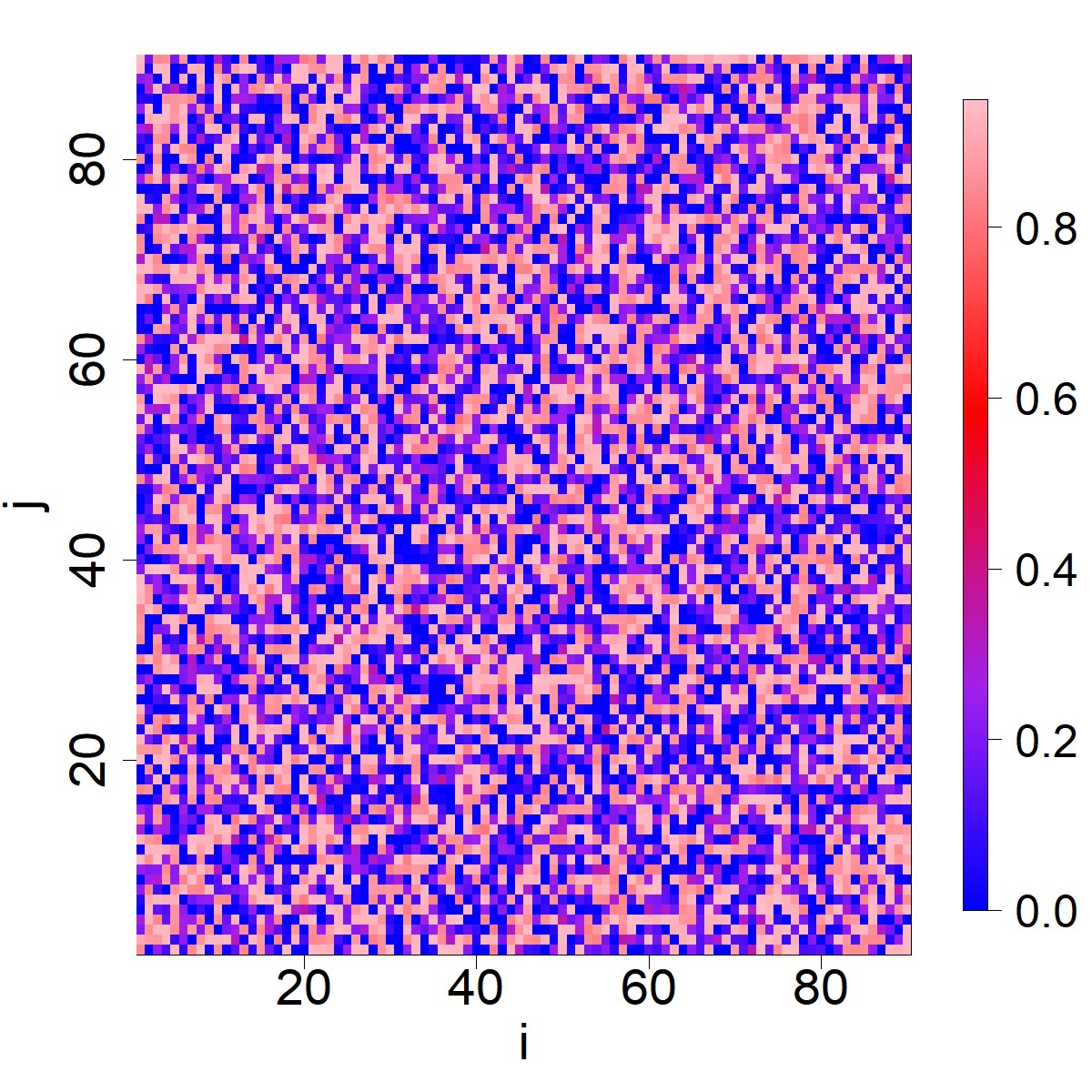}%
          \caption{\it {without information on the clusters}} 
    \end{subfigure}\hfill
    \begin{subfigure}[t]{.5\columnwidth}  
        \centering
        \includegraphics[width=0.75\textwidth]{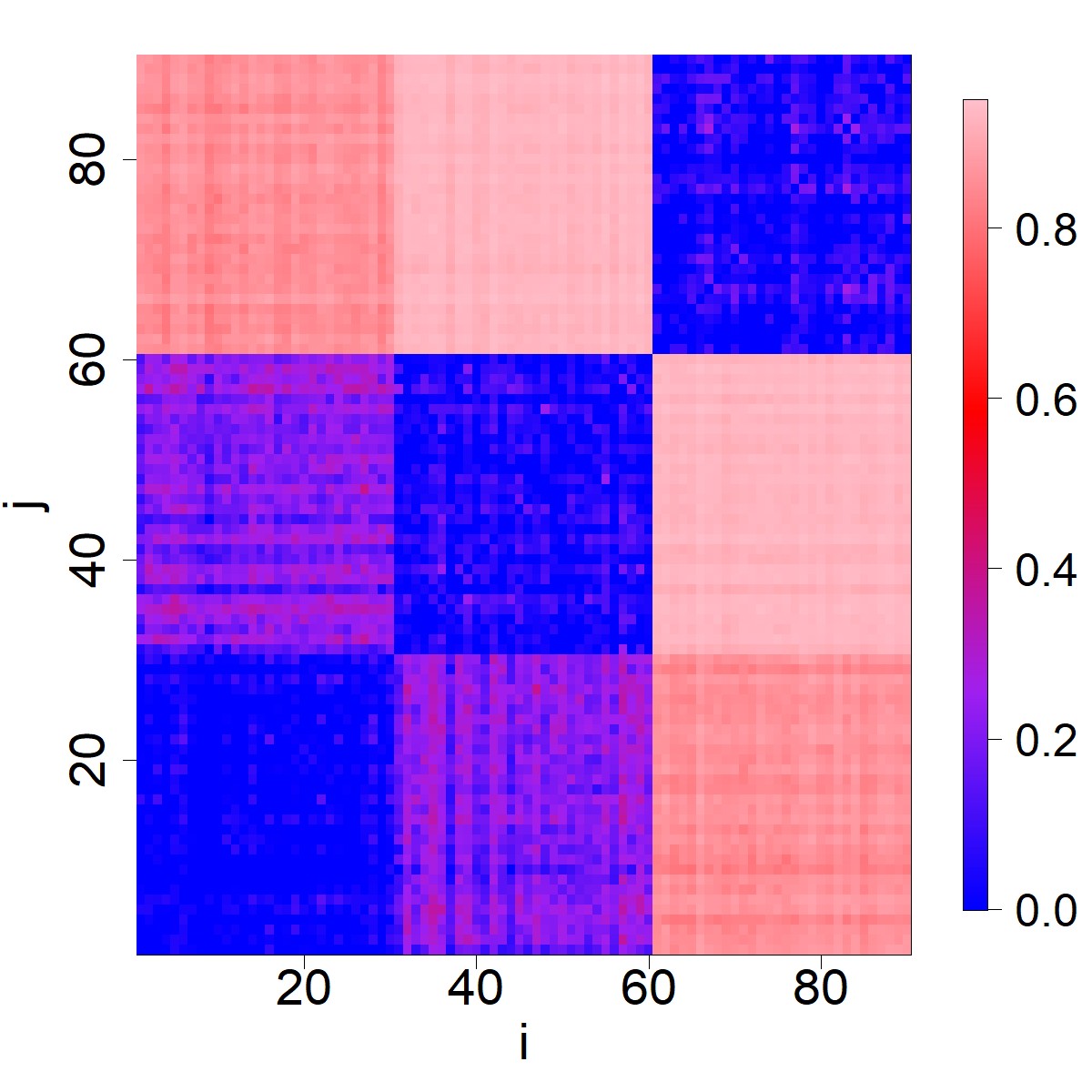}%
        \caption{\it {with information on the clusters}}
    \end{subfigure}
    \end{tabu}}
    \caption{\it  \small Heat map of the value of the estimates $\hat{\A}_{i,j}$ of ${\A}_{i,j}$  plotted for all pairs $i, j =1,\ldots,90$ in random order (left); and ordererd by cluster (right).}. 
    \label{fig:ahat}
\end{figure}
 \end{exmp}

\section{Spectral clustering of functional time series}  \label{sec3}
\def\theequation{3.\arabic{equation}}
\setcounter{equation}{0}

In this section, we develop a spectral clustering  algorithm, which 
consists of a several steps.  We start by translating the problem of clustering $d$ functional time series into $k$ clusters into a graph partitioning problem using the previously defined similarity measure. Secondly, we construct a spectral embedding using an empirical graph Laplacian, which is shown to have spectral properties that converge to those of the population graph Laplacian. We then use the embedded points to cluster the data using $k$-means and show that we can bound the number of points that are clustered converges to zero as $T \to \infty$. 

\subsection{Construction of the graph}
We construct an undirected similarity graph $G=(V,E)$, where $V$ denotes the set of vertices and $E$ denotes the set of edges. To each family of random curves $\{X_i\}, i\in [d]$ we relate a node $v_i \in V$ and describe the similarity between node $v_i$ and $v_j$ via non-negative weights on the edges. These weighs are given by the {\em empirical adjacency matrix} which is defined as the following transformation of the matrix $\hat{\A}=
 \{\hat{\A}_{i_1,i_2}: i_1, i_2 =1,\ldots,d\}$ as defined in \eqref{eq:est_dist}
\begin{align} \label{eq:W_T}
\hat{W}=e^{-\hat{\A}} \in  \mathbb{R}^{d \times d}.
\end{align}
 \begin{figure}[h!]%
 \vspace{-50pt}
    \centering
\hspace*{-10pt}
\begin{tikzpicture}
\coordinate (T) at (-6.5,-0.5);
\node (myfirstpic) at (-6.5,3) {\includegraphics[width=7.5cm, height=6.25cm]{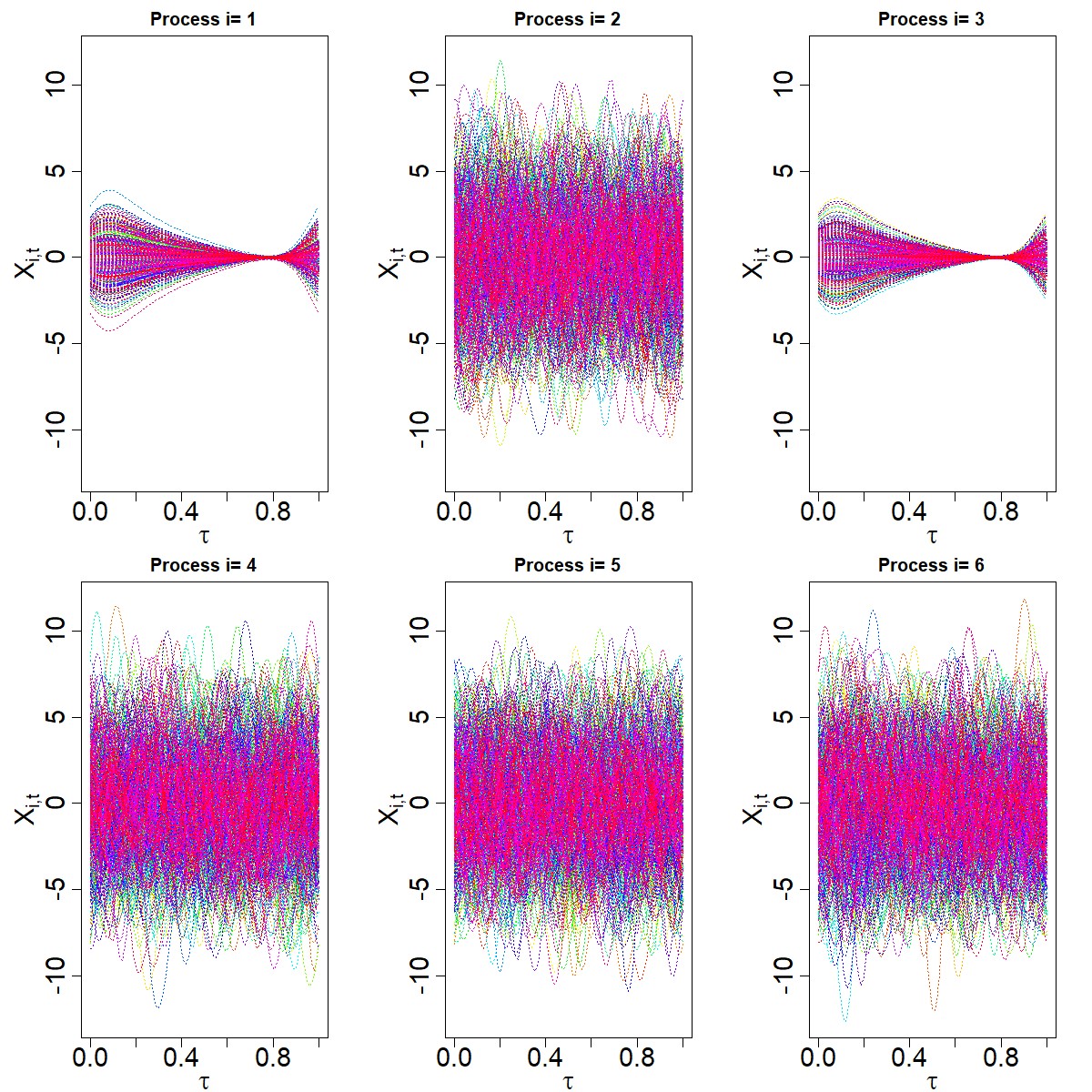}};
	\vertex[minimum size=0.65cm] (v_1) at (-1.0, 3) {$v_1$};
	\vertex[minimum size=0.65cm] (v_2) at (0.5, 6.0){$v_2$};
	\vertex[minimum size=0.65cm] (v_3) at (3, 0){$v_3$};
	\vertex[minimum size=0.65cm] (v_4) at (-0.5, 1) {$v_4$};
	\vertex[minimum size=0.65cm] (v_5) at (5.5, 2.5) {$v_5$};
	\vertex[minimum size=0.65cm] (v_6) at (3.0,5.5){$v_6$};
	\path
		(v_1) edge[out= 90, in=-180,looseness=1]  node[pos=0.5,right]{\small{0.40}}(v_2)
		(v_1) edge[bend left=30] node[pos=.5,right]{\small{1.0}}(v_3)
		(v_1) edge node[pos=.4,right]{\small{0.41}}(v_4)
		(v_1) edge  [out= 240, in=286,looseness=1.75] node[pos=0.6,below]{\small{0.42}} (v_5)
	
		(v_1) edge[bend left=40,out=30, in=180,looseness=1.5] node[pos=0.3,right]{\small{0.40}} (v_6)
	
		(v_2) edge[bend left=40] node[pos=0.55,below left]{\small{0.39}}(v_3)
		(v_2) edge [out= 160, in=180,looseness=1.25] node[pos=0.5,right]{\small{0.87}} (v_4)
		(v_2) edge  [out=50, in=100,looseness=1.5] node[pos=0.5,below left]{\small{0.85}} (v_5)
		(v_2) edge node[pos=0.5,above]{\small{1.0}} (v_6)	
		
		(v_3) edge[bend left=30] node[pos=0.5,below]{\small{0.42}} (v_4)
		(v_3)  edge node[pos=0.5,below right]{\small{0.42}} (v_5)
			(v_3) edge [out=60, in=-60,looseness=1.25] node[pos=0.5,right]{\small{0.40}} (v_6)
			(v_4) edge[bend right=20] node[pos=0.35,below left]{\small{1.0}} (v_5)
			(v_4) edge[bend right=10] node[pos=0.6,above left]{\small{0.94}} (v_6)
			(v_5)  edge node[pos=0.5,right]{\small{0.85}} (v_6)
		
	;
\coordinate (T') at (-0.5,-0.5);
\path[->] (T) edge [bend right] node[above] {} (T');
 \node at (1.5,-1.5) {$G(V,E)$};
\end{tikzpicture} %
\vspace{-30pt}
    \caption{\it Illustration of the map from the space of functional time series (left) into a graph partitioning problem (right).}%
    \label{fig:example}%
\end{figure}
\setcounter{figure}{2}
This is illustrated in \autoref{fig:example} for the first 6 processes depicted in \autoref{fig:FTSplots}. The clustering problem then becomes equivalent to partitioning the similarity graph into connected components such that points 
with pairwise high weight belong to the same component while points with low weight are put into different components.

\subsection{The spectral embedding} 
The main ingredient to our algorithm is the empirical graph Laplacian
\begin{align}\label{eq:L_T}
\hat{L}=I -\hat{D} ^{-1/2}\hat{W}\hat{D} ^{-1/2},
\end{align}
where $\hat{D}=\text{diag}\big(\{\hat{D}_i\}_{i =1}^{d}\big)$ denotes the {\it degree matrix} of $\hat W$ which carries the degree of vertex $v_i$ as its $i$-th diagonal element, i.e., $\hat{D}_{i}= \sum_{j=1}^{d} \hat{W}_{i,j}$.
This Laplacian can be viewed as a perturbed version of the unknown population Laplacian
\begin{align}\label{eq:L}
L= I - D^{-1/2} W D^{-1/2},
\end{align}
where $W=e^{-\A}$ is the population adjacency matrix and $D$ its corresponding degree matrix. 
Note that we use a 
Normalized Laplacian  as it shows a better 
performance if $d$ is large \citep[][]{vLBB08} and can also be applied to irregular graphs, i.e., graphs of which the vertices have different degrees.

It is straightforward to show that \eqref{eq:L} is symmetric and positive-definite and therefore has an eigendecomposition, say $L=U^{\top}\Lambda U$ with $\Lambda=\text{diag}\big(\{\lambda_i\}_{i=1}^d) \in \mathbb{R}_{\ge 0}^{d \times d}$. In he case where $G=(V,E)$ is an undirected weighted graph with non-negative weights, its spectrum provides information on the connectivity of the graph. More specifically, the multiplicity $k$ of the eigenvalue $0$ of $L$ equals the number of connected components $G_1, \ldots, G_k$ in the graph $G$, and the eigenspace of the eigenvalue $0$ is spanned by the vectors $D^{1/2} \mathbb{1}_{G_i} i = 1,\ldots, k$, where $\mathbb{1}_{G_i} \in \mathbb{R}^d$ denotes the indicator vector on component $G_i$  \citep[see e.g.,][]{Chung1997}. 

To understand the usefulness of these properties, suppose for a moment that our graph has exactly $k$ disconnected components where the nodes belonging to different components are infinitely far apart, i.e. have zero weight. 
The above implies that if we collect the $k$ eigenvectors that belong to the $k$ smallest eigenvectors of $L$
and we row-normalize this matrix, we obtain a matrix of indicator vectors
  \begin{align} \label{eq:popEmb}
  \mathcal{U}:=[\mathbb{1}_{G_1}, \ldots ,\mathbb{1}_{G_k} ] \in \mathbb{R}^{d \times k}.\end{align}
Per row of $\mathcal{U}$ there will be exactly one nonzero element, indicating the component to which it belongs to.
In practice,  one never encounters the ideal situation that the nullspace of the empirical graph Laplacian is perfectly spanned by (scaled) indicator vectors because the empirical similarity graph, by construction, consists of only one connected component.  Nevertheless, the information about the structure of $k$ clusters is still completely contained in the eigenvectors that belong to the smallest $k$ eigenvalues of $L$. 
In particular, these eigenvectors provide a relaxation solution to the {\em Normalized minimal cut} problem (Ncut) \citep{SM02}, which has the objective to partition the graph into $k$ disjoint components by `cutting' it at the edges of which the total sum of normalized weights (relative to the volume of the partitioning component) is minimized.  In order to exploit this information, we embed the infinite-dimensional processes $X_i$ into the space spanned by the $k$ eigenvectors that belong to the $k$ smallest eigenvalues by representing the $i$-th process by the $i$-th row of $\mathcal{U}$. The embedded points then provide a representation of $X_i$ in $\mathbb{R}^{k}$ of which the clustering properties are enhanced. As a result, a simple algorithm such as $k$-means can be applied to the embedded points to identify the clusters.

To be precise, let  ${\Vn}_{\cdot,1}, \ldots ,{\Vn}_{\cdot,k} \in \mathbb{R}^d$ denote the row-normalized $k$ eigenvectors of 
the empirical graph Laplacian $\hat{L}$ defined in \eqref{eq:L_T} corresponding
to the smallest $k$ eigenvalues. 
Note that the eigenvectors  of $\hat L$ can only be identified up to an orthogonal rotation. 
The row normalization  avoids this additional source of unidentifiability as will become clear below. It is by no means guaranteed that the embedding of our data as the rows of the matrix
  \begin{align}\label{eq:empEmb}
 {\Vn}:=[{\Vn}_{\cdot,1}, \ldots ,{\Vn}_{\cdot,k}  ] \in \mathbb{R}^{d \times k},
  \end{align}
provides a meaningful spectral embedding for clustering. For this purpose  we need to show  that 
$\hat{L}$  is `close' to $L$  measured by a suitable norm such that their spectral properties converge to their population counterparts. The approach that we take to establish consistency of the empirical Laplacian as $T$ tends to infinity is 
 based on perturbation theory comparing the matrices  
 $\hat{L}$ and $L$ \citep[see also][]{NJW02,RChY11}.
The proofs are technical and rely on several auxiliary results which are relegated to the Appendix. Using consistency of $\hat{\A}$ and the symmetry of $L$, we can show that the distance in operator norm between $\hat{L}$ and $L$ converges to zero in probability.

\begin{lemma} \label{lem:Lcons}
Under the conditions of \autoref{thm:con}
\[\forall \varepsilon>0,\quad \lim_{T \to \infty}  \mathbb{P}\big(\snorm{\hat{L}-L}_{\infty} > \varepsilon\big)=0.\]
\end{lemma}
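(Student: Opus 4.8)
The plan is to reduce the statement to an entrywise convergence claim and then invoke continuity of the operations that build $\hat L$ from $\hat\A$. Since the number of time series $d$ is fixed, every matrix appearing here is of fixed finite dimension $d\times d$, so the operator norm is controlled by the Frobenius norm; in particular $\snorm{A}_\infty\le\big(\sum_{i,j}|A_{ij}|^2\big)^{1/2}$. Hence it suffices to show that each entry of $\hat L-L$ converges to zero in probability, and no appeal to the eigenstructure is needed (symmetry of $\A$, and thus of $\hat W,W,\hat L,L$, merely guarantees in addition that $\snorm{\hat L-L}_\infty$ equals the largest absolute eigenvalue of the difference).

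First I would control the adjacency matrix. By \autoref{thm:con} we have $\hat\A_{i_1,i_2}-\A_{i_1,i_2}=O_p(T^{-1/2})$ for each pair, and since $d$ is fixed this holds simultaneously over all pairs. Because $x\mapsto e^{-x}$ is Lipschitz on any bounded set and $\A_{i_1,i_2}\in[0,1)$, a mean-value argument gives $\hat W_{i_1,i_2}-W_{i_1,i_2}=e^{-\hat\A_{i_1,i_2}}-e^{-\A_{i_1,i_2}}=O_p(T^{-1/2})$ entrywise, whence $\snorm{\hat W-W}_\infty=O_p(T^{-1/2})$. Summing over the fixed number of columns, the degrees satisfy $\hat D_i-D_i=\sum_j(\hat W_{i,j}-W_{i,j})=O_p(T^{-1/2})$ as well.

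The crucial structural observation is that the degrees are bounded away from zero. Since $\A_{i_1,i_2}\in[0,1)$ we have $W_{i_1,i_2}=e^{-\A_{i_1,i_2}}> e^{-1}>0$ for every pair, so $D_i=\sum_j W_{i,j}> d\,e^{-1}>0$ uniformly in $i$. Consequently $\hat D_i$ is also bounded below by a positive constant with probability tending to one, and since $x\mapsto x^{-1/2}$ is Lipschitz on $[c,\infty)$ for any $c>0$, we obtain $\snorm{\hat D^{-1/2}-D^{-1/2}}_\infty=O_p(T^{-1/2})$. With these ingredients I would write the difference of the normalized adjacency parts as a telescoping sum,
\[
\hat D^{-1/2}\hat W\hat D^{-1/2}-D^{-1/2}WD^{-1/2}
=(\hat D^{-1/2}-D^{-1/2})\hat W\hat D^{-1/2}
+D^{-1/2}(\hat W-W)\hat D^{-1/2}
+D^{-1/2}W(\hat D^{-1/2}-D^{-1/2}),
\]
and bound each term by submultiplicativity of $\snorm{\cdot}_\infty$. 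The factors $\snorm{\hat W}_\infty$, $\snorm{\hat D^{-1/2}}_\infty$, $\snorm{D^{-1/2}}_\infty$ and $\snorm{W}_\infty$ are all $O_p(1)$, being either deterministic and finite or convergent to a finite limit, while each of the three difference factors is $o_p(1)$; hence the whole expression is $o_p(1)$. Since $\hat L-L=D^{-1/2}WD^{-1/2}-\hat D^{-1/2}\hat W\hat D^{-1/2}$, this yields $\snorm{\hat L-L}_\infty=o_p(1)$, which is the claim.

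I expect the only genuine point requiring care to be the uniform lower bound on the degrees: without it the map $x\mapsto x^{-1/2}$ fails to be Lipschitz near the limiting point and the inverse-square-root step could blow up. Everything else is a routine continuous-mapping argument made clean by the finiteness of $d$, which collapses operator-norm convergence to entrywise convergence and makes all the intermediate norms automatically $O_p(1)$.
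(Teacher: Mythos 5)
Your argument is correct and follows essentially the same route as the paper's proof: entrywise consistency of $\hat W$ and $\hat D$ obtained from \autoref{thm:con} via the continuous mapping theorem, followed by a three-term telescoping decomposition of $\hat D^{-1/2}\hat W\hat D^{-1/2}-D^{-1/2}WD^{-1/2}$ bounded by submultiplicativity (the paper's variant factors out $\hat D^{-1/2}\hat W\hat D^{-1/2}$, whose norm is exactly one, and divides by $\min_i D_i$). Your explicit observation that $W_{i_1,i_2}=e^{-\A_{i_1,i_2}}>e^{-1}$ forces $\min_i D_i\ge d\,e^{-1}>0$ is a point the paper leaves implicit, and it is indeed the step that legitimizes the inverse-square-root map.
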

To analyze the concentration of $\Vn$, we use a slight modification of the classical Davis-Kahan Sin $\Theta$ theorem \citep{StSun1990}. While this theorem provides an upper bound on the sinus of the principal angles between the eigenspaces $\hat{U}O$,  for some orthonormal rotation matrix $O \in \mathbb{R}^{k \times k}$, and $U$, in terms of the spectral grap $\delta$, the dimension of the space $k$ and on the size of the perturbation $\snorm{H}_{\infty}$,
 \autoref{lem:Rotbound} in the Appendix can be used to bound the Euclidean distance between the non-normalized empirical eigenvectors $\hat{U}$ and their population counterparts $U$ up to rotation. For the row-normalized matrix $\Vn$, we avoid the additional source of unidentifiability caused by the rotation matrix. We derive the following result.

\begin{lemma} \label{cor:conVn}
The matrix $\Vn$ defined in \eqref{eq:empEmb} satisfies
\begin{align*}
\|\Vn-\mathcal{U} \|_2 & \le \frac{ 4\sqrt{k}}{\sqrt{\min_{i}\|{U}_{i,\cdot}\|^2_2}} \frac{\snorm{\hat{L}-L}_\infty}{\lambda_{k+1}} 
\le 4 \sqrt{k} \sqrt{\frac{\mathcal{C_{\text{max}}}}{\min_i D_{i}}} 
\frac{
\snorm{\hat{L}-L}_\infty
}{\lambda_{k+1}} 
\end{align*}
where $\lambda_{k+1}$ is the $(k+1)$-th smallest eigenvalue of $L$ and where $\mathcal{C}_{\text{max}} =\max_{i}\sum_{i_1,i_2 \in G_i}W_{i_1,i_2}$. Hence, Under the conditions of \autoref{lem:Lcons},
\[
\forall \varepsilon>0,\quad \lim_{T \to \infty}  \mathbb{P}\big(\|\Vn-\mathcal{U} \|_2 > \varepsilon\big)=0.
\]
\end{lemma}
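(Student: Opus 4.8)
The plan is to combine a Davis--Kahan type perturbation bound for the raw eigenvectors with an elementary stability estimate for the row-normalization map, and then to evaluate the resulting constant explicitly using the block structure of the population eigenvectors. The stochastic conclusion will then follow immediately from \autoref{lem:Lcons}, since every remaining factor in the bound is a fixed population quantity independent of $T$.

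First I would apply \autoref{lem:Rotbound} to the pair $(\hat{L},L)$ to control the non-normalized eigenvector matrices. Writing $\hat{U}$ and $U$ for the matrices whose columns are the eigenvectors associated with the $k$ smallest eigenvalues of $\hat{L}$ and $L$ respectively, this produces an orthogonal $O \in \mathbb{R}^{k \times k}$ with
\[
\snorm{\hat{U}O - U}_2 \;\le\; \frac{2\sqrt{k}}{\lambda_{k+1}}\,\snorm{\hat{L}-L}_\infty ,
\]
the relevant spectral gap being $\lambda_{k+1}$ because the $k$ smallest eigenvalues of $L$ are the null eigenvalues associated with the $k$ connected components. The key point that allows the rotation to drop out is that row-normalization is equivariant under right multiplication by $O$: since $\|\hat{U}_{i,\cdot}O\|_2 = \|\hat{U}_{i,\cdot}\|_2$, normalizing the rows of $\hat{U}O$ simply returns $\Vn O$. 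Hence comparing the row-normalizations of $\hat{U}O$ and $U$ is exactly the comparison of $\Vn$ and $\mathcal{U}$ up to the irrelevant orthogonal factor $O$, which is why row-normalization removes the rotational unidentifiability.

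Next I would invoke the elementary inequality $\bigl\|a/\|a\| - b/\|b\|\bigr\| \le 2\|a-b\|/\|b\|$, valid for any nonzero $a,b$, applied row by row with $a=(\hat{U}O)_{i,\cdot}$ and $b=U_{i,\cdot}$. Squaring, summing over $i$, and factoring out $\min_i\|U_{i,\cdot}\|_2$ gives
\[
\snorm{\Vn - \mathcal{U}}_2 \;\le\; \frac{2}{\sqrt{\min_i\|U_{i,\cdot}\|_2^2}}\,\snorm{\hat{U}O - U}_2 ,
\]
and combining the factor $2$ here with the bound above yields the constant $4\sqrt{k}$ and hence the first claimed inequality. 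This row-normalization step is the crux of the argument and the part I expect to require the most care, precisely because it converts the rotation-ambiguous eigenvector bound into a genuine bound on the identifiable object $\Vn$; one must also verify that the denominators $\|U_{i,\cdot}\|_2$ are bounded away from zero, which is supplied by the structural computation in the final step.

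Finally, for the second inequality I would use that the columns of $U$ are the normalized vectors $D^{1/2}\mathbb{1}_{G_i}/\|D^{1/2}\mathbb{1}_{G_i}\|_2$, so that each row $U_{i,\cdot}$ has a single nonzero entry and $\|U_{i,\cdot}\|_2^2 = D_i/\sum_{i_1,i_2\in G_{c(i)}}W_{i_1,i_2}$, where $c(i)$ denotes the cluster containing $i$. Bounding the numerator below by $\min_i D_i$ and the denominator above by $\mathcal{C}_{\text{max}}$ gives $\min_i\|U_{i,\cdot}\|_2^2 \ge \min_i D_i/\mathcal{C}_{\text{max}}$, which is the second inequality. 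The probabilistic statement then follows at once: $k$, $\lambda_{k+1}$, $\mathcal{C}_{\text{max}}$ and $\min_i D_i$ are deterministic population quantities, while \autoref{lem:Lcons} gives $\snorm{\hat{L}-L}_\infty \to 0$ in probability, so the right-hand side, and hence $\snorm{\Vn - \mathcal{U}}_2$, vanishes in probability as $T\to\infty$.
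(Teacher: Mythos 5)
Your proposal is correct and follows essentially the same route as the paper: the Davis--Kahan bound of \autoref{lem:Rotbound} (i.e.\ \autoref{cor:rotbound}) for $\|\hat{U}O-U\|_2$ with gap $\lambda_{k+1}$, a row-wise stability estimate for normalization contributing the factor $2$ and the weight $\min_i\|U_{i,\cdot}\|_2^{-1}$, and the identity $\|U_{i,\cdot}\|_2^2 = D_i/\sum_{i_1,i_2\in G_{c(i)}}W_{i_1,i_2}$ to obtain the second inequality, with \autoref{lem:Lcons} supplying the convergence in probability. Your explicit computation of $\|U_{i,\cdot}\|_2^2$ is in fact slightly cleaner than the paper's, which states the same lower bound $\min_i\|U_{i,\cdot}\|_2^2\ge \min_i D_i/\mathcal{C}_{\max}$ somewhat more tersely.
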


These results thus justify to use the rows $\Vn_{1,\cdot}, \ldots, \Vn_{d,\cdot}$ of \eqref{eq:empEmb} to embed the $d$ functional time series.  Each embedded point $\hat{\mathcal{U}}_{i,\cdot}$ then represents a process $X_i$ (or node) in $k$ dimensions, where these $k$ dimensions can be seen as the features. Based on this representation we can cluster our data using $k$-means. This step is analyzed next. We remark that our treatment of the spectral embedding by means of row-normalized eigenvectors of the graph Laplacian is therefore similar to \citet{NJW02} who first investigated the use of a symmetric Laplacian in the context of spectral clustering. 

\begin{exmp}[continued]
For our example, the eigenvalues in \autoref{fig:spectrum}(a) indicate the empirical graph Laplacian has one connected component. This is not surprising as we have a fully connected graph. The first eigenvector is therefore approximately constant, which can be seen in  \autoref{fig:spectrum}(b), while the second and third are approximately constant on the clusters. 
The second eigenvector  (\autoref{fig:spectrum}(c)) allows to separate the first 3 observations from the rest, while the third eigenvector indicates to separate the last three from the first 6 observations (\autoref{fig:spectrum}(d)).
 \begin{figure}[h!]
    \centering
    {\renewcommand{\arraystretch}{0}
    \begin{tabu}{c@{}c}
    \begin{subfigure}[t]{.25\columnwidth}
        \centering
        \includegraphics[width=\columnwidth]{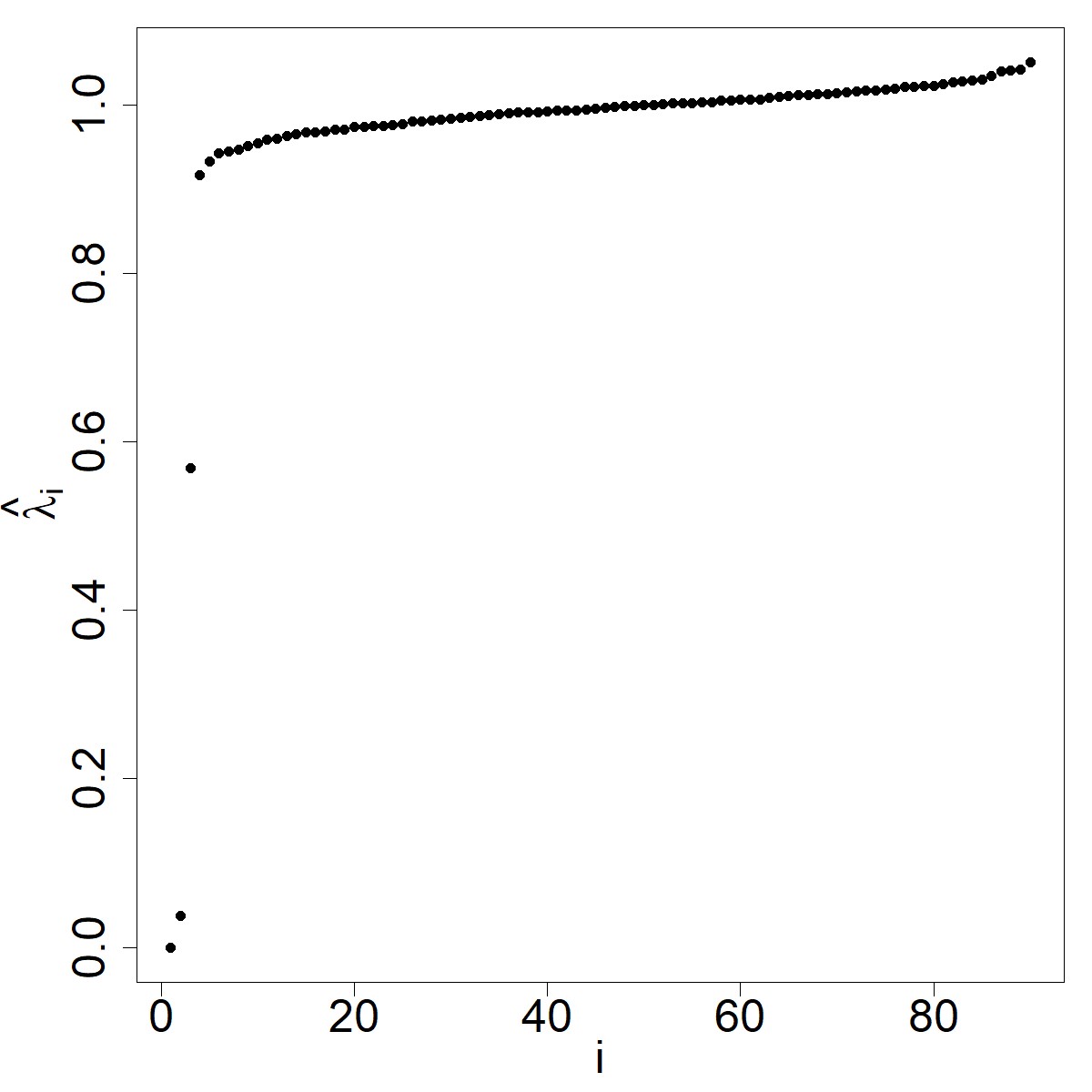}%
          \caption{\it {\small eigenvalues of $\hat{L}$}}
        \label{fig:mean and std of net14}
    \end{subfigure}
    \begin{subfigure}[t]{.25\columnwidth}  
        \centering
        \includegraphics[width=\columnwidth]{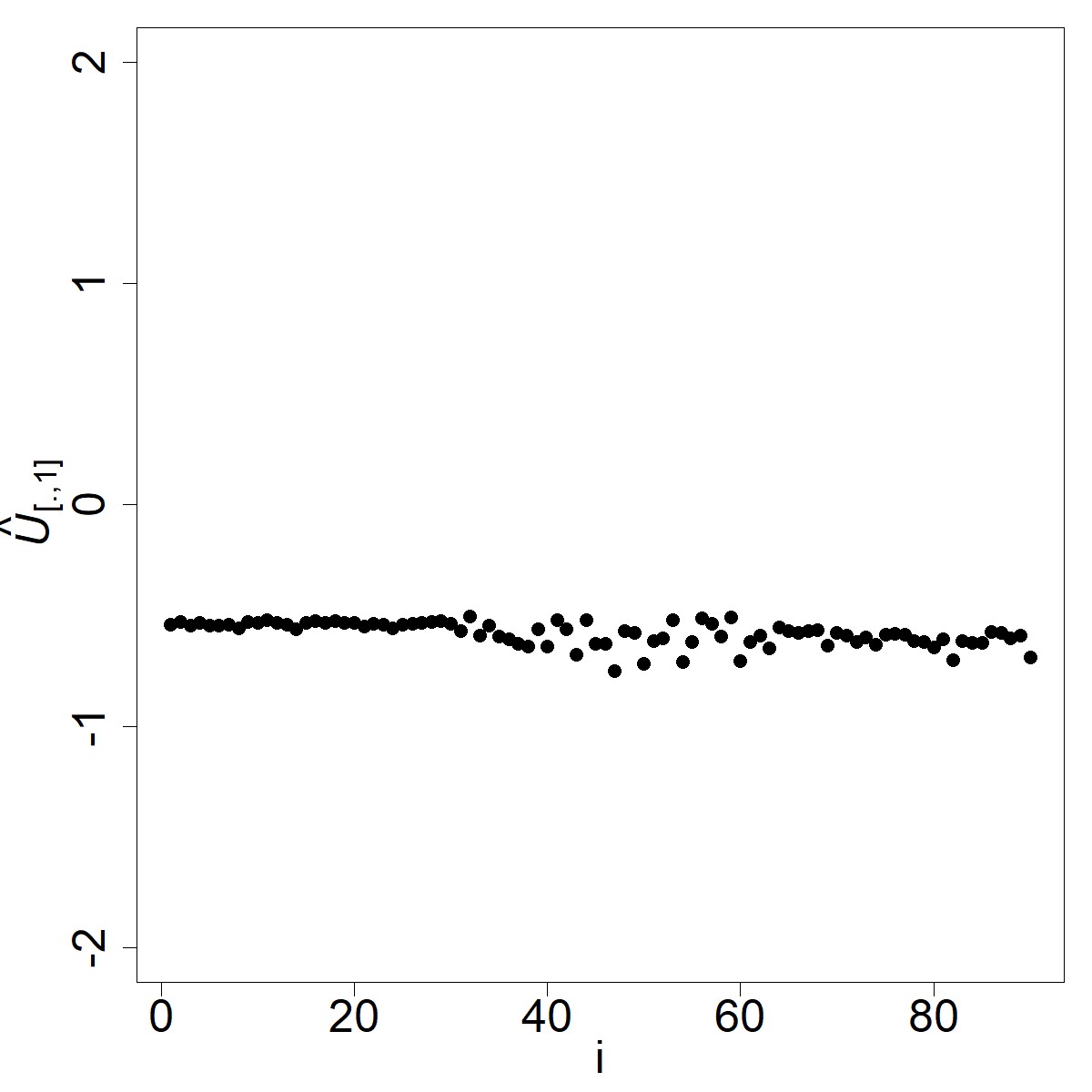}%
          \caption{\it {\small eigenvector $\hat{\mathcal{U}}_{\cdot,1}$}}
        \label{fig:mean and std of net24}
    \end{subfigure}
    \begin{subfigure}[t]{.25\columnwidth}   
        \centering 
        \includegraphics[width=\textwidth]{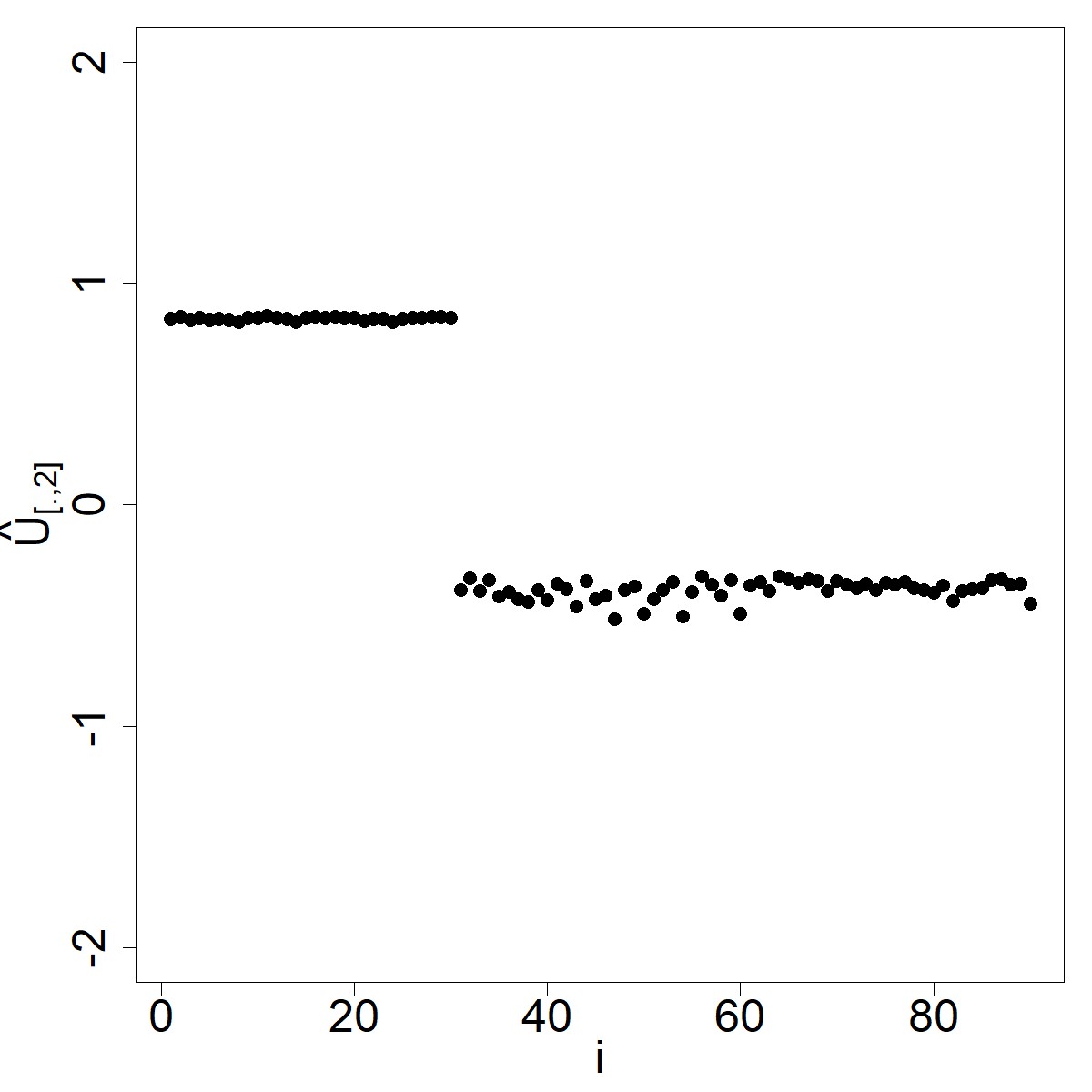}%
        \caption{\it {\small  eigenvector $\hat{\mathcal{U}}_{\cdot,2}$}}
        \label{fig:mean and std of net34}
    \end{subfigure}
    \begin{subfigure}[t]{.25\columnwidth}   
        \centering 
        \includegraphics[width=\columnwidth]{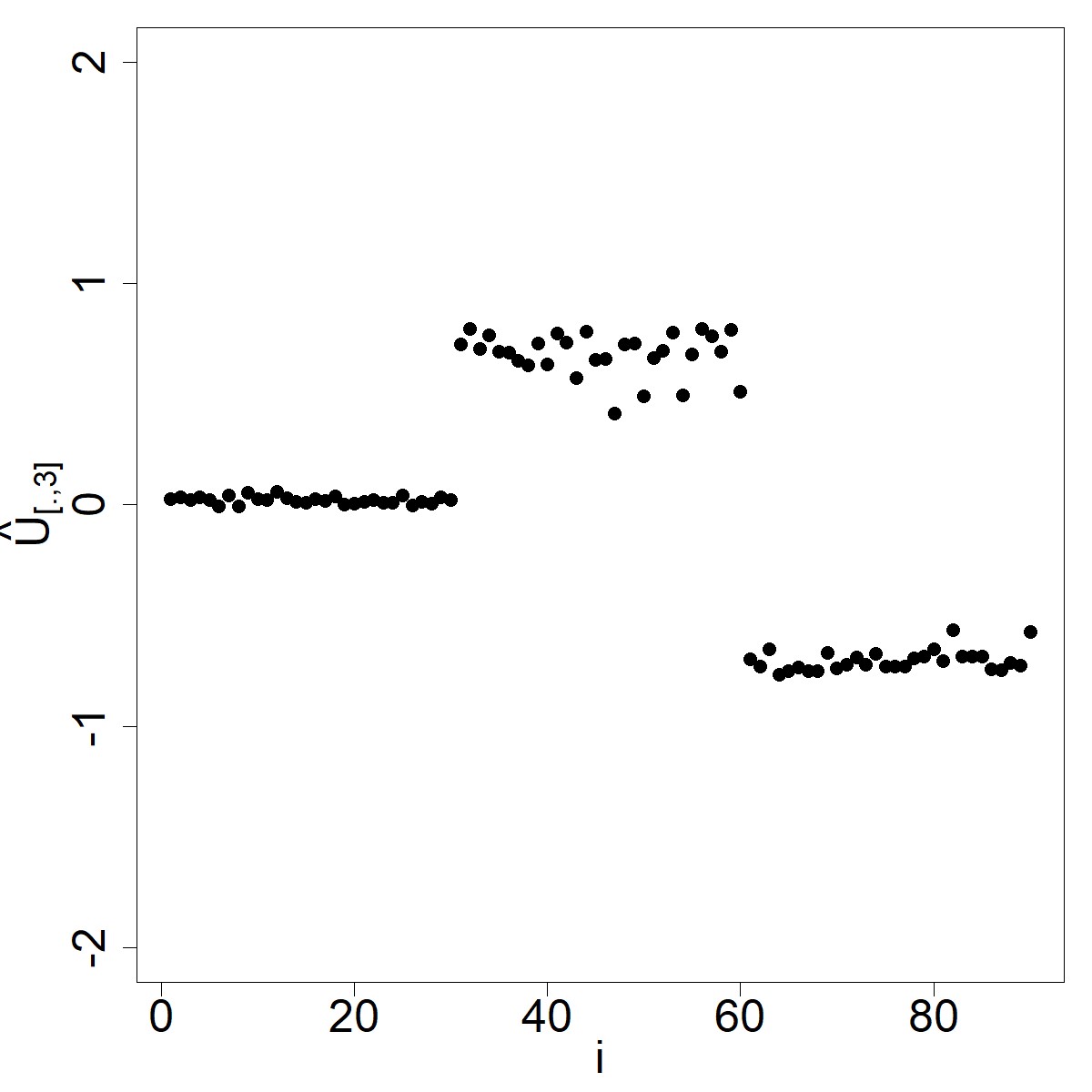}%
        \caption{\it {\small  eigenvector $\hat{\mathcal{U}}_{\cdot,3}$}}
        \label{fig:mean and std of net44}
    \end{subfigure}
    \end{tabu}}
    \caption[\it  The average and standard deviation of critical parameters ]
    {\it \small The spectrum of $\hat{L}$ and the three row-normalized eigenvectors belonging to the eigenvalues that belong to the three smallest eigenvalues $\hat\lambda_1 \le \hat\lambda_2 \le \hat\lambda_3$.} 
    \label{fig:spectrum}
\end{figure}
\end{exmp}

\subsection{Clustering the embedded points using $k$-means}

In this section, we analyze the final step where  $k$-means 
is applied to the embedded points $\hat{\mathcal{U}}_{1,\cdot},\ldots$, $\hat{\mathcal{U}}_{d,\cdot} \in\ \mathbb{R}^{k}$. We show that the $k$-means algorithm, clusters, with high probability, the data correctly. More specifically, we derive a non-asymptotic bound on the number of points that are misclustered and show, under regularity conditions, that this converges to zero as $T \to \infty$. 

The $k$-means objective aims to partition the $d$ embedded points $\hat{\mathcal{U}}_{1,\cdot},\ldots, \hat{\mathcal{U}}_{d,\cdot} \in \mathbb{R}^{k}$ into $k$ clusters $\{C_1,\ldots, C_k\}$ in such a way that the pairwise squared deviations of points within the same cluster is minimized. The algorithm thus returns the centroids $\{c^\star_1,\ldots, c^\star_k\}$ from the objective function,
\[
\min_{\{c_1, \ldots, c_k\} \subset \mathbb{R}^k} \sum_{i} \min_j \|\hat{\mathcal{U}}_{i,\cdot}-c_{j} \|^2_2.
\]
The data points $\hat{\mathcal{U}}_{i,\cdot}$ and $\hat{\mathcal{U}}_{j,\cdot}$ are then put in the same cluster if $c^\star_i = c^\star_j$. More formally but equivalently, for $\Vn$ defined in \eqref{eq:empEmb}, the $k$ means algorihm should return a matrix $C^{\star} \in \mathbb{R}^{d \times k}$ with at most $k$ unique rows such that
\begin{align} \label{eq:kmeansobj}
C^{\star}=\argmin_{C \in \mathcal{M}(d,k)} \|\Vn-C \|^2_2.
\end{align}
where $\mathcal{M}(d,k) =\{M \in \mathbb{R}^{d \times k}: M \text{ has at most $k$ distinct rows}\}$.

To analyze the algorithm, we need to define what we mean with a point being correctly clustered. Let $C^\star$ as in \eqref{eq:kmeansobj}, i.e.,  the matrix returned from applying the $k$-means algorithm to $\Vn$. Intuitively, a point $\hat{\mathcal{U}}_{i,\cdot}$ is correctly clustered if there is no other row of the population matrix $\mathcal{U}$ defined in \eqref{eq:popEmb} that is closer to $C^\star_{i,\cdot}$ than the $i$-th row. Using this intuition, we can provide a meaningful definition of the set of correctly clustered point (\autoref{lem:sigma}) and hence of its complement set. The next theorem gives a nonasymptotic bound on the number of misclustered points.

\begin{thm}\label{thm:miscl}
Assume the graph has $k$ components. Then the cardinality of the set of misclustered points, denoted by $|\Sigma|$, satisfies
\begin{align}\label{miscl}
&|\Sigma|\le  \iota k \frac{1}{\min_{i}\|{U}_{i,\cdot}\|^2_2}\frac{\snorm{\hat{L}-L}^2_\infty}{\lambda^2_{k+1}}  \le \iota k {\frac{ \mathcal{C_{\text{max}}}}{\min_i D_{i}}} \frac{\snorm{\hat{L}-L}^2_\infty}{\lambda^2_{k+1}}.
\end{align}
for some constant $\iota > 0$ and where $\mathcal{C_{\text{max}}}=\max_{i}\sum_{i_1 \in G_i} \sum_{i_2 \in G_i}W_{i_1,i_2}$. If the conditions underlying \autoref{thm:con} hold, then $|\Sigma|$ converges to zero 
in probability as $T \to \infty$. 
\end{thm}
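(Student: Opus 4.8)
The plan is to convert the entrywise (Frobenius) control of the spectral embedding obtained in \autoref{cor:conVn} into a combinatorial count of misclustered nodes. The argument rests on three ingredients: the optimality of the $k$-means minimiser $C^\star$, the fact that the rows of the ideal embedding $\mathcal{U}$ are well separated, and the perturbation bound of \autoref{cor:conVn}. Since all of the analytic work has already been absorbed into \autoref{lem:Lcons} and \autoref{cor:conVn}, what remains is essentially geometric and combinatorial.

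First I would exploit optimality. Because the graph has $k$ components, the matrix $\mathcal{U}$ from \eqref{eq:popEmb} has exactly $k$ distinct rows and hence lies in $\mathcal{M}(d,k)$. As $C^\star$ minimises $\|\Vn-C\|_2^2$ over this class, we get $\|\Vn-C^\star\|_2 \le \|\Vn-\mathcal{U}\|_2$, and the triangle inequality gives
\[
\|C^\star-\mathcal{U}\|_2 \;\le\; \|C^\star-\Vn\|_2 + \|\Vn-\mathcal{U}\|_2 \;\le\; 2\,\|\Vn-\mathcal{U}\|_2 .
\]

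Next comes the separation step, which I expect to be the main obstacle. After row-normalisation the rows of $\mathcal{U}$ corresponding to two different clusters are orthonormal, so any two distinct rows satisfy $\|\mathcal{U}_{i,\cdot}-\mathcal{U}_{j,\cdot}\|_2=\sqrt{2}$. Consequently, if $\|C^\star_{i,\cdot}-\mathcal{U}_{i,\cdot}\|_2<1/\sqrt{2}$ then $\mathcal{U}_{i,\cdot}$ is the unique closest row of $\mathcal{U}$ to $C^\star_{i,\cdot}$, and node $i$ is correctly clustered; the formal definition of $\Sigma$ in \autoref{lem:sigma} makes this precise and thus forces $\|C^\star_{i,\cdot}-\mathcal{U}_{i,\cdot}\|_2^2\ge 1/2$ for every $i\in\Sigma$. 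The delicate point is that the membership criterion for $\Sigma$ is defined through the nearest population row (deliberately sidestepping the unidentifiable cluster labelling), so one must check that this criterion is compatible with the $\sqrt{2}$-separation in order to extract the threshold. Summing the per-row deviations over $\Sigma$ then yields
\[
\tfrac12\,|\Sigma| \;\le\; \sum_{i\in\Sigma}\|C^\star_{i,\cdot}-\mathcal{U}_{i,\cdot}\|_2^2 \;\le\; \|C^\star-\mathcal{U}\|_2^2 \;\le\; 4\,\|\Vn-\mathcal{U}\|_2^2 ,
\]
so that $|\Sigma|\le 8\,\|\Vn-\mathcal{U}\|_2^2$.

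Finally I would substitute \autoref{cor:conVn}, which bounds $\|\Vn-\mathcal{U}\|_2^2$ by $16k\,(\min_i\|U_{i,\cdot}\|_2^2)^{-1}\,\snorm{\hat L-L}_\infty^2/\lambda_{k+1}^2$; this gives the first inequality in \eqref{miscl} with $\iota=128$, and the second follows from the elementary bound $(\min_i\|U_{i,\cdot}\|_2^2)^{-1}\le \mathcal{C}_{\text{max}}/\min_i D_i$ already used in \autoref{cor:conVn}. For the probabilistic conclusion, note that $k$, $\lambda_{k+1}$ and $\min_i\|U_{i,\cdot}\|_2^2$ are fixed population quantities, while $\snorm{\hat L-L}_\infty\to 0$ in probability by \autoref{lem:Lcons} under the conditions of \autoref{thm:con}. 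Hence the right-hand side of \eqref{miscl} tends to $0$ in probability, and since $|\Sigma|$ is a non-negative integer, $\mathbb{P}(|\Sigma|\ge 1)\to 0$, i.e.\ $|\Sigma|\to 0$ in probability.
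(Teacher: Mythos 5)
Your argument is correct and follows essentially the same route as the paper's proof: optimality of $C^\star$ over $\mathcal{M}(d,k)$ combined with $\mathcal{U}\in\mathcal{M}(d,k)$, the $1/\sqrt{2}$ threshold from \autoref{lem:sigma} to convert the Frobenius bound into a count, the resulting bound $|\Sigma|\le 8\|\Vn-\mathcal{U}\|_2^2$, and then \autoref{cor:conVn} and \autoref{lem:Lcons} for the explicit constant and the convergence in probability. The only cosmetic difference is that you square the triangle inequality where the paper applies $(a+b)^2\le 2(a^2+b^2)$ termwise; both yield the same factor of $8$.
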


This upper bound implies that the probability of misclustering is affected by various properties of the (data-induced) graph. Firstly, one can see it is an increasing function of the number of clusters $k$.  Secondly, it is a decreasing function of the minimum degree. In particular, the second inequality implies that for an unbalanced graph as measured by the maximal sum of all entries of any of the groups relative to the minimal degree $\min_i D_{i}$ the probability to miscluster has a less tight upper bound. In other words, if the graph contains isolated vertices and or many points that are highly concentrated, we can expect a higher probability to miscluster. Thirdly, it is a decreasing function of the distance between the zero eigenvalues and the first nonzero eigenvalue $\lambda_{k+1}$. Finally, it is affected by how well $\hat{L}$ estimates $L$. 


\begin{exmp}[continued]
For our example, the embedded points are shown in the left graph of \autoref{fig:embedres}. Applying k-means identifies all clusters exactly where the colors indicate the cluster the respective data points belong to.
\begin{figure*}
    \centering
    {\renewcommand{\arraystretch}{0}
    \begin{tabular}{c@{}c}
    \begin{subfigure}[t]{.35\columnwidth}
        \centering
        \includegraphics[width=1.0\columnwidth]{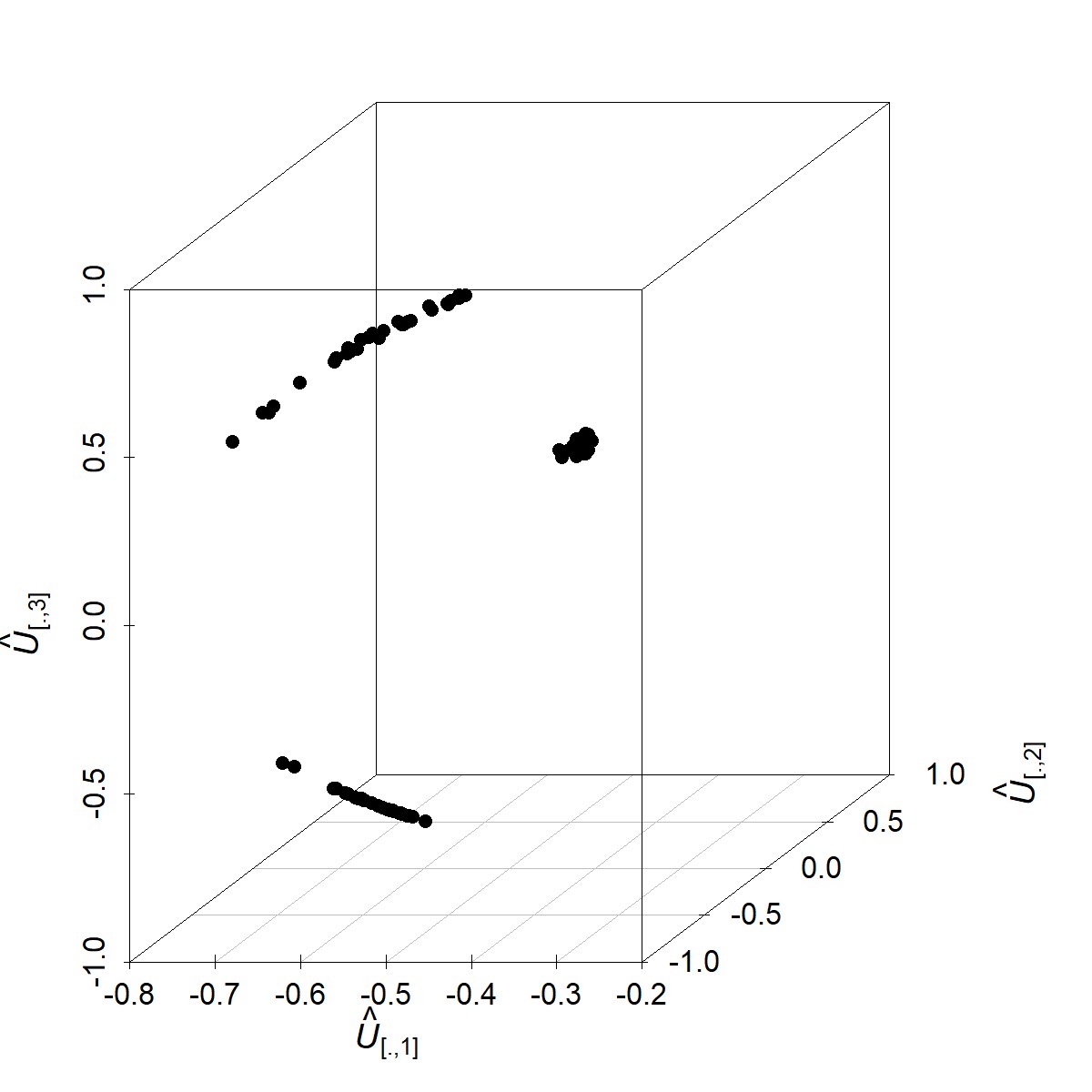}%
    \end{subfigure}&\hfill
    \begin{subfigure}[t]{.35\columnwidth}  
        \centering
        \includegraphics[width=1.0\columnwidth]{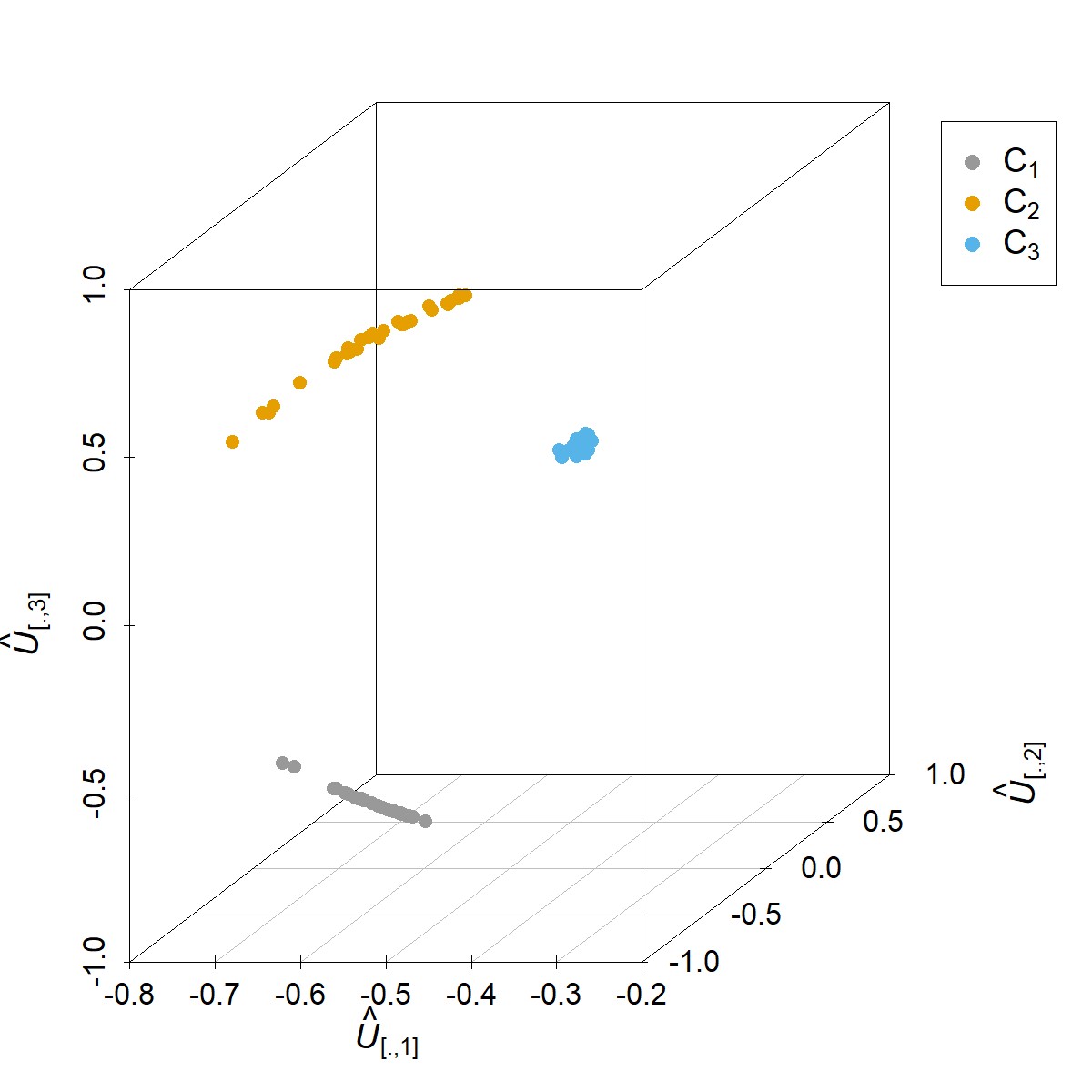}%
    \end{subfigure}
    \end{tabular}}
    \caption[ ]
    {\small \it The embedded points $\hat{\mathcal{U}}_{[1,\cdot]},\ldots, \hat{\mathcal{U}}_{[90,\cdot]}$ (left) and the result of applying $k$-means to these points  where the color indicate the cluster the point belongs to (right). } 
    \label{fig:embedres}
\end{figure*}
\end{exmp}

\subsection{The choice of  $k$} \label{sec34}

So far we have considered the case where the number of clusters $k$ is known. In the remaining part 
of this section we briefly discuss a data driven choice of $k$.
This problem has received much attention within the clustering literature and a wide range of methods has been developed 
 \citep[see][for overviews]{Gordon1999,TK2008} and compared \citep[see][and references therein]{MC1985,TWH2001}. Most  methods to pick $k$ are formulated in terms of optimizing a relative criteria. For different choices of $k$, the quality of the clustering structure is evaluated according to some measure, such as the intra-cluster dispersion, and the number of clusters chosen that optimizes this criteria. 

There is however no universally optimal method because the definition of `optimal' number of clusters can highly depend on the application and on the method used to cluster to data.  If the data is well-separated and there are clear pronounced clusters then there are several successful methods that will correctly detect the underlying clusters. However, in noisy datasets with overlapping of clusters, different methods will detect different number of clusters. In the case of spectral clustering, an intuitive alternative would be to pick the number of clusters $k$ such that the first $\lambda_1,\ldots, \lambda_k$ eigenvalues of $\hat{L}$ are small but $\lambda_{k+1}$ is `relatively' large. This `eigengap' heuristic can be justified through spectral properties of the population graph Laplacian and \autoref{lem:Lcons}. In practice it is however also known to be very sensitive to the construction of the similarity graph and can quickly fail unless the data is well separated. Deemed therefore an unsolvable problem, it is not uncommon to apply multiple criteria and pick the criteria that works best for the particular problem at hand \citep[see][for an implementation of various criteria]{Nbclust14}.  

A thorough development of a new method would be beyond the scope of this paper. In our empirical study in \ref{sec4} we investigate the performance of available methods to choose $k$  for our particular algorithm. Additionally, we consider two variations of the eigengap heuristic each with a different interpretation of a  ``relatively large'' eigengap.

\section{Testing for equality}  \label{sec5}
\def\theequation{5.\arabic{equation}}
\setcounter{equation}{0}

Besides from the application of the similarity  measure as a basis for spectral clustering of functional time series, a well-defined limiting distribution allows it to be meaningful in a variety of statistical applications and in particular for the construction of hypothesis tests. The problem to detect similarities or to compare time series is of interest in a wide range of scientific fields and for classical time series a variety of methods have been proposed (see references in the introduction and examples therein). In case of functional data, the literature is less well-developed. \citet{HKR13} proposed a procedure to test the hypothesis that two sets of functional data are identical and independently distributed using the sum of $L^2$-distances of the sequence of correlation functions. \citet{PanTav2016} instead proposed a test between two stationary functional time series based upon the Hilbert-Schmidt norm of the difference of the sample spectral density operators, restricted to a Hilbert-Schmid space of finite dimension. Bootstrap-based methods to test for equality of mean functions or covariance operators are proposed in \citet{PapSap2016} and, more recently, \citet{LPapSap2018} disucssed a test for the equality of spectral density operators.

To our knowledge, no procedure is available that allows to test for similarity of functional time series of which the second order structure is allowed to be time-dependent. In this section, we develop such a test using the previously defined similarity measure ${\A}_{i_1,i_2}$ in 
\eqref{eq:dist}. For the sake of brevity we restrict ourselves to  the
case of two functional series and consider the hypothesis
 \begin{equation}
H_0: \mathcal{F}^{(i_1)}_{u,\omega} \equiv \mathcal{F}^{(i_2)}_{\omega} \hspace{0.1 in} \text{a.e. on } [-\pi,\pi]\times[0,1] 
\label{h0} 
\end{equation}
\centerline{versus}
 \begin{equation}\label{h1} 
 H_a: \mathcal{F}^{(i_1)}_{u,\omega} \neq \mathcal{F}^{(i_2)}_{\omega} \text{on a subset of } [-\pi,\pi]\times[0,1] \text{ of positive Lebesgue measure}
\end{equation}
The similarity measure in \eqref{eq:dist} lends itself quite naturally to test this hypothesis, i.e., we can equivalently formulate the hypothesis as
\begin{equation} \label{hequiv0}
H_0: {\A}_{i_1,i_2}=0 \hspace{0.3 in} \text{versus} \hspace{0.3 in} H_a: {\A}_{i_1,i_2}> 0.
\end{equation}
By \autoref{thm:con}, the statistic $\hat{\A}_{i_1,i_2}$ 
defined in \eqref{eq:est_dist} 
is a consistent estimator of the normalized 
distance $ {\A}_{i_1,i_2}$. Therefore
it is reasonable to reject the null hypothesis 
for large values of the estimator $\hat{\A}_{i_1,i_2}$. Using the set of reqularity conditions provided in Appendix \autoref{sec:dist} the following result gives the asymptotic distribution of  $\hat{\A}_{i_1,i_2}$.
\begin{thm} \label{thm:AN}
Suppose that \autoref{cumglsp} with $m \ge 1$ and \autoref{ratesNM} hold. Then,
\[
\left\{\sqrt{T}\left(\hat{\A}_{i_1,i_2} - \A_{i_1,i_2}\right)\right\}_{\{i_1,i_2 \in [d]\}} \to \mathcal{N}(\boldsymbol{0},\boldsymbol{\Gamma}) \quad \text{ as } T \to \infty,
\]
where $\boldsymbol{0} \in \mathbb{R}^{d}$ and $\boldsymbol{\Gamma}$ is a positive definite element of $\,\mathbb{R}^{d \times d}$.
\end{thm}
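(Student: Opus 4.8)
The plan is to prove a joint central limit theorem for the quadrilinear functionals $F_{i_1i_2}$ in \eqref{Fij} and then transfer it to $\hat{\A}_{i_1,i_2}$ by the delta method. Writing $g(a,b,c,d)=(a+b-c-d)/(a+b)$, we have $\hat{\A}_{i_1,i_2}=g(F_{i_1i_1},F_{i_2i_2},F_{i_1i_2},F_{i_2i_1})$, and applying the same algebra to \eqref{eq:dist} gives $\A_{i_1,i_2}=g(\phi_{i_1i_1},\phi_{i_2i_2},\phi_{i_1i_2},\phi_{i_2i_1})$, where $\phi_{i_1i_2}$ denotes the population limit of $F_{i_1i_2}$, equal up to the normalising constant in \eqref{eq:fDFT} to $\int_0^1\int_{-\pi}^{\pi}\langle\F^{(i_1)}_{u,\omega},\F^{(i_2)}_{u,\omega}\rangle_{HS}\,du\,d\omega$ (the constant cancels in the ratio $g$, which is homogeneous of degree zero). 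Since the denominator $\phi_{i_1i_1}+\phi_{i_2i_2}$ is strictly positive and $g$ is continuously differentiable there, it suffices to prove asymptotic normality of the stacked vector $\sqrt{T}(\mathbf F-\boldsymbol\phi)$, where $\mathbf F$ collects the $F_{i_1i_2}$ needed across the relevant index pairs.

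First I would dispose of the bias. Using the local stationarity bound \eqref{Local stationarity} to replace each $X_{i,t,T}$ inside the fDFT \eqref{eq:fDFT} by its stationary surrogate $X^{(u)}_{i,t}$, and exploiting the smoothness in $(u,\omega)$ of the spectral density operators guaranteed by \autoref{cumglsp}, the expectation $\E[F_{i_1i_2}]$ becomes a Riemann sum over the grid $\{(u_j,\omega_k)\}$ approximating $\phi_{i_1i_2}$. The essential point is that the inner product pairs the periodogram tensors \eqref{eq:per} at the two adjacent frequencies $\omega_k$ and $\omega_{k-1}$; this frequency shift cancels the leading periodogram self-bias, so that $\E\langle I_{i_1}^{u_j,\omega_k},I_{i_2}^{u_j,\omega_{k-1}}\rangle_{HS}=\langle\F^{(i_1)}_{u_j,\omega_k},\F^{(i_2)}_{u_j,\omega_k}\rangle_{HS}+o(1)$ uniformly over the grid. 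Under the rate conditions of \autoref{ratesNM} the combined surrogate-approximation and Riemann-sum error is $o(T^{-1/2})$, whence $\sqrt{T}(\E\mathbf F-\boldsymbol\phi)\to\mathbf 0$.

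The core of the argument is the limiting covariance together with the vanishing of the higher-order cumulants. Each $F_{i_1i_2}$ is a degree-four polynomial in the fDFTs, so $\cv(F_{i_1i_2},F_{i_3i_4})$ expands, through the product theorem for cumulants, into a sum over indecomposable partitions of the eight fDFT factors. I would show that the dominant contribution comes from the partitions that factor into products of second-order spectral quantities, yielding an $O(1/T)$ term whose $T$-rescaled limit is an explicit integral of Hilbert--Schmidt inner products of the $\F^{(i)}_{u,\omega}$; the terms involving genuine fourth- and higher-order cumulants of the processes are of smaller order by the summability built into \autoref{cumglsp}. This gives $T\,\cv(\mathbf F)\to\boldsymbol\Sigma$. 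For the limit law itself I would use the method of cumulants: the $r$-th joint cumulant of $\sqrt{T}\,\mathbf F$ is a $T^{r/2}$-weighted sum of cumulants of products of $4r$ fDFTs, and bounding these via the generalised cumulant summability of \autoref{cumglsp} shows that every cumulant of order $r\ge 3$ is $O(T^{-(r-2)/2})\to 0$. Hence $\sqrt{T}(\mathbf F-\E\mathbf F)\to\mathcal N(\mathbf 0,\boldsymbol\Sigma)$, and combined with the bias step, $\sqrt{T}(\mathbf F-\boldsymbol\phi)\to\mathcal N(\mathbf 0,\boldsymbol\Sigma)$.

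Finally, the multivariate delta method with Jacobian $Dg$ of the stacked map yields $\{\sqrt{T}(\hat{\A}_{i_1,i_2}-\A_{i_1,i_2})\}\to\mathcal N(\mathbf 0,\boldsymbol\Gamma)$ with $\boldsymbol\Gamma=Dg\,\boldsymbol\Sigma\,Dg^{\top}$; positive definiteness of $\boldsymbol\Gamma$ follows from the non-degeneracy of $\boldsymbol\Sigma$ together with the linear independence of the relevant rows of $Dg$, which holds provided the operators $\F^{(i)}_{u,\omega}$ are not almost everywhere proportional. I expect the main obstacle to be the combinatorial bookkeeping in the covariance and higher-cumulant expansions: controlling the indecomposable partitions under the adjacent-frequency shift $\omega_k,\omega_{k-1}$, and verifying that the surrogate-approximation error introduced by local stationarity stays negligible at the $\sqrt{T}$ scale throughout these multilinear expansions, is where the genuine work lies.
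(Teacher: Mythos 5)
Your overall route is the same as the paper's: establish joint asymptotic normality of the quadrilinear statistics $F_{i_1i_2}$ by the method of cumulants (the paper's \autoref{thm:highcumF1234} shows $T^{m/2}\operatorname{cum}_{m_0,m_1,m_2,m_3}(F_{i_1,i_2},\dots)=O(T^{1-m/2})\to 0$ for $m>2$), control the bias via the local-stationarity surrogates and the Riemann-sum approximation under \autoref{ratesNM}, and then apply the delta method with the degree-zero homogeneous map $g(x_1,x_2,x_3,x_4)=1-(x_3+x_4)/(x_1+x_2)$. All of that matches.

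There is, however, one concrete error in your covariance step. You claim that the limiting covariance $\boldsymbol{\Sigma}$ is carried entirely by the indecomposable partitions that factor into products of second-order quantities, and that ``the terms involving genuine fourth- and higher-order cumulants of the processes are of smaller order.'' That is not what happens: in the expansion of $T\,\cv(F_{i_1,i_2},F_{i_3,i_4})$ over indecomposable partitions of the $2\times 4$ array of fDFTs, the partitions of type $\cm_4\cm_2\cm_2$ contribute at exactly the same order $O(1/T)$ as the $\cm_2\cm_2\cm_2\cm_2$ partitions, because a fourth-order cumulant of fDFTs on the frequency manifold is of size $N^{-1}$ (\autoref{lem:cumfixu}), which after summing over the grid produces an $O(T^{-1})$ term rather than a negligible one. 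These terms survive in the limit as integrals involving the fourth-order (cross-)spectral density operators $\F^{i}_{u,\omega_1,-\omega_1,\pm\omega_2}$, and they are an intrinsic part of $\boldsymbol{\Sigma}$ (they only cancel in the special combination arising under $H_0$ with independence, as in \autoref{dist_H0}). Since the theorem only asserts convergence to a Gaussian law with some positive definite $\boldsymbol{\Gamma}$, your argument still delivers a Gaussian limit, but the covariance you would write down is wrong, and your positive-definiteness argument for $\boldsymbol{\Gamma}$ — which you base on non-degeneracy of your (mis-specified) $\boldsymbol{\Sigma}$ — would need to be redone with the fourth-order cumulant contributions included.
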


Under the null hypothesis, the asymptotic variance reduces to a very succinct form in case the processes are moreover independent.
\begin{thm}\label{dist_H0}
Suppose that \autoref{cumglsp} with $m \ge 1$ and \autoref{ratesNM} hold and suppose $\{X_{i_1,t,T}\}$ and $\{X_{i_2,t,T}\}$ are independent. Then, under the null hypothesis $H_0: \F^{i_1}_{u,\omega}\equiv \F^{i_2}_{u,\omega}$, we have
\begin{align*}
\sqrt{T}\hat{\A}_{i_1,i_2}
 \sim  \mathcal{N} \big(0, \sigma^2_{H_0}\big),
 \end{align*}
 where the asymptotic variance is given by
\begin{align} \label{eq:sigma0}
\sigma^2_{H_0} = 4\pi \frac{\int_{-\pi}^{\pi} \int_0^{1}\snorm{\F^{i_1}_{u,\omega}}^4_2 du d\omega}{\big(\int_{-\pi}^{\pi} \int_0^{1}\snorm{\F^{i_1}_{u,\omega}}^2_2 du d\omega\big)^2}. 
\end{align}
\end{thm}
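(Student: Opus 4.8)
The plan is to obtain the result from the general central limit theorem \autoref{thm:AN}, which reduces the proof to two points: identifying the centering, and evaluating the relevant limiting variance under the two additional hypotheses. Under $H_0$ we have $\F^{(i_1)}_{u,\omega}\equiv\F^{(i_2)}_{u,\omega}$, so the numerator of $\A_{i_1,i_2}$ in \eqref{eq:dist} vanishes and $\A_{i_1,i_2}=0$; consequently $\sqrt{T}\hat{\A}_{i_1,i_2}=\sqrt{T}(\hat{\A}_{i_1,i_2}-\A_{i_1,i_2})$, and \autoref{thm:AN} (with $m\ge1$) already delivers asymptotic normality with mean zero, the limiting variance being that of the $(i_1,i_2)$ coordinate. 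It remains to show that this variance equals \eqref{eq:sigma0}.

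First I would linearise the ratio \eqref{eq:est_dist} by the delta method. Write $G=F_{i_1i_1}+F_{i_2i_2}-F_{i_1i_2}-F_{i_2i_1}$ and $H=F_{i_1i_1}+F_{i_2i_2}$ for its numerator and denominator. Bilinearity of $\langle\cdot,\cdot\rangle_{HS}$ yields the useful identity
\begin{equation*}
G=\frac1T\sum_{j=1}^M\sum_{k=1}^{\lfloor N/2\rfloor}\langle J^{u_j,\omega_k},J^{u_j,\omega_{k-1}}\rangle_{HS},\qquad J^{u,\omega}:=I_{i_1}^{u,\omega}-I_{i_2}^{u,\omega},
\end{equation*}
exhibiting $G$ as an integrated cross-periodogram of the \emph{difference} of the two local periodograms, whose mean is asymptotically proportional to $\int_0^1\int_{-\pi}^{\pi}\snorm{\F^{(i_1)}_{u,\omega}-\F^{(i_2)}_{u,\omega}}_2^2\,d\omega\,du$ and hence vanishes under $H_0$. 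The consistency underlying \autoref{thm:con} gives $H\to 2c\,\mathcal{I}$ in probability, where $\mathcal{I}=\int_0^1\int_{-\pi}^{\pi}\snorm{\F^{(i_1)}_{u,\omega}}_2^2\,d\omega\,du$ and $c$ is the common grid-normalising constant carried by each $F_{ab}$. Since $G\to0$ in probability while $\sqrt{T}G$ is asymptotically normal, Slutsky's lemma reduces matters to computing $\tau^2:=\lim_{T\to\infty}\V(\sqrt{T}\,G)$, after which $\sigma^2_{H_0}=\tau^2/(2c\,\mathcal{I})^2$ (the constant $c$ will cancel).

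The core of the argument is the evaluation of $\tau^2$. Recording from \eqref{eq:fDFT}--\eqref{eq:per} that $\langle I_a^{u,\omega},I_b^{u,\omega'}\rangle_{HS}=|\langle D_a^{u,\omega},D_b^{u,\omega'}\rangle|^2$ displays $\sqrt{T}G$ as a quadratic functional of the fDFTs, and I would then invoke the auxiliary limit theory of the fDFT under local stationarity (Appendix): across distinct Fourier frequencies and across blocks the $D_i^{u_j,\omega_k}$ behave as independent, circularly symmetric complex Gaussian elements of $\mathcal{H}$ with covariance operator $\F^{(i)}_{u_j,\omega_k}$, and the fourth-order cumulants linking two distinct frequencies are negligible. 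Hence the two factors $J^{u_j,\omega_k}$ and $J^{u_j,\omega_{k-1}}$ in each summand are asymptotically independent and centred, the summands are asymptotically uncorrelated in $(j,k)$ up to the harmless overlap of consecutive index pairs, and a complex Wick expansion reduces $\E|\langle J^{u_j,\omega_k},J^{u_j,\omega_{k-1}}\rangle_{HS}|^2$ to traces of products of the covariance operators of $J$. At this stage the two extra hypotheses enter decisively: independence of $\{X_{i_1,t,T}\}$ and $\{X_{i_2,t,T}\}$ forces $\mathrm{Cov}(J^{u,\omega})=\mathrm{Cov}(I_{i_1}^{u,\omega})+\mathrm{Cov}(I_{i_2}^{u,\omega})$ with no cross term, and under $H_0$ both covariances are governed by the same $\F$, so every surviving trace collapses to $\snorm{\F^{(i_1)}_{u,\omega}}_2^4$.

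It then remains to pass from the double Riemann sum over $(u_j,\omega_k)$ to $\int_0^1\int_{-\pi}^{\pi}\snorm{\F^{(i_1)}_{u,\omega}}_2^4\,d\omega\,du$, tracking the normalisation $1/T=1/(MN)$, the frequency spacing $2\pi/N$, and the reflection $\omega\mapsto-\omega$ restoring $[-\pi,\pi]$ from the positive frequencies in \eqref{Fij}; these factors produce a constant which, after division by $(2c\,\mathcal{I})^2$, cancels $c$ and leaves exactly $4\pi$, giving \eqref{eq:sigma0}. I expect the third step to be the main obstacle: controlling, uniformly and at the $\sqrt{T}$ rate, the error between the genuine fDFTs and their Gaussian surrogates together with the Riemann-sum error, and then carrying out the Wick bookkeeping in the Hilbert-Schmidt setting so that every surviving contribution truly assembles into the single $\snorm{\F}_2^4$ integral. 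Showing that the off-diagonal and cross-process cumulant terms are $o(1)$ under \autoref{cumglsp} is where the bulk of the technical work will lie.
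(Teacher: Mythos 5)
Your reduction is sound and coincides in substance with the paper's: under $H_0$ the centering vanishes, $\hat{\A}_{i_1,i_2}=G/H$ with $H$ converging in probability to a constant, and Slutsky reduces everything to $\tau^2=\lim_{T\to\infty} T\,\V(G)$; your identity $G=\frac1T\sum_{j,k}\langle J^{u_j,\omega_k},J^{u_j,\omega_{k-1}}\rangle_{HS}$ with $J=I_{i_1}-I_{i_2}$ is a clean repackaging of the quadratic form $\nabla g^{\top}\boldsymbol{\Sigma}\nabla g$ that the paper evaluates, since under $H_0$ the gradient is proportional to $(1,1,-1,-1)^{\top}$. The gap lies in your evaluation of $\tau^2$. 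You propose to replace the fDFTs by independent circularly symmetric Gaussian surrogates on the grounds that ``the fourth-order cumulants linking two distinct frequencies are negligible'' and then compute by Wick calculus. This is false at the relevant order: by \autoref{cor:cumbound} each fourth-order cumulant of the $D$'s is only $O(N^{-1})$ in trace norm, but the frequency-manifold condition $\omega_{k_1}-\omega_{k_1}-\omega_{k_2}+\omega_{k_2}\equiv 0 \bmod 2\pi$ holds for \emph{every} pair $(k_1,k_2)$, so the $\cm_4\cm_2\cm_2$ partitions contribute $\tfrac1T\cdot M\cdot\tfrac{N^2}{4}\cdot O(N^{-1})=O(1)$ to $T\cv(F_{ab},F_{cd})$. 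The tri-spectral operators $\F^{i}_{u,\omega_1,-\omega_1,\pm\omega_2}$ therefore enter every entry of $\boldsymbol{\Sigma}$ at the same order as the Wick terms.

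The reason they are absent from \eqref{eq:sigma0} is not negligibility but cancellation: the paper's proof enumerates the significant indecomposable partitions and shows by an explicit counting argument (occurrences $4+2+2-4-4=0$) that the fourth-order contributions, and all but one of the second-order ones, cancel in the combination $(1,1,-1,-1)^{\top}\boldsymbol{\Sigma}\,(1,1,-1,-1)$, leaving only $\tfrac{4}{4\pi}\int_{-\pi}^{\pi}\int_0^1\snorm{\F^{i_1}_{u,\omega}}_2^4\,du\,d\omega$ in the numerator. Your route would deliver the correct constant only because of this cancellation, which you neither state nor verify; as written it proves the theorem only for Gaussian processes, whereas the statement is claimed under the general cumulant conditions of \autoref{cumglsp}. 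To close the gap you must carry the $O(1)$ fourth-cumulant terms through the computation of $\V(G)$ and establish that they cancel.
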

Let $I_{p}^{u_j,\omega_k}=( I_{i_1}^{u_j,\omega_k}+ I_{i_2}^{u_j,\omega_{k}})/2$ be the pooled periodogram operator evaluated at $u_j$ and $\omega_k$. The asymptotic variance under the null can be estimated by 
 \begin{equation} 
 \label{varesth0} 
 \hat \sigma_{H_0}^2 = \frac{2}{3 T} \sum_{j=1}^M \sum_{k=1}^{\lfloor N/2 \rfloor}  \big(\innprod{I_{p}^{u_j,\omega_k}}{I_{p}^{u_j,\omega_{k-1}}}_{HS}\big)^2 \Big/
\Big( \frac{2}{T}\sum_{j=1}^M \sum_{k=1}^{\lfloor N/2 \rfloor} \langle I_{p}^{u_j,\omega_k}, I_{p}^{u_j,\omega_{k-1}}\rangle_{HS}\Big)^2.
 \end{equation}
\begin{lemma}\label{lem:est_sig}
Under the conditions of \autoref{dist_H0}, the estimator defined in \eqref{varesth0} satisfies
\[ \hat \sigma_{H_0}^2 \overset{p}{\to} \sigma_{H_0}^2  \]
as $T \to \infty$. 
\end{lemma}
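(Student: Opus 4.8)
The plan is to write $\hat\sigma_{H_0}^2$ as a smooth function of two empirical averages, prove convergence in probability of each average, and conclude by the continuous mapping theorem. Set
\[
\hat B := \frac{2}{T}\sum_{j,k}\langle I_p^{u_j,\omega_k}, I_p^{u_j,\omega_{k-1}}\rangle_{HS},
\qquad
\hat A := \frac{2}{3T}\sum_{j,k}\big(\langle I_p^{u_j,\omega_k}, I_p^{u_j,\omega_{k-1}}\rangle_{HS}\big)^2,
\]
where $\sum_{j,k}$ abbreviates $\sum_{j=1}^{M}\sum_{k=1}^{\lfloor N/2\rfloor}$, so that $\hat\sigma_{H_0}^2=\hat A/\hat B^2$ by \eqref{varesth0}. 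Since $(a,b)\mapsto a/b^2$ is continuous at every $b\neq 0$, it suffices to exhibit constants $A_\infty$ and $B_\infty>0$ with $A_\infty/B_\infty^2=\sigma_{H_0}^2$ such that $\hat A\overset{p}{\to}A_\infty$ and $\hat B\overset{p}{\to}B_\infty$; Slutsky's theorem then gives the claim. For each average I would argue in $L^2$, i.e. show that its mean tends to the stated limit and its variance tends to zero, which yields convergence in probability by Chebyshev's inequality.

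For the denominator, $\hat B$ is, apart from the factor $2$ and the use of the pooled periodogram $I_p$, a linear combination of the bilinear statistics $F_{i_1i_2}$ of \eqref{Fij}, whose mean and variance are already controlled in the proof of \autoref{thm:con}. Using that the fDFTs at distinct (in particular adjacent) Fourier frequencies are asymptotically uncorrelated and locally unbiased for $\F_{u,\omega}$, I would show $\E\langle I_p^{u_j,\omega_k}, I_p^{u_j,\omega_{k-1}}\rangle_{HS}\to\snorm{\F_{u_j,\omega_k}}_2^2$, so that $\E\hat B$ is a Riemann approximation converging to $B_\infty$, the corresponding integral of $\snorm{\F}_2^2$, which is strictly positive under the null. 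The vanishing of $\V\hat B$ follows from the cumulant-summability conditions of \autoref{cumglsp}, exactly as in the concentration argument for $F_{i_1i_2}$.

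The numerator is the main step. Expanding $\langle I_p^{u_j,\omega_k}, I_p^{u_j,\omega_{k-1}}\rangle_{HS}$ in terms of the inner products of the pooled fDFTs turns the summand $(\langle\cdots\rangle_{HS})^2$ into a degree-four form in the periodogram operators, hence a degree-eight form in the fDFTs, so that $\E\hat A$ is governed by their eighth-order joint moments. Under \autoref{cumglsp} these split into a Gaussian part (sums of products of covariance operators) and remainders built from higher-order cumulants; invoking the asymptotic complex Gaussianity of the fDFTs, their asymptotic independence across the two independent series and across distinct frequencies, the surviving contractions express the leading term of $\E[(\langle I_p^{u_j,\omega_k}, I_p^{u_j,\omega_{k-1}}\rangle_{HS})^2]$ through $\F_{u_j,\omega_k}$, and the factor $2/3$ together with the pooling is precisely calibrated so that the resulting Riemann sum converges to a limit $A_\infty$ with $A_\infty/B_\infty^2=\sigma_{H_0}^2$ of \eqref{eq:sigma0}. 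I would then bound the higher-cumulant contributions and the errors from the local-stationarity and Riemann-sum approximations, showing each is $o(1)$, while $\V\hat A\to 0$ follows from a further cumulant-summability estimate at the level of the squared inner products, parallel to the bound for $\hat B$.

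Combining the two limits via Slutsky yields $\hat\sigma_{H_0}^2=\hat A/\hat B^2\overset{p}{\to}A_\infty/B_\infty^2=\sigma_{H_0}^2$. I expect the decisive and most laborious step to be the mean of the numerator: organising the eighth-order moment/cumulant expansion of the pooled fDFTs, identifying exactly which contractions persist in the limit so that the normalisation $2/3$ reproduces \eqref{eq:sigma0}, and controlling the accumulated approximation errors (asymptotic Gaussianity, asymptotic independence of adjacent frequencies, and the block and Riemann-sum approximations) uniformly so that they all vanish after the $1/T$ normalisation.
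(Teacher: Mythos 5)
Your proposal follows essentially the same route as the paper: the paper also reduces the denominator to the convergence already established in the proof of \autoref{thm:con}, and handles the numerator by writing $\E\big[(\langle I_p^{u_j,\omega_k},I_p^{u_j,\omega_{k-1}}\rangle_{HS})^2\big]$ as variance plus squared mean and then counting the surviving indecomposable second-order-cumulant partitions of the resulting eighth-order fDFT array (your ``Gaussian part''), which yields the factor $\tfrac{24}{4}$ that the normalisation $\tfrac{2}{3}$ converts into \eqref{eq:sigma0}, before concluding by the continuous mapping theorem. The only cosmetic difference is that the paper works with the exact cumulant expansion and the bounds of \autoref{cor:cumbound} rather than ``invoking asymptotic Gaussianity,'' but this amounts to the same computation.
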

Consequently, a test for the hypotheses \eqref{h0} and \eqref{h1} can be based upon rejecting the null if
\begin{eqnarray} \label{testwn}
\hat{\A}_{i_1,i_2}~> ~  \frac{\hat{\sigma}^2_{H_0}}{\sqrt{T}} z_{1-\alpha}~,
 \end{eqnarray}
 where $z_{1-\alpha}$ denotes the $({1-\alpha})$-quantile of the standard normal distribution. The finite sample performance of this test is studied in a simulation study which is provided at the end of \autoref{sec4}.
  
 \begin{Remark}
 We emphasize that \autoref{dist_H0} still holds in case the series are dependent but the expression of the asymptotic variance is slightly more involved. It can however still be estimated similar in spirit to \eqref{varesth0}. See Appendix \autoref{sec:dist} for more details.
 \end{Remark}
 
\section{Simulation study}  \label{sec4}

\def\theequation{4.\arabic{equation}}
\setcounter{equation}{0}


To study the performance in finite samples, we apply the algorithm to a mixture of stationary and non-stationary models in a variety of settings. We vary the number of clusters $k$ and the number of observations per cluster $n$ as well as the included models. The algorithm is moreover considered both in the case were the number of true clusters are known and when this is unknown. In the latter case, we obtain an additional source of variability as the number of clusters needs to be estimated from the data. Finally, we consider the effect on the simulations of applying a scaling factor $\eta$ to our similarity matrix.   Before we discuss how to determine $k$, we start by introducing the simulation setting and data-generating processes. 

\subsection{Simulation setting}
 We simulate functional autoregressive and functional moving average via their bais representation as follows. A $p$-th order (time varying) functional autoregressive process (tvFAR(p)), $(X_t, t \in \mathbb Z)$ can be defined as
\begin{equation}
\label{tvFAR}
X_{t} = \sum_{t'=1}^p A_{t,t'}(X_{t-t'}) + \epsilon_t, 
\end{equation}
where $A_{t,1}, \dots, A_{t,p}$ are time-varying autoregressive operators  on $L^2([0,1])$ and $\{\epsilon_t(\tau)\}_{t\in \mathbb{Z}}$ is a sequence of mean zero innovations taking values in $L^2([0,1])$. To generate such processes, let $\{\psi_l\}_{l=1}^{\infty}$ be a Fourier basis of $L^2([0,1])$. By means of a basis expansion, one can show that  \citep[see e.g.,][]{avd16_main}  the first $L_{max}$ coefficients of \eqref{tvFAR} are generated using the $p$-th order vector autoregressive, VAR(p), process
$$\widetilde{X}_t = \sum_{t'=1}^p \widetilde{A}_{t,t'}\widetilde{X}_{t-t'} + \widetilde{\epsilon}_t,$$
where $\widetilde{X}_t := \left(\langle X_t, \psi_1 \rangle, \dots, \langle X_t, \psi_{L_{max}} \rangle \right)^\top$ is the vector of basis coefficients and the $(l,l')$-th entry of $\widetilde{A}_{t,j}$ is given by $\langle A_{t,j}(\psi_l),\psi_{l'}\rangle$ and $\widetilde{\epsilon}_t := \left(\langle \epsilon_t, \psi_1 \rangle, \dots, \langle \epsilon_t, \psi_{L_{max}} \rangle \right)^T$. The entries of the matrix $\widetilde{A}_{t,j}$ are generated as $N\big (0, \nu_{l,l'}^{(t,j)}\big )$ with $\nu_{l,l'}^{(t,j)}$ specified below. To ensure stationarity or existence of a causal solution the norms $\kappa_{t,j}$ of $A_{t,j}$ are required to satisfy certain conditions (see \cite{bosq2000} for stationary and \cite{vde16} for  local stationary time series, respectively). We also consider the following time varying functional moving-average process or order $1$:
\begin{equation}
\label{tvFMA}
X_{t,T}= B_1(\epsilon_t)- \frac{1}{2}\left(1 + b \cos\left(2\pi \frac{t}{T}\right)\right)B_2(\epsilon_{t-1})
\end{equation}
where $B_1$ and $B_2$ are bounded linear operators on $L^2([0,1])$ and $b \in \mathbb{R}$. Similarly as above, we use a basis expansion and generate data from the model
$$\widetilde{X}_{t,T} = \widetilde{B}_1\widetilde{\epsilon}_t - \frac{1}{2}\left(1 + b \cos\left(2\pi \frac{t}{T}\right)\right)\widetilde{B}_2\widetilde{\epsilon}_{t-1}$$
where $\widetilde{X}_{t,T} = \left(\langle X_{t,T}, \psi_1 \rangle, \dots, \langle X_{t,T}, \psi_{L_{max}} \rangle \right)^T$ is the vector of basis coefficients, the $(l,l')$-th entry of $\widetilde{B}_1$ and $\widetilde{B}_2$ are given by $\langle B_1(\psi_l),\psi_{l'}\rangle$ and $\langle B_2(\psi_l),\psi_{l'}\rangle$ respectively and $\widetilde{\epsilon}_t$ is as above.
 
we consider the following data generating processes:
\begin{itemize}
\item [(I)] The functional white noise variables $\epsilon_1, \dots, \epsilon_T$ i.i.d. with coefficient variances $\text{Var}(\langle \epsilon_t, \psi_l \rangle) = \exp(-(l-1)/10)$.
\item[(II)] A FAR(2) variables $X_1, \dots, X_T$ with operators specified by variances $\nu_{l,l'}^{(1)} = \exp(-l-l')$ and $\nu_{l,l'}^{(2)} = 1/(l+l'^{3/2})$ with norms $\kappa_1 = 0.75$ and $\kappa_2 = -0.4$ and the innovations  $\epsilon_1, \dots, \epsilon_T$ are as in (I).
\item[(III)] A MA(1) with $b=0$ and operators specified by variances $\nu_{l,l'}^{(1)} = \exp(-l-l')$.
\item [(IV)] A tvFAR(1) with operator specified by variances $\nu_{l,l'}^{(t,1)} = \nu_{l,l'}^{(1)} = \exp(-l-l')$ and norm $\kappa_1 = 0.8$, and innovations are as in (I) but with a multiplicative time-varying variance
$$\sigma^2(t) = \cos\Big(\frac{1}{2} + \cos\big(\frac{2\pi t}{T}\big) + 0.3 \sin\big(\frac{2\pi t}{T}\big)\Big).$$
\item[(V)] A tvFAR(2) with operators as in (IV), but with time-varying norm
$$\kappa_{1,t}=1.8 \cos\left( 1.5 - \cos\left(\frac{4\pi t}{T}\right)\right)$$
and constant norm $\kappa_2 = -0.81$ and innovations are as in (I).
\item[(VI)] A FAR(2) with structural break;
\begin{itemize}
\item for $t \leq 3T/8$, the operators are as in (II) with norms $\kappa_1 = 0.7$ and $\kappa_2 = 0.2$, with innovations as in (I).
\item for $t > 3T/8$, the operators are as in (II) with norms $\kappa_1 = 0$ and $\kappa_2 = -0.2$, with innovations as in (I) but with  coefficient variances $\text{Var}(\langle \epsilon_t, \psi_l \rangle) = 2\exp((l-1)/10)$.
\end{itemize}
\end{itemize}


Our simulation consists of the following settings:
\begin{enumerate}
\item[]{\it Setting 1:} $k=3$ with models I, II and III, where the replications per cluster are taken $n=10$ and $n=30$
\item[]{\it Setting 2:} $k=3$ with models IV, V and VI, where the replications per cluster are taken $n=10$ and $n=30$
\item[]{\it Setting 3:} $k=6$ with models I-VI, where the replications per cluster are taken $n=10$ and $n=30$ and $n=50$.
\end{enumerate}
Per set-up, we run 500 simulations for both $T=256$ wit $M=8$ and $T=512$ with  $M=16$. 

For the choice of $k$  we investigated  a subset of well-known classical criteria that have been demonstrated to work well in aforementioned comparison studies on classical clustering: the Silhouette Index \citep{KR1990}, the CH Index\citep{CH1974},  the Hartigan Index \citep{Hart1975} and the KL Index \citep{KL1988}. We respectively refer to these in the tables as `sil' ,`ch', `hartigan' and `kl'. Because these cannot be applied to the spectral embedding directly, the respective criteria were constructed using the similarity graph $\hat{W}$. 

Additionally, we considered two variations of the eigengap heuristic each with a different interpretation of `relatively large' eigengap.
Let $0=\hat \lambda_1 \le \ldots \le \hat{\lambda}_d$ be the estimated eigenvalues of $\hat L$ in ascending order
\begin{enumerate}
\item `Relgap':  define the relative contribution of the $k$-th eigenvalue as
\[
\rho_k=\frac{\big(\hat{\lambda}_{k} - \hat{\lambda}_{k-1}\big)}{\hat{\lambda}_k}.
\]
Then the rule is to pick $k^\star$ as the largest $k$ for which the $k$-th contribution is still larger than a threshold value that is allowed to depend on the scaling parameter $\eta$ of the graph (see \eqref{eq:Weta} below).
\[
k^\star =\max_k \big\{k \in \{1,\ldots, k_{\text{max}}\}: \rho_k  \le 0.01\eta \big\}
\]
\item `sd1gap': let
\[
\sigma(\hat{\lambda}_{[-1:k]}) =\frac{1}{d-k}\sum_{j =k+1}^{d} \big(\hat{\lambda}_j -\frac{1}{d-k}\sum_{j=k+1}^d \hat{\lambda}_j\big)^2
\]
be the squared deviation from the mean excluding the first $k$ eigenvalues. Then the rule is to pick $k^\star$ as the largest $k$ for which the $k$-th gap is still larger than $\sigma(\hat{\lambda}_{[-1:k]}) $, i.e.,
\[
k^\star =\max_k \big\{k \in \{1,\ldots, k_{\text{max}}\}: \hat \lambda_{k+1} -\hat{\lambda}_k \ge \sigma(\hat{\lambda}_{[-1:k]})  \big\}
\]
\end{enumerate}
For all criteria, the maximum numbers of clusters to consider was set to $k_{\max}=15$.
\subsection{Simulation results}

 \autoref{Simk} provides the average $k$ chosen according to the different criteria in each setting, while the corresponding percentage of misclustered points averaged over simulations are given in \autoref{Sim}. 

From the first row for $T=256$ and $T=512$ of \autoref{Sim}, we find that our algorithm does very well if the true $k$ is known as it has a very low percentage of misclustered points across the different settings. Based upon the percentage of misclustered points, the most difficult model is clearly setting 2 with $T=256$. This finding is in accordance with the upper bound in \eqref{miscl}, which is less tight for larger $k$ and lower estimation precision on $\hat{L}$. 

 If the true $k$ is not known we obtain higher percentages of misclustered points, where the percentages appear to be directly caused by the way $k$ was selected. As can be seen from \autoref{Simk}, the CH Index does best in determining the true number of clusters, while the KL index does worst when the number of true clusters increases. Both the Silhouette Index and the Hartigan Index tend to pick $k$ more conservatively. The two rules based upon the estimated eigenvalues of the graph Laplacian --`Relgap' and `sd1gap'-- appear competitive with the CH index, except in setting 2 for $n=10$. What is in particular clear is the overall improvement as both $n$ and $T$ increase.  It appears that in particular the eigenvalue-based methods suffer from more variation when $n$ and $T$ are small. This is intuitive, since the choice of $k$ directly depends upon the estimation precision of the spectral properties, which can be expected to be more sensitive to estimation error for small $T$ and $n$  (see also \autoref{thm:miscl}) . 
From the results in table \autoref{Sim}, we find the CH index to work best in combination with our algorithm. It is most stable across the different settings and has the lowest percentage of misclustered points, which appears to be a direct consequence of the fact that this index estimated the true number of clusters best and that our algorithm suffers from low error conditionally upon knowing the correct number of clusters. 

\begin{table*}[h!]
\begin{center}
\caption{\it chosen k per method averaged over simulations (standard deviation in brackets)} 
\label{Simk} 
\begin{tabu}{$l|^l^l ^l^l ^l^l ^l}
\hline
&  \multicolumn{2}{c} {Setting 1} & \multicolumn{2}{c} {Setting 2} & \multicolumn{3}{c} {Setting 3}\\
\cmidrule(lr){2-3} \cmidrule(lr){4-5} \cmidrule(lr){6-8}
method & $n=10$ &$n=30$  &$n=10$ &$n=30$  &$n=10$ &$n=30$ & $n=50$\\
\hline
\multicolumn{8}{c} {$T=256$}\\
\hline
true $k$ & 3  & 3 & 3   & 3  & 6  & 6 &6 \\ 
\hline
sil & 2  (0) & 2  (0) & 3  (0.1) & 3  (0) & 5.34  (0.8) & 5.24  (0.5)& 5.17 (0.4)  \\ 
ch & 3  (0.5) & 2.96  (0.2) & 3.05  (0.2) & 3  (0) & 6.37  (0.6) & 6.08  (0.3)  & 6.03 (0.2)\\ 
  kl & 2.19  (1.4) & 2.04  (0.7) & 3.09  (1) & 3.22  (1.4) & 10.4  (3.1) & 11.13  (3.5) & 10.79 (3.6)\\ 
  hart & 3.04  (0.2) & 3  (0) & 3  (0) & 3  (0) & 4.86  (0.9) & 4.87  (0.7)& 4.87 (0.6)  \\ 
  Relgap & 3.04 (0.2) & 3 (0) & 3.92 (1.1) & 3.09 (0.3) & 5.14 (0.4) & 5.15 (0.4) &5.19 (0.4)\\ 
 sd1gap & 3.09 (0.3) & 3.03 (0.2) & 3.83 (1.2) & 3.55 (1) & 5.65 (0.9) & 6.23 (0.7) & 6.41 (0.9)\\ 

\hline
\multicolumn{8}{c} {$T=512$}\\
\hline
 true $k$ & 3 & 3 & 3 & 3 & 6 & 6  &6\\ 
\hline
sil & 2  (0) & 2  (0) & 3  (0) & 3  (0) & 5.92  (0.4) & 5.95  (0.2) & 5.98 (0.1) \\ 
ch & 3.13  (0.4) & 3  (0) & 3  (0) & 3  (0) & 6.52  (0.7) & 6.16  (0.4)& 6.08 (0.3) \\ 
  kl & 2.44  (1.7) & 2.18  (1) & 3.05  (0.7) & 3.43  (2.1) & 9.87  (3.3) & 10.45  (3.5) & 9.47 (3.5)\\ 
  hart & 3  (0) & 3  (0) & 3  (0) & 3  (0) & 5.28  (0.7) & 5.4  (0.6) & 5.43 (0.5)\\ 
Relgap & 3 (0) & 3 (0) & 3.57 (0.8) & 3.03 (0.2) & 5.53 (0.6) & 5.88 (0.3)& 5.99 (0.1) \\ 
  sd1gap & 3.09 (0.4) & 3.01 (0.1) & 3.92 (1.2) & 3.4 (0.8) & 6.38 (0.9) & 6.3 (0.6) & 6.2 (0.5)\\ 

\hline
\end{tabu}
\end{center}
\end{table*}

\begin{table*}[h!]
\begin{center}
\caption{\it percentage of misclustered points of the spectral clustering algorithm averaged over simulations (standard deviation in brackets} 
\label{Sim} 
\hspace*{-20pt}
\begin{tabu}{$l|^l^l ^l^l ^l^l ^l}
\hline
&  \multicolumn{2}{c} {Setting 1} & \multicolumn{2}{c} {Setting 2} & \multicolumn{3}{c} {Setting 3}\\
\cmidrule(lr){2-3} \cmidrule(lr){4-5} \cmidrule(lr){6-8}
method & $n=10$ &$n=30$  &$n=10$ &$n=30$  &$n=10$ &$n=30$ & $n=50$ \\
\hline
\multicolumn{8}{c} {$T=256$}\\
\hline
  true & 0.11 (0.6) & 0.05 (0.2) & 0.01 (0.1) & 0 (0) & 3.22 (6.6) & 0.3 (1.1) & 0.13 (0.2)  \\ 
\hline
sil & 33.33 (0) & 33.33 (0) & 0.03 (0.4) & 0 (0) & 13.16 (10.2) & 12.97 (8.2)& 13.87 (6.9)  \\ 
 ch & 4.65 (10.4) & 1.41 (6.6) & 0.49 (2.6) & 0 (0) & 5.11 (6.8) & 0.82 (2.2)& 0.31 (1.1)  \\ 
  kl & 33.3 (6.4) & 33.48 (2.4) & 0.55 (6) & 1.45 (9.4) & 25.6 (14.7) & 25.93 (15.6)& 22.59 (14.4) \\ 
  hart & 0.61 (2.6) & 0.05 (0.2) & 0.01 (0.1) & 0 (0) & 19.76 (14.6) & 18.93 (12.5)& 19.02 (10.7) \\ 
  Relgap & 0.57 (2.4) & 0.05 (0.2) & 10.78 (11.9) & 1.27 (4.2) & 15.75 (5.6) & 14.49 (6.3) & 13.68 (6.7)\\ 
  sd1gap & 1.33 (4) & 0.46 (2.3) & 9.11 (12.5) & 6.55 (10.9) & 12.33 (7.5) & 3.02 (5.5) & 2.79 (5.2)\\ 
\hline
\multicolumn{8}{c} {$T=512$}\\
\hline
   true & 0 (0) & 0 (0) & 0 (0) & 0 (0) & 0.23 (1.7) & 0.01 (0.1) & 0 (0)\\  
\hline
sil & 33.33 (0) & 33.33 (0) & 0.03 (0.4) & 0 (0) & 2.68 (5.8) & 0.77 (3.5) & 0.37 (2.4)  \\ 
  ch & 1.67 (4.6) & 0 (0) & 0.44 (2.1) & 0 (0) & 3.42 (4.4) & 1.14 (2.6)  & 0.61 (2)  \\ 
  kl & 27.97 (14.5) & 30.63 (10.2) & 1.68 (10.3) & 2.59 (12.5) & 20.93 (16.8) & 21.37 (16.3) & 15.56 (14.4) \\ 
  hart & 0 (0) & 0 (0) & 0 (0) & 0 (0) & 12.25 (12.1) & 9.98 (9.4) & 9.57 (8.7)\\ 
 Relgap & 0.02 (0.4) & 0 (0) & 6.92 (9.7) & 0.39 (2.2) & 8.73 (8.4) & 2 (5.4) & 0.17 (1.7) \\ 
  sd1gap & 1.13 (4.4) & 0.14 (1.4) & 10.16 (12.6) & 4.99 (9.7) & 4.15 (6) & 1.92 (3.9) & 1.42 (3.4)\\ 
\hline
\end{tabu}
\end{center}
\end{table*}

As explained in \autoref{sec2}, we apply a local scaling to avoid sensitivity on a global scaling parameter. To investigate the robustness with respect to scaling, we consider our clustering algorithm as explained in \autoref{sec3} but with an additional scaling parameter $\eta$ in the construction of the adjacency matrix. That is, we consider the simulations but with
\begin{align} \label{eq:Weta}
\hat{W}=e^{-\eta \hat{\A}} \in  \mathbb{R}^{d \times d}.
\end{align}
where $\eta=\{0.5,2.5,5,10\}$. The results in \autoref{Sim} correspond to the choice $\eta=1$ and in \autoref{Simset1} we present the four alternative choices. Because of space constraints, we only report them for the specification $T=512$ and $n=30$. The results for the other cases show a very similar picture and are available upon request. It can be observed from the first row for each of the settings that the outcomes are fairly similar when $k$ is known. Variation again therefore seems mostly caused by the way $k$ is chosen, where we find the Silhouette index is sensitive as well as the KL index, but also the eigengap heuristics show some sensitivity for $\eta=10$.  Overall, we may conclude that our method seems capable to detect the correct number of clusters, despite the highly complicated nature of the data. The numerical study moreover suggests that the CH index should be used to find the numbers of clusters if these are unknown.
\begin{table*}[h!]
\begin{center}
\caption{\it Percentage of misclustered points and specified $k$ for various values of $\eta$ averaged over simulations (standard deviation in brackets) for $T=512$, $n=30$} 
\label{Simset1} 
\hspace*{-20pt}\small{
\begin{tabu}{l|^l^l^l^l @{\hspace{1cm}}| ^l^l^l^l^l}
\hline
meth.& \multicolumn{4}{c}{\% of misclustered points}  & \multicolumn{4}{c}{average chosen $k$}\\
& $\eta=0.5$ & $\eta=2.5$ & $\eta=5$  &  $\eta=10$  &  $\eta=0.5$ & $\eta=2.5$ & $\eta=5$  &  $\eta=10$   \\ 
\cmidrule(lr){2-5} \cmidrule(lr){6-9}
   \multicolumn{9}{c} {Setting 1: $T=512, n=30$}\\
     \hline
 true & 0(0) & 0(0) & 0(0) & 0(0) & 3(0) & 3(0) & 3(0) & 3(0) \\ 
\hline
  sil & 33.33(0) & 33.33(0) & 2.13(8.2) & 0(0) & 2(0) & 2(0) & 2(0) & 2.94(0.2) \\ 
  ch & 0.03(0.6) & 0(0) & 0(0) & 0(0) & 3(0) & 3(0) & 3(0) & 3(0) \\ 
  kl & 33.02(5.1) & 2.8(9.3) & 0(0) & 0.27(4.3) & 2.08(0.9) & 2.18(1) & 2.92(0.3) & 3(0) \\ 
  hart & 0(0) & 0(0) & 0(0) & 0(0) & 3(0) & 3(0) & 3(0) & 3(0) \\ 
  Relgap & 0(0) & 0.02(0.5) & 0.02(0.5) & 0.03(0.5) & 3(0) & 3(0) & 3(0) & 3(0) \\ 
  sd1gap & 0.05(0.8) & 0.88(3.5) & 1.08(3.9) & 0.6(2.7) & 3(0.1) & 3.01(0.1) & 3.07(0.3) & 3.09(0.3) \\ 
  \hline
    \multicolumn{9}{c} {Setting 2: $T=512, n=30$}\\ \hline
   true & 0(0) & 0(0) & 0(0) & 0(0) & 3(0) & 3(0) & 3(0) & 3(0) \\ 
  \hline
  sil & 0(0) & 0(0) & 0(0) & 0(0) & 3(0) & 3(0) & 3(0) & 3(0) \\ 
  ch & 0(0) & 0(0) & 0(0) & 0(0) & 3(0) & 3(0) & 3(0) & 3(0) \\ 
  kl & 0.35(4.6) & 2.44(11.2) & 1.58(9.4) & 0.36(4.6) & 3.05(0.7) & 3.43(2.1) & 3.43(2) & 3.29(1.7) \\ 
  hart & 0(0) & 0(0) & 0(0) & 0(0) & 3(0) & 3(0) & 3(0) & 3(0) \\ 
  Relgap & 0.1(1.2) & 2.28(5.8) & 6.66(9.3) & 7.14(8.5) & 3.01(0.1) & 3.03(0.2) & 3.18(0.5) & 3.59(0.9) \\ 
  sd1gap & 4.75(9.5) & 5.48(9.7) & 4.58(8.7) & 2.89(5.9) & 3.38(0.8) & 3.4(0.8) & 3.45(0.9) & 3.42(0.8) \\ 
  \hline
    \multicolumn{9}{c} {Setting 3: $T=512, n=30$}\\ \hline
  true & 0.01(0.1) & 0(0) & 0.04(0.9) & 0.06(0.9) & 6(0) & 6(0) & 6(0) & 6(0) \\ 
  \hline
  sil & 1.03(3.9) & 0.03(0.7) & 0.04(0.9) & 0.06(0.9) & 5.95(0.2) & 5.95(0.2) & 6(0) & 6(0) \\ 
  ch & 1.65(3.2) & 0.29(1.4) & 0.04(0.9) & 0.06(0.9) & 6.23(0.5) & 6.16(0.4) & 6.04(0.2) & 6(0) \\ 
  kl & 22.72(16.4) & 13.94(15.5) & 7.4(12.2) & 4.56(10.5) & 10.54(3.4) & 10.45(3.5) & 9.2(3.6) & 7.84(3.1) \\ 
  hart & 11.48(8) & 7.17(13.5) & 0.04(0.9) & 0.06(0.9) & 5.31(0.5) & 5.4(0.6) & 5.57(0.8) & 6(0) \\ 
  Relgap & 7.17(8.3) & 0.57(2.2) & 3.18(3.9) & 3.93(4.4) & 5.57(0.5) & 5.88(0.3) & 6.06(0.3) & 6.57(0.8) \\ 
  sd1gap & 1.39(3.4) & 4.5(5.1) & 6.02(5.6) & 4.18(5.1) & 6.21(0.5) & 6.3(0.6) & 6.77(1) & 7.2(1.3) \\ 
  \hline
\end{tabu}}
\end{center}
\end{table*}

\subsection{Testing for equality}
We conclude this section with a small investigation of the proposed asymptotic $\alpha$-level test in \eqref{testwn} for the hypothesis of equality of  (possibly) time-varying spectral density operators. To investigate the finite samples properties of the test, we performed a simulation study which includes the previously defined stationary models I and II and nonstationary models V and VI with parameter specification $T=512$ and $M=16$. The pairwise rejection probabilities at the $5\%$ and $10\%$ over $1000$ replications are provided in \autoref{tab:testeqrp}, where the diagonal elements correspond to the null hypothesis. We observe a good 
approximation of the nominal level with model II a bit undersized. The off-diagonal shows good power overall, with model II and model I appearing to be more difficult to distinguish than the other choices. Regardless of the second-order properties being time-varying or not, it appears therefore that the quantiles of the normal distribution are well-captured for the various models if $H_0$ is true, while good power is observed under $H_A$. 
\begin{table}[ht]
\centering
\begin{tabular}{rrrrr}
  \hline
 & I & II & V & VI \\ 
  \hline
I & 10.8 & 96.6 & 100.0 & 100.0 \\ 
II  & 96.6 & 8.0 & 99.9 & 100.0 \\ 
 V  & 100.0 & 99.9 & 9.8 & 100.0 \\ 
 VI & 100.0 & 100.0 & 100.0 & 10.0 \\ 
   \hline
\end{tabular}
\hspace{30pt}
\begin{tabular}{rrrrr}
  \hline
 & I & II & V & VI \\ 
  \hline
I & 5.4 & 93.1 & 100.0 & 100.0 \\ 
II & 93.1 & 3.2 & 99.7 & 100.0 \\ 
V & 100.0 & 99.7 & 3.6 & 100.0 \\ 
VI & 100.0 & 100.0 & 100.0 & 4.7 \\ 
   \hline
\end{tabular}
\caption{\it Rejection probabilities of the pairwise equality test \eqref{testwn} at the 10\% (left); and 5\% level.}
\label{tab:testeqrp}
\end{table}

\FloatBarrier

\section{Data Application}\label{sec6}

\def\theequation{6.\arabic{equation}}
\setcounter{equation}{0}
\setcounter{figure}{0}

We illustrate our clustering algorithm by means of an application to high resolution infrared surface temperature measurements recorded during the years 2015-2018 in the conterminous U.S., Alaska and Hawaii. The exact locations of the 126 included stations in our study are depicted in \autoref{fig:surftemp}. 
\begin{figure}[h!]
    \vspace{-35pt}
    \centering
         \includegraphics[width=0.65\textwidth]{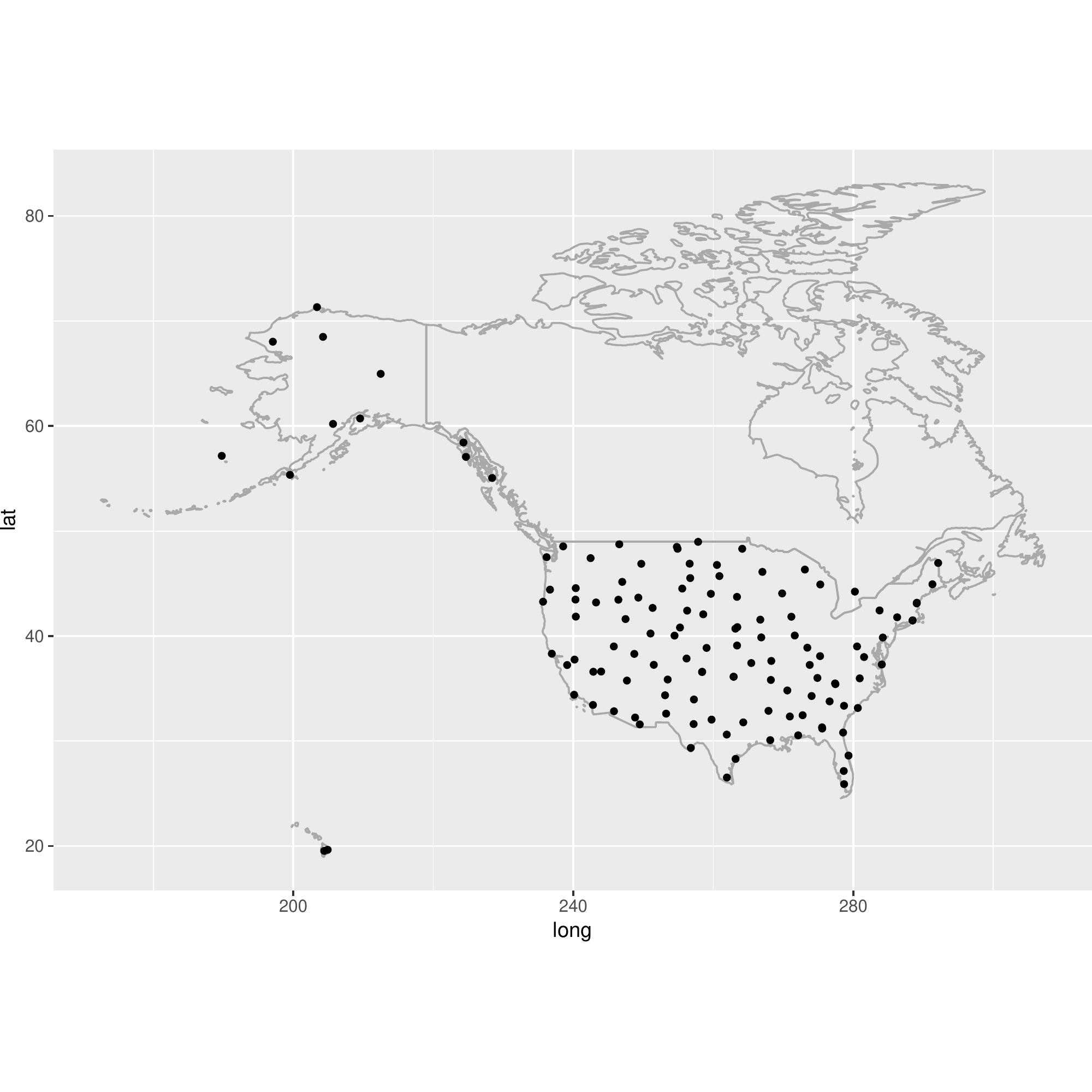}%
    \vspace{-35pt}
   \caption[]{\it  \small Measurement stations projected upon the map of North America acording to their longitude (long) and latitude (lat) coordinates.}. 
    \label{fig:surftemp}
\end{figure}

The measurements are part of a quality controlled dataset \citep{Diam13} monitored within the United States Climate Reference Network (USCRN). These are publicly available on the website of the NOAA U.S. government agency. \\
Understanding the characteristics of land surface temperature is of interest in various applications as it influences for example surface energy balance and various soil hydrologic processes. Our goal is to investigate whether we can identify spatial or geographical patterns in terms of variation in the surface temperature. Variation in surface temperature can be expected to be influenced by numerous factors such as wind, elevation, proximity of major water bodies or yet land cover types. 
For each location $i=1,\ldots, 126$, we have of a sample of observations $\{X_{i,t}(\tau_{1}),\ldots, X_{i,t}(\tau_{\text{max}}) \}_{t=1,\ldots, 1288}$, where $X_{i,t}$ represents the sampled temperature curve on day $t$ for location $i$. The first measurement day $t=1$ corresponds to February 28 2015 and the last measurement day $t=T=1288$ corresponds to September 7 2018. The recordings were made at a 5 minute time-interval leading to a maximum of $288$ observations per day. We remark that the measurement stations included have at most 10\% of missing observations per day over the measurement period. The gridded observations were transformed into functional objects using 21 Fourier basis functions, rescaled to the unit interval, i.e., $\tau_1$ corresponds to local time 0.00am and $\tau_{288}=\tau_{max}$ to local time 23.55pm. A few sets of realized curves are depicted in \autoref{fig:FTStemp}. 
   \setcounter{figure}{2}
 \begin{figure}[h!]
 \hspace*{-25pt}
    \centering
    {\renewcommand{\arraystretch}{0}
    \begin{tabu}{c@{}c}
   \begin{subfigure}[t]{0.35\columnwidth}
        \centering
        \includegraphics[width=1\textwidth]{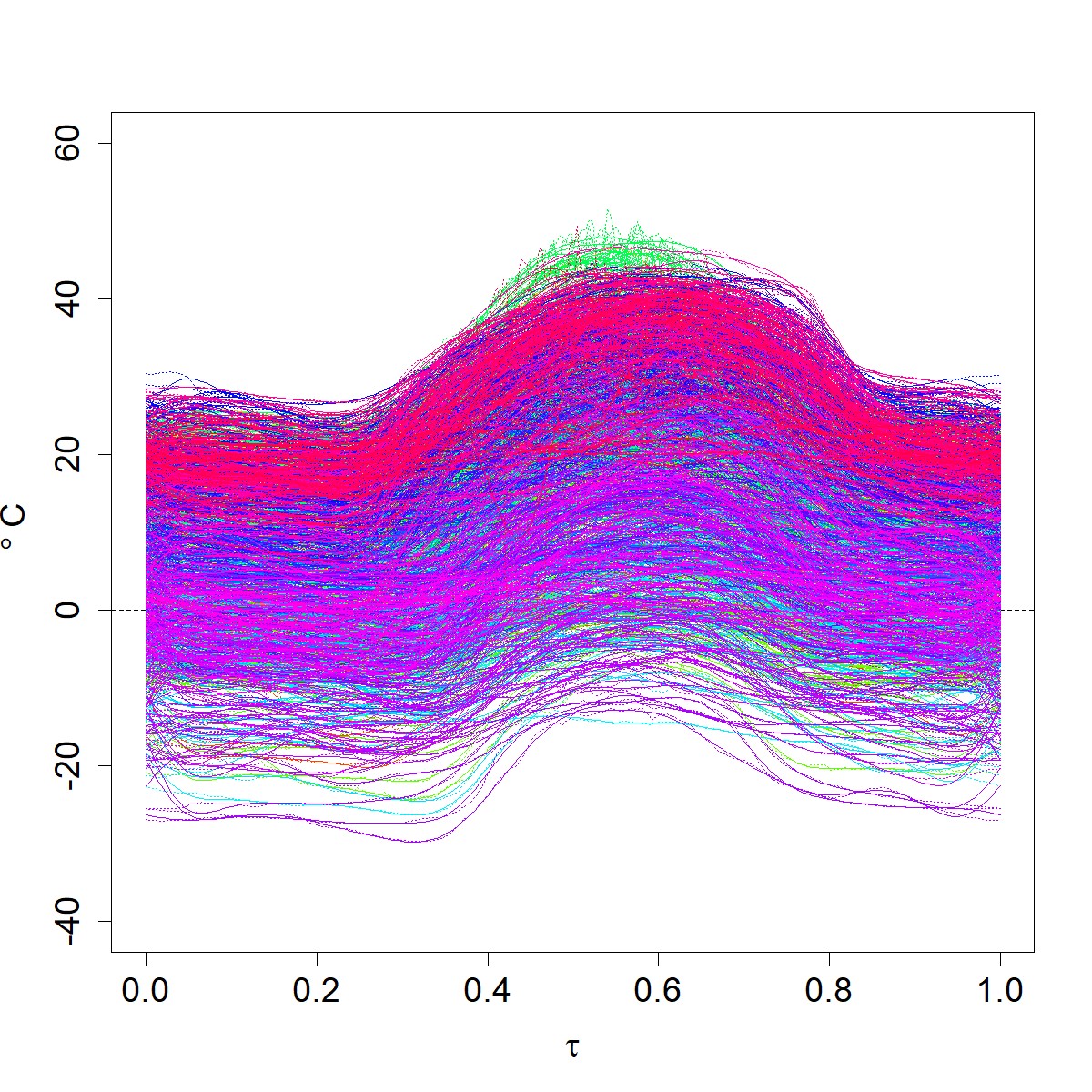}%
            \vspace{-5pt}
         \caption{\it {Lincoln, Nebraska}}       
       \end{subfigure}\hfill
     \begin{subfigure}[t]{0.35\columnwidth}
        \centering
          \includegraphics[width=1\textwidth]{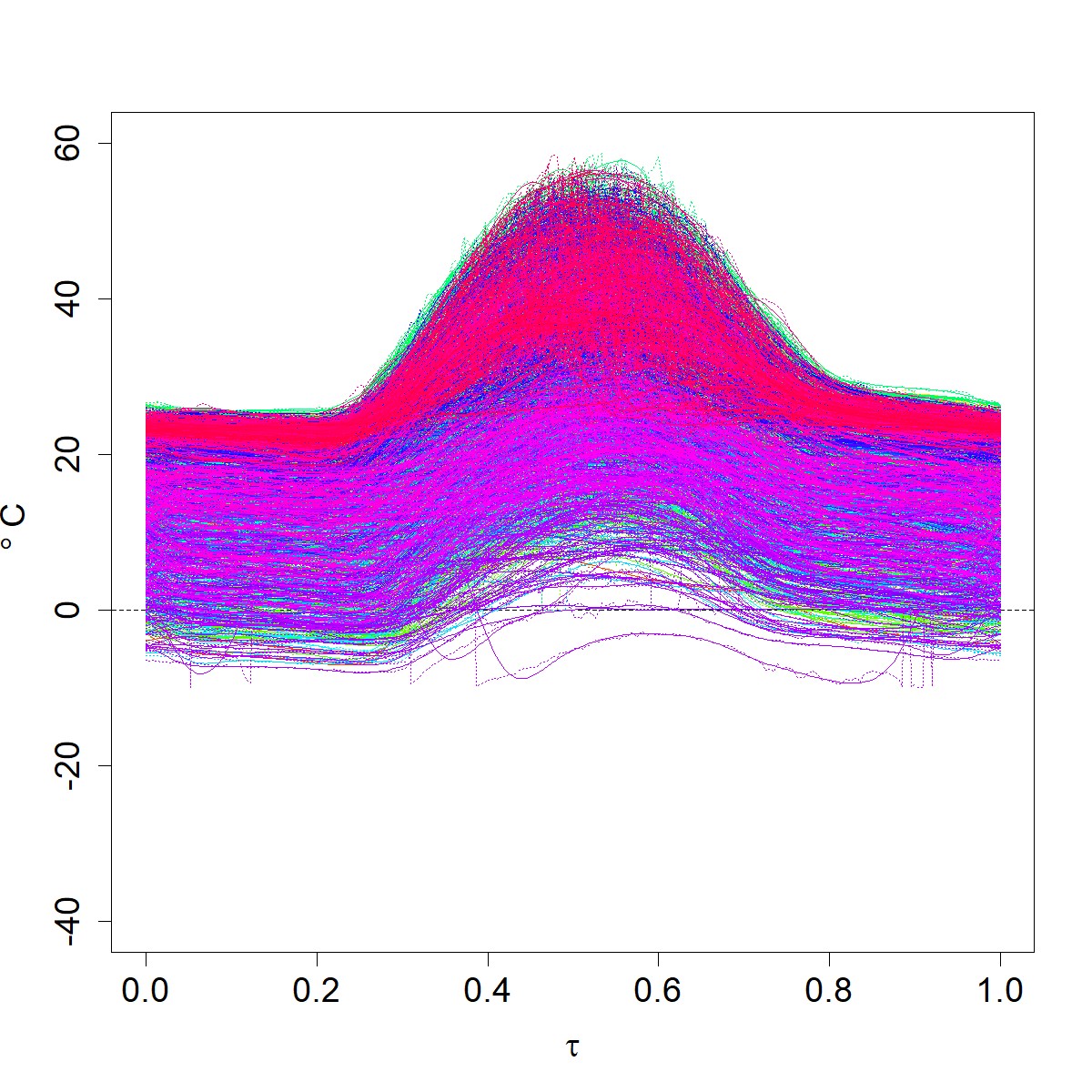}%
       \vspace{-5pt}
        \caption{\it {Selma, Alabama}}
    \end{subfigure}
      \begin{subfigure}[t]{0.35\columnwidth}
        \centering
        \includegraphics[width=1\textwidth]{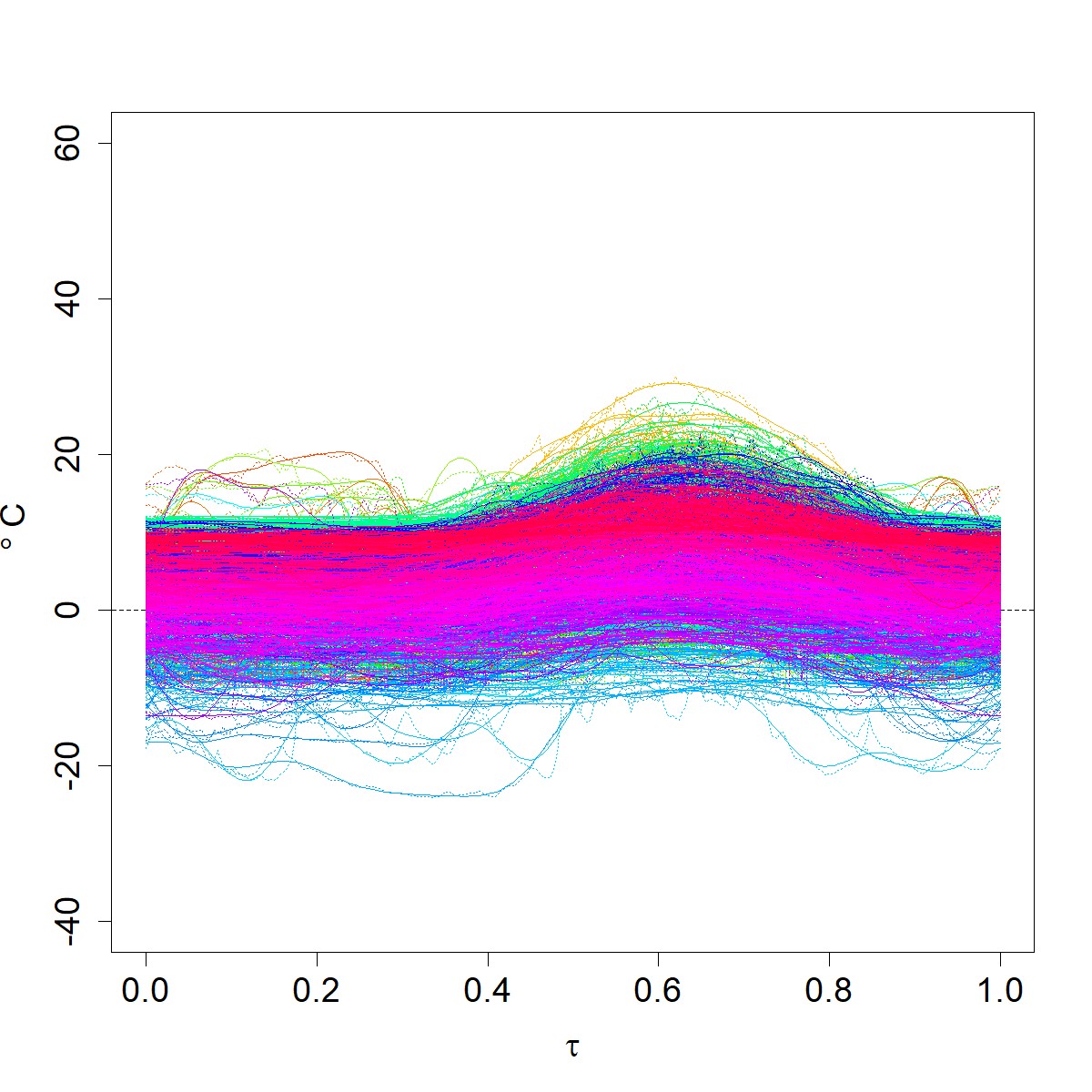}%
           \vspace{-5pt}
         \caption{\it {St. Paul, Alaska}}
       \end{subfigure}
    \end{tabu}}
    \caption[\it  Daily surface temperature curves (solid) fitted to the observations (dotted) for the period 28-Feb 2015-07 Sept 2018. ]
    {\it \small Daily surface temperature curves (solid) fitted to the observations (dotted) for the period 28-Feb. 2015-07 Sept. 2018.} 
    \label{fig:FTStemp}
\end{figure}

It can be observed there is more between-curves variation for Lincoln, Nebraska while there is more within-curve variation for Selma, Alabama. For St. Paul, the more visible variety in colors indicate a stronger variation over longer periods of time. 
To cluster the locations, we factorize the 1288 daily curves into $M=14$ complete seasons, each of length $N=92$. We applied our algorithm described in \autoref{sec3}. For the choice of $k$, we looked at the various criteria discussed in the previous section. These indicated 2 to 4 clusters, with the majority picking 3 clusters. To choose the number of clusters, we used the index that performed best in the simulation study, which was the CH index and which indicates 3 clusters. The results of clustering the locations in 3 clusters by means of our algorithm are given in \autoref{fig:surftemp2}.  The figure seems to indicate the seasonal variation for two clusters is related to two prevailing wind patterns; those in red to the westerlies wind, while those in green to northeasterly trade wind. We moreover observe a separate cluster consisting of mainly locations on the coast in a bay area, island or on a peninsula. In particular, these locations correspond to areas known to be more affected by different atmospheric pressures which are responsible for forming cyclones and that can vary with the El Ni{\~n}o-Southern Oscillations cycle. It is worth remarking that a similar pattern was observed when different time periods were considered.

\setcounter{figure}{2}
\begin{figure}[h!]
    \vspace{-35pt}
    \centering
         \includegraphics[width=0.65\textwidth]{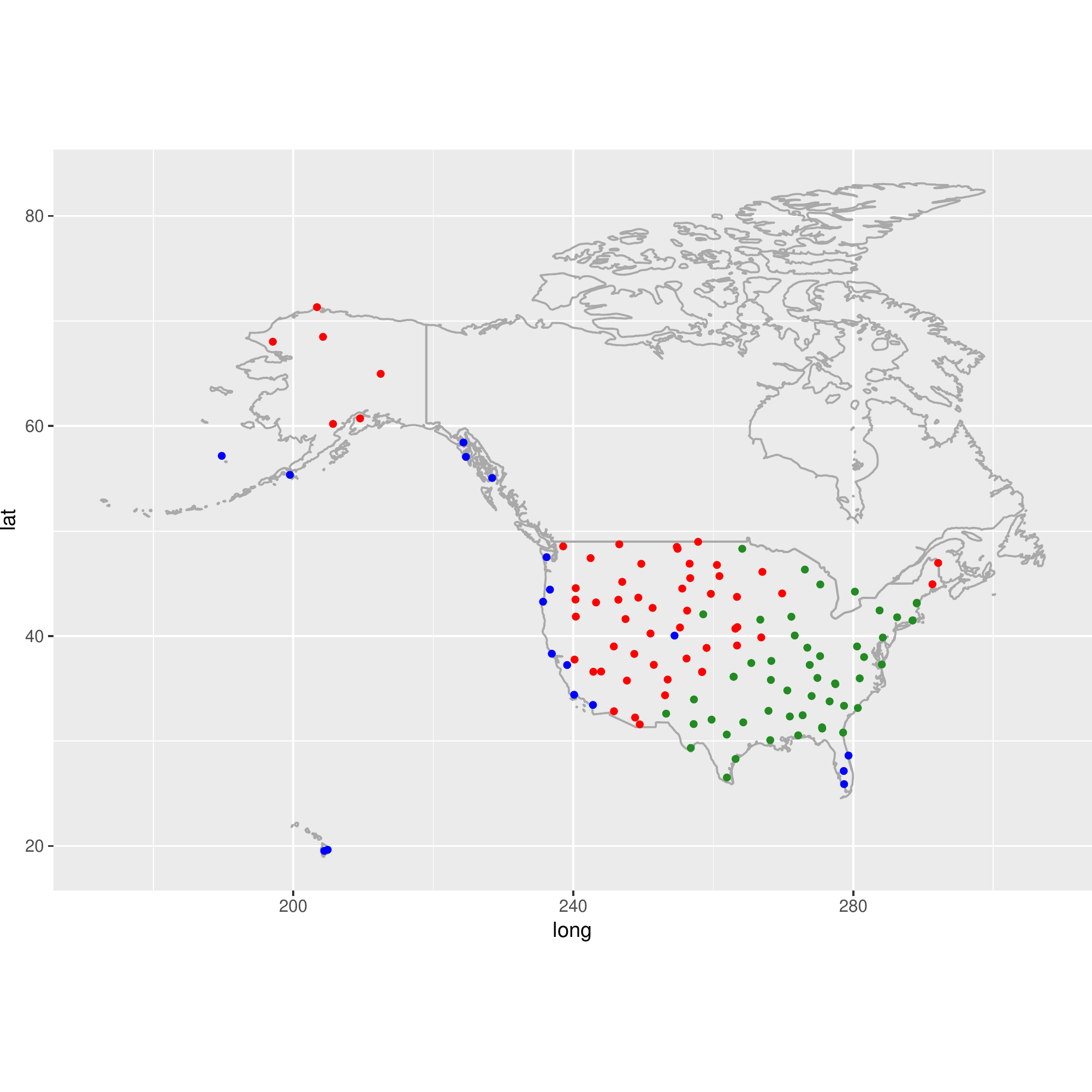}%
    \vspace{-35pt}
   \caption[]{\it  \small Measurement stations clustered according to variation in surface temperature}. 
    \label{fig:surftemp2}
\end{figure}

\FloatBarrier

{\bf Acknowledgements} This work has been supported in part by the
Collaborative Research Center ``Statistical modeling of nonlinear
dynamic processes'' (SFB 823, Teilprojekt A1, A7, C1) of the German Research Foundation
(DFG).  
 
\begin{appendices}   
\def\theequation{6.\arabic{equation}}
\setcounter{equation}{1}

\section{Background material and auxiliary results} \label{ap:background}

Let $\mathcal{H}_i$ for each $i=1,\ldots,n$ be a Hilbert space with inner product $\langle \cdot, \cdot \rangle$. The tensor of these is denoted by
\[
\mathcal{H}:= \mathcal{H}_1 \otimes \ldots \otimes \mathcal{H}_n = \bigotimes_{i=1}^n \mathcal{H}_i
\]

If $\mathcal{H}_i =\mathcal{H}\, \forall i$, then this is the n-th fold tensor product of $\mathcal{H}$. For $A_i \in H_i, 1\le i\le n$ the object $\bigotimes_{i=1}^n A_i$ is a multi-antilinear functional that generates a linear manifold, the usual algebraic tensor product of vector spaces $\mathcal{H}_i$, to which the scalar product  
\[
\innprod{\bigotimes_{i=1}^n A_i}{\bigotimes_{i=1}^n  B_i} = \prod_{i=1}^n\innprod{A_i}{B_i}
\]
can be extended to a pre-Hilbert space. The completion of the above algebraic tensor product is $\bigotimes_{i=1}^n \mathcal{H}_i$. \\
For $A, B, C \in \mathcal L(\mathcal{H})$ we define the following bounded linear mappings. The kronecker product is defined as $(A \widetilde{\bigotimes} B)C = ACB^{\dagger}$, while the transpose Kronecker product is given by $(A \widetilde{\bigotimes}_{\top} B)C = (A \widetilde{\bigotimes} \overline{B})\overline{C}^{\dagger}$. For $A, B, C \in S_2(\mathcal{H})$, we shall denote, in analogy to elements $a,b \in \mathcal{H}$, the Hilbert tensor product as $A \bigotimes B$. We list the following useful properties:
\begin{properties} \label{tensorprop}
Let $\mathcal{H}_i = L^2_\mathbb{C}([0,1]^k)$ for $i = 1,\ldots, n$. Then for $a_i, b_i \in \mathcal{H}_i$ and $A_i, B_i \in S_2(\mathcal{H}_i)$, we have 
\begin{enumerate}
\item $\innprod{A}{B}_{HS} =\Tr({A}{B}^{\dagger})$
\item $\innprod{\bigotimes_{i=1}^n A_i}{\bigotimes_{i=1}^n  B_i}_{{HS}} = \prod_{i=1}^n\innprod{A_i}{B_i}_{HS}$ 
\item $\innprod{a_1\otimes a_2}{b_1\otimes b_2}_{HS} =\innprod{a_1\otimes \overline{a}_2}{b_1\otimes \overline{b_2}}_{\mathcal{H}_i\otimes \mathcal{H}_i} =\innprod{a_1}{b_1}\overline{\innprod{a_2}{b_2}}$
\item If $A_i \in S_1(H)$, then $\prod_{i=1}^{n}\Tr(A_i) = \Tr(\widetilde{\bigotimes}_{i=1}^n  A_i)$
\item $\big((a_1 \otimes \overline{a}_2) \bigotimes (a_3 \otimes \overline{a}_4)\big) =\big((a_1 \otimes {a}_3) \widetilde{\bigotimes} (\overline{a}_2 \otimes \overline{a}_4)\big) =\big((a_1 \otimes {a}_4) \widetilde{\bigotimes}_{\top} {a}_2 \otimes {a}_3)\big)$
\end{enumerate}
\end{properties}

Let $X$ be a random element on a probability space $(\Omega,\mathcal{A},\mathbb{P})$ that takes values in a separable Hilbert space $H$. More precisely, we endow $H$ with the topology induced by the norm on $H$ and assume that $X:\Omega\to H$ is Borel-measurable. The {\em $k$-th order cumulant tensor} is defined by \citep{vde16}
\begin{align}
\label{cumtens}
\cm \big( X_1, \ldots, X_k\big) 
= \sum_{l_1, \ldots l_k\in \nnum} \cm \big( \innprod{X_1}{\psi_{l_1}},\ldots,\innprod{X_k}{\psi_{l_k}}\big)
 (\psi_{l_1} \otimes \cdots \otimes \psi_{l_k}), 
\end{align}   
where $\{\psi_l\}_{l \in \mathbb{N}}$ is an orthonormal basis of $H$ and the cumulants on the right hand side are as usual given by
\[
\cm\big(\innprod{X_1}{\psi_{l_1}},\ldots,\innprod{X_k}{\psi_{l_k}}\big)
=\sum_{\nu=(\nu_1,\ldots,\nu_p)}(-1)^{p-1}\,(p-1)!\,\prod_{r=1}^p \E\Big[\prod_{t\in\nu_r}\innprod{X_t}{\psi_{l_t}}\Big],
\]
where the summation extends over all unordered partitions $\nu$ of $\{1,\ldots,k\}$. The product theorem for cumulants \citep[Theorem 2.3.2]{brillinger} can then be generalised \citep[see e.g.][]{vde16} to simple tensors of random elements of $H$, i.e., $X_t=\otimes_{j=1}^{J_t}X_{tj}$ with $j=1,\ldots,J_t$ and $t=1,\ldots,k$. The joint cumulant tensor is then be given by 
\begin{align} \label{prodcumthm} 
\cm(X_1,\ldots,X_k)
=\sum_{\nu=(\nu_1,\ldots,\nu_p)}S_\nu\Big(\otimes_{n=1}^{p}
\cm\big(X_{tj}|(t,j)\in\nu_n\big)\Big),
\end{align}
where $S_\nu$ is the permutation that maps the components of the tensor back into the original order, that is, $S_\nu(\otimes_{r=1}^p\otimes_{(t,j)\in\nu_r} X_{tj})=X_{11}\otimes\cdots\otimes X_{kJ_t}$.

\section{Distributional properties of the similarity measure} \label{sec:dist}

In order to derive the distributional properties of $\hat{\A}_{i_1,i_2}$ defined in \eqref{eq:est_dist}, we require the following assumptions on the functional processes $\{\,X_{i,t,T} \colon t \in \znum\}_{T\in\nnum}, i =1,\ldots, d$
\begin{assumption}\label{cumglsp}
{\em Assume $\{\,X_{i,t,T} \colon t \in \znum\}_{T\in\nnum}$ are $d$ locally stationary 
zero-mean stochastic processes taking values in $\mathcal{H}_{\rnum}$ and let $\cumbp{m}{m} : L^2([0,1]^{\lfloor m/2\rfloor}) \to  L^2([0,1]^{\lfloor \frac{m+1}{2}\rfloor})$ be a positive operator independent of $T$ such that, for all $j=1,\ldots,m-1$ and some $\ell\in\nnum$,
\begin{align} 
\label{eq:kapmix}
\sum_{t_1,\ldots,t_{k-1} \in \mathbb{Z}} (1+|t_j|^\ell)\snorm{\cumbp{m}{m}}_1 <\infty.
\end{align}
Let us denote
\begin{equation}\label{eq:repstatap}
Y^{(T)}_{i,t}=\XT{i,t}-X_{i,t}^{(t/T)}
\qquad\text{and}\qquad
Y_{i,t}^{(u,v)}=\frac{\Xu{i,t}- X^{(v)}_{i,t}}{(u-v)}
\end{equation}
for $T\ge1$, $1\le t\le T$ and $u,v\in[0,1]$ such that $u\ne v$. Suppose furthermore that $m$-th order joint cumulants satisfy
\begin{enumerate}[(i)]\itemsep-.3ex
\item$\|\cm(\XT{i_1,t_1},\ldots,\XT{i_{m-1},t_{m-1}},Y^{(T)}_{i_m,t_m}) \|_2 \le \frac{1}{T}\snorm{\kappa_{m;t_1-t_m,\ldots,t_{m-1}-t_m}}_1 $,
\item$\|\cm(X_{i_1,t_1}^{(u_1)},\ldots,X_{i_{m-1},t_{m-1}}^{(u_{m-1})},Y_{i_m,t_{m}}^{(u_m,v)}) \|_2 \le \snorm{\kappa_{m;t_1-t_m,\ldots,t_{m-1}-t_m}}_1 $,
\item$\sup_u \|\cm(X_{i_1,t_1}^{(u)},\ldots,X_{i_{m-1},t_{m-1}}^{(u)},X_{i_m,t_m}^{(u)}) \|_2 \le \snorm{\kappa_{m;t_1-t_m,\ldots,t_{m-1}-t_m}}_1$,
\item$\sup_u \|\frac{\partial^\ell}{\partial u^\ell} \cm(X_{i_1,t_1}^{(u)},\ldots,X_{i_{m-1},t_{m-1}}^{(u)},X_{i_m,t_m}^{(u)}) \|_2 \le \snorm{\kappa_{m;t_1-t_m,\ldots,t_{m-1}-t_m}}_1$,
\end{enumerate}}
\end{assumption}
We remark that if the process is in fact stationary, the dependence on localized time $u$ drops. As explained in \autoref{sec2}, the estimator requires splitting the sample $T\in\mathbb N$ as $T = N(T)M(T)$, where $N(T)$ defines the resolution in frequency of the local fDFT and $M(T)$ controls the number of nonoverlapping local fDFT's in \eqref{Fij}. Since they correspond to the number of terms used in a Riemann sum approximating the integrals with respect to $du$ and $d\omega$ they have to be sufficiently large. We assume 
\begin{assumption}\label{ratesNM}
 $M\to\infty$, $N\to\infty$ as $T\to\infty$, such that 	
\[N/M\to\infty
    \quad\text{and}\quad
    N/M^3\to0 \]
\end{assumption}
The number of elements in the blocks grows therefore must grow faster than the number of blocks, but slower than the cube number of blocks. The choice of the number of blocks is carefully discussed in \cite{vDCD18}, who used the integrated periodogram operators as a basis for a stationarity test. 

In order to derive the results in \autoref{sec5}, we require four auxiliary results, which can be proved using similar
arguments as given in \citet{vDCD18} and the proofs are therefore omitted. The first one expresses the cumulants of Hilbert-Schmidt inner products of local periodogram tensors, into the trace of cumulants of simple tensors 
of the local functional DFT's. 

\begin{thm} \label{thm:cumHSinprod}
Let $\E\prod_{i_l \in \{1,\ldots, d\} }^{2n}\snorm{I_{i_l}^{u,\omega}}_2 <\infty$ for some $n \in \mathbb{N}$ and uniformly in $u$ and $\omega$. Then for any $i_1,\ldots, i_{2n} \in \{1,\ldots, d\}$.
\begin{align*}
\cm &\Big(\innprod{ I_{i_1}^{u_{j_1},\omega_{k_1}}}{I_{i_2}^{u_{j_2},\omega_{{k_2}}}}_{HS}, \ldots, \innprod{ I_{i_{2n-1}}^{u_{j_{2n-1}},\omega_{k_{2n-1}}}}{I_{i_{2n}}^{u_{j_{2n}},\omega_{{k_{2n}}}}}_{HS}\Big) 
\\& =\Tr\Big(\sum_{\boldsymbol{P} = P_1 \cup \ldots \cup P_G}S_{\boldsymbol{P}}\Big(\otimes_{g=1}^{G}
 \operatorname{cum}\big(D_{i_{p}}^{u_{j_p},\omega_{k_p}}\ | p \in P_g\Big)  \Big).  
\end{align*}
where the summation is over all indecomposable partitions $\boldsymbol{P} = P_1 \cup \ldots \cup P_G$ of the array 
\[\begin{matrix} 
(1,1) & (1,2) & (1,3) & (1,4) \\
(2,1) & (2,2) & (2,3) & (2,4) \\
\vdots & & &\vdots  \\
\vdots & & &\vdots  \\
(n,1) & (n,2) & (n,3) & (n,4) \\
\end{matrix} \tageq \label{tab:highcumF1}\]
where $p=(l,m)$ and $i_{p} = i_{2l - \delta\{m \in \{1,2\}\}}$,  $k_{p} = (-1)^{m} k_{2l-\delta{\{m \in \{1,2\}\}}}$ and $j_p= j_{2l-\delta{\{m \in \{1,2\}\}}}$ for $l \in\{1,\ldots, n\}$ and $m \in \{1,2,3,4\}$ and where $\delta_{\{A\}}$ equals 1 if event $A$ occurs and $0$ otherwise.
\end{thm}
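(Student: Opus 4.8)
The plan is to reduce the cumulant of the $n$ Hilbert--Schmidt inner products to a cumulant of products of scalar functional DFT evaluations, apply the product theorem for cumulants to obtain a sum over indecomposable partitions, and finally re-express the resulting multiple integral as a trace.

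First I would unfold each factor $\innprod{I_{i_{2l-1}}^{u_{j_{2l-1}},\omega_{k_{2l-1}}}}{I_{i_{2l}}^{u_{j_{2l}},\omega_{k_{2l}}}}_{HS}$, $l=1,\dots,n$, in terms of the underlying DFTs. Since $I_i^{u,\omega}=D_i^{u,\omega}\otimes D_i^{u,\omega}$ by \eqref{eq:per}, the identity $\innprod{a_1\otimes a_2}{b_1\otimes b_2}_{HS}=\innprod{a_1}{b_1}\,\overline{\innprod{a_2}{b_2}}$ of \autoref{tensorprop} yields
\[
\innprod{I_{i_{2l-1}}^{u_{j_{2l-1}},\omega_{k_{2l-1}}}}{I_{i_{2l}}^{u_{j_{2l}},\omega_{k_{2l}}}}_{HS}
=\Big|\innprod{D_{i_{2l-1}}^{u_{j_{2l-1}},\omega_{k_{2l-1}}}}{D_{i_{2l}}^{u_{j_{2l}},\omega_{k_{2l}}}}\Big|^{2}.
\]
Writing each inner product as an integral over $[0,1]$ and using that the processes are real, so that $\overline{D_i^{u,\omega}}=D_i^{u,-\omega}$, the squared modulus becomes a four-fold product of DFTs integrated against two dummy variables $\tau_l,\sigma_l\in[0,1]$: the pair carrying the argument $\tau_l$ is $D_{i_{2l-1}}^{u_{j_{2l-1}},\omega_{k_{2l-1}}}$ and $D_{i_{2l}}^{u_{j_{2l}},-\omega_{k_{2l}}}$, and the pair carrying $\sigma_l$ is $D_{i_{2l-1}}^{u_{j_{2l-1}},-\omega_{k_{2l-1}}}$ and $D_{i_{2l}}^{u_{j_{2l}},\omega_{k_{2l}}}$. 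Labelling these four factors by $m=1,\dots,4$ reproduces exactly the index assignment $i_p,j_p$ and the signed frequencies $k_p=(-1)^mk_{\cdot}$ of the statement, and identifies the $l$-th inner product with the $l$-th row of the array \eqref{tab:highcumF1}.

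Next I would move the cumulant past the $2n$ integrals. This is legitimate by multilinearity of the cumulant together with Fubini's theorem, the required integrability being supplied by the moment hypothesis $\E\prod_{l}\snorm{I_{i_l}^{u,\omega}}_2<\infty$. The cumulant of the $n$ inner products then equals the integral over $[0,1]^{2n}$ of the joint cumulant of the $n$ four-fold products of DFT values, to which I would apply the product theorem for cumulants \citep[Theorem~2.3.2]{brillinger}, in the form \eqref{prodcumthm}. This expresses the cumulant of products as the sum, over \emph{indecomposable} partitions $\boldsymbol P=P_1\cup\dots\cup P_G$ of the $n\times4$ array, of products of the block cumulants $\cm\big(D_{i_p}^{u_{j_p},\omega_{k_p}}\mid p\in P_g\big)$; decomposable partitions factor through lower-order joint cumulants and drop out of the cumulant (as opposed to the moment) expansion, which is precisely the source of the indecomposability constraint.

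Finally I would reassemble the integrals into a trace. Each scalar block cumulant $\cm\big(D_{i_p}^{u_{j_p},\omega_{k_p}}(\tau_p)\mid p\in P_g\big)$ is the kernel, evaluated at the arguments $\{\tau_p\}_{p\in P_g}$, of the tensor-valued cumulant $\cm\big(D_{i_p}^{u_{j_p},\omega_{k_p}}\mid p\in P_g\big)$ defined through \eqref{cumtens}. Integrating the product of these kernels over the $2n$ dummy variables, with the contraction pattern prescribed by which DFTs share $\tau_l$ and which share $\sigma_l$, contracts the matching tensor slots and is exactly the trace of the tensor product of the blocks, once the permutation $S_{\boldsymbol P}$ has realigned the tensor components into the order dictated by the shared arguments. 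Carrying the sum over partitions inside the trace then gives the asserted identity. I expect the tensor and trace bookkeeping of this last step to be the main obstacle: one must check slot by slot that the pattern of shared integration variables induced by the Hilbert--Schmidt contractions matches the contraction defining the trace and that $S_{\boldsymbol P}$ is exactly the realigning permutation, the remaining steps being routine applications of \autoref{tensorprop}.
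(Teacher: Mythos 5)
Your proposal is correct and matches the intended argument: the paper omits this proof, deferring to van Delft, Characiejus and Dette (2018), where the analogous identity is obtained exactly as you describe --- unfolding each Hilbert--Schmidt inner product into a four-fold product of fDFTs via Properties \ref{tensorprop}, applying the product theorem for cumulants in the tensor form \eqref{prodcumthm} to produce the sum over indecomposable partitions of the $n\times 4$ array, and recontracting the shared integration variables into the trace with $S_{\boldsymbol P}$ as the realigning permutation. The only discrepancy is harmless sign bookkeeping: your assignment of $\pm\omega$ to the positions $m=1,2$ agrees with the paper's later usage of this theorem (e.g.\ in the proof of \autoref{thm:con}) rather than with the displayed formula $k_p=(-1)^m k_{\cdot}$, which appears to contain a typo.
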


We additionally require  auxiliary results on the properties of the local fDFTs.

\begin{lemma} \label{lem:cumfixu}
Suppose \autoref{cumglsp} is satisfied $k$ moments and $\sum_{j=1}^{k} \omega_{j} \equiv 0 \mod 2\pi$ then
\begin{align*} 
\bigsnorm{\cm\left(D_i^{u_j,\omega_1},\dots,D_i^{u_j,\omega_k}\right)-\frac{(2\pi)^{1-k/2}}{N^{k/2-1}}\mathcal{F}_{u_i,\omega_1,\dots,\omega_{k-1}}}_1 = O\left(N^{-k/2} \times \frac{N}{M^2}\right).
\end{align*}
additionally, for $|i_1|=|i_2|=k/2$, we have 
\begin{align*} 
\bigsnorm{\cm\left(D_i^{u_j,\omega_1},\dots,D_i^{u_j,\omega_k}\right)-\frac{(2\pi)^{1-k/2}}{N^{k/2-1}}\mathcal{F}^{i_1,i_2}_{u_i,\omega_1,\dots,\omega_{k-1}}}_1 = O\left(N^{-k/2} \times \frac{N}{M^2}\right).
\end{align*}
where $ \mathcal{F}^{i_1,i_2}_{u_i,\omega_1,\dots,\omega_{k-1}} $ denotes the cross spectral density operator of series $i_1$ and $i_2$ of order $k$ at time $u_i$.
\end{lemma}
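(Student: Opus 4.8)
The plan is to expand the $k$-th order cumulant of the local fDFTs directly from the definition \eqref{eq:fDFT}, pass to the stationary approximation via \autoref{cumglsp}, and isolate the Fourier series of the cumulant kernels that defines the spectral density operator. Writing $a=\lfloor u_j T\rfloor-N/2+1$ and using multilinearity of the cumulant,
\[
\cm\big(D_i^{u_j,\omega_1},\dots,D_i^{u_j,\omega_k}\big)
=\frac{1}{(2\pi N)^{k/2}}\sum_{s_1,\dots,s_k=0}^{N-1}
\cm\big(X_{i,a+s_1,T},\dots,X_{i,a+s_k,T}\big)\,
e^{-\im\sum_{l=1}^{k}\omega_l s_l}.
\]
First I would replace the triangular-array cumulant by its stationary counterpart: by \autoref{cumglsp}(i) each of the $k$ telescoping replacements $X_{i,a+s_l,T}\mapsto X^{((a+s_l)/T)}_{i,a+s_l}$ costs $O(1/T)$ in the relevant Schatten norm, uniformly in the lags, so after summing the free index ($\sim N$ terms) against the absolutely summable lags this contributes a term of order $O(N^{-k/2}\,N/T)=O(N^{-k/2}/M)$, which is negligible relative to the asserted bound because $N/M\to\infty$ by \autoref{ratesNM}.

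Second, the stationary cumulant depends only on the lags $r_l=s_l-s_k$, so I would change variables to $(r_1,\dots,r_{k-1},s_k)$ and write it as a kernel $c_{v;r_1,\dots,r_{k-1}}$ evaluated at the local time $v$ of the window. The constraint $\sum_{l=1}^{k}\omega_l\equiv0\pmod{2\pi}$ is exactly what makes the scheme work: since $\sum_{l=1}^{k}\omega_l s_l=\sum_{l=1}^{k-1}\omega_l r_l+s_k\sum_{l=1}^{k}\omega_l$, the exponential loses its dependence on the free index, $e^{-\im s_k\sum_l\omega_l}=1$, leaving only $e^{-\im\sum_{l=1}^{k-1}\omega_l r_l}$. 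Summing the free index $s_k$ then produces the factor $N$, and recognising the lag sum $\sum_{r}c_{u_j;r}\,e^{-\im\sum_l\omega_l r_l}$ as $2\pi\,\mathcal{F}_{u_j,\omega_1,\dots,\omega_{k-1}}$, the order-$k$ analogue of \eqref{eq:spdens}, yields the leading term $\tfrac{(2\pi)^{1-k/2}}{N^{k/2-1}}\mathcal{F}_{u_j,\omega_1,\dots,\omega_{k-1}}$ as claimed.

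The crux is the non-stationary bias, and this is where I would concentrate the effort. After the change of variables the kernel $c_{v;r}$ must be Taylor-expanded in $v$ about the block centre $u_j$ using the derivative bound of \autoref{cumglsp}(iv). Because the window in \eqref{eq:fDFT} is centred at $\lfloor u_j T\rfloor$, the quantity $s_k-N/2$ is (anti)symmetric over the summation range, so the first-order term collapses to an absolute factor $O(1/M)$ rather than the $O(N/M)$ one would get from an off-centre window; this demotion is essential, since $O(N/M)$ would violate the bound. It is the second-order term, with weight $\sum_{s_k}(s_k-N/2)^2/T^2\asymp N^3/T^2=N/M^2$, that governs the error and produces precisely $O(N^{-k/2}\,N/M^2)$. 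Controlling the remaining edge effect, namely that the valid range of $s_k$ has cardinality $N-O(|r|)$ rather than $N$, is routine once \eqref{eq:kapmix} with $\ell\ge1$ is invoked to guarantee $\sum_r(1+|r|)\snorm{c_{u_j;r}}_1<\infty$, and it is of smaller order; collecting the leading term with the three error contributions gives the first display.

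Throughout, the bookkeeping should be carried out for the reshaped tensor in the Schatten-$1$ norm, using the product theorem for cumulants \eqref{prodcumthm} and the summability of the dominating operators $\kappa$ afforded by \eqref{eq:kapmix}; this is precisely the tensor calculus developed in \citet{vDCD18}. The second display, for the operator reshaping with $|i_1|=|i_2|=k/2$, then follows verbatim by applying the norm-preserving isometric isomorphism $\mathcal{T}$ of \autoref{sec2} and replacing $\mathcal{F}$ by the corresponding cross-spectral density operator $\mathcal{F}^{i_1,i_2}$. The main obstacle is the simultaneous management of the Schatten-norm tensor bookkeeping and the non-stationary Taylor bias: one must verify that the centring genuinely cancels the first-order contribution so that the bias is $O(N/M^2)$ and not $O(N/M)$, while ensuring every error operator remains summable in $S_1$ through \eqref{eq:kapmix}.
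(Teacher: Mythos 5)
Your argument is correct and is essentially the one the paper intends: the paper in fact omits the proof of this lemma, stating only that the four auxiliary results ``can be proved using similar arguments as given in \citet{vDCD18}'', and your route --- expanding the cumulant of the local fDFTs, replacing the triangular array by the stationary approximation via \autoref{cumglsp}(i) at cost $O(N^{-k/2}/M)$, passing to lag coordinates so that the manifold condition $\sum_l\omega_l\equiv 0 \bmod 2\pi$ frees the index $s_k$, and Taylor-expanding the local cumulant kernel about the centred block midpoint so that the first-order term cancels and the second-order term yields the $O(N^{-k/2}\,N/M^2)$ bias --- is exactly that argument. The one point to watch is that the residual edge-effect term is $O(N^{-k/2})$ and is absorbed by the stated $O(N^{-k/2}\,N/M^2)$ bound only when $N/M^2$ is bounded below, which \autoref{ratesNM} does not guarantee; this imprecision is inherited from the lemma statement itself rather than introduced by your proof.
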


When evaluated off the manifold, i.e., $\sum_{j=1}^{k} \omega_{j} \ne 0 \mod 2\pi$, the above cumulant is of lower order (see \autoref{cor:cumbound}). 
A direct consequence of the proof of \autoref{lem:cumfixu} is the following corollary
\begin{Corollary} \label{cor:cumbound}
For $i_1, \ldots, i_k \in \{1,\ldots, d\}$ We have for any $p \ge 1$
\begin{align}\bigsnorm{\cm\left(D_{i_1}^{u_1,\omega_1},\dots,D_{i_k}^{u_k,\omega_k}\right)}_p = O\left(N^{1-k/2}\right) \label{eq:boundonmani}\end{align}
uniformly in $\omega_1,\ldots,\omega_k$ and $u_1,\ldots, u_k$. Moreover, if $~\sum_{j=1}^{k}\omega_j~ \ne 0\mod 2\pi$ then
 \begin{align}\bigsnorm{\cm\left(D_{i_1}^{u_1,\omega_1},\dots,D_{i_k}^{u_k,\omega_k}\right)}_p =O\left(N^{-k/2}\right).\label{eq:boundoffmani}\end{align}
\end{Corollary}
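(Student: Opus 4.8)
The plan is to reduce both assertions to the corresponding bounds on the trace-class ($S_1$) norm and then to invoke monotonicity of the Schatten norms. Indeed, since $\snorm{A}_p \le \snorm{A}_1$ for every $p \ge 1$, it suffices to establish \eqref{eq:boundonmani} and \eqref{eq:boundoffmani} with $\snorm{\cdot}_p$ replaced by $\snorm{\cdot}_1$. Throughout, I would work from the expansion obtained by multilinearity of the cumulant together with the definition \eqref{eq:fDFT},
\[
\cm\big(D_{i_1}^{u_1,\omega_1},\ldots,D_{i_k}^{u_k,\omega_k}\big) = \frac{1}{(2\pi N)^{k/2}}\sum_{s_1,\ldots,s_k=0}^{N-1}\cm\big(X_{i_1,t_1,T},\ldots,X_{i_k,t_k,T}\big)\,e^{-\im\sum_{j=1}^k\omega_j s_j},
\]
where $t_j=\lfloor u_j T\rfloor - N/2 + s_j + 1$.

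For the on-manifold bound \eqref{eq:boundonmani}, I would apply the triangle inequality for $\snorm{\cdot}_1$, which simply discards the phases, and bound the right-hand side by $(2\pi N)^{-k/2}$ times $\sum_{s_1,\ldots,s_k}\snorm{\cm(X_{\cdot})}_1$. Passing to the difference variables $t_j-t_k$ and invoking the summability of the cumulant kernel from \eqref{eq:kapmix} (with the local-stationarity approximation of \autoref{cumglsp}(iii) used to replace the genuinely non-stationary cumulant by its difference-stationary counterpart), the sum over the differences converges while the remaining free index contributes a factor of at most $N$. This gives $O(N^{1-k/2})$ uniformly in the frequencies and rescaled times. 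Equivalently, one reads it off \autoref{lem:cumfixu}: the cumulant equals the leading term $(2\pi)^{1-k/2}N^{1-k/2}\F_{u,\omega_1,\ldots,\omega_{k-1}}$, whose $S_1$-norm is $O(N^{1-k/2})$ since $\F$ is trace class uniformly, plus a remainder of order $N^{-k/2}\cdot N/M^2 = N^{1-k/2}/M^2$, which is of the same or smaller order because $M\ge 1$.

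For the off-manifold bound \eqref{eq:boundoffmani}, the phases may no longer be discarded, and the point is exactly the cancellation already present in the proof of \autoref{lem:cumfixu}. Separating the center-of-mass index $t_k$ from the difference variables factors the center-of-mass summation as the Dirichlet kernel $\sum_{s}e^{-\im(\sum_j\omega_j)s}$. On the manifold this kernel equals $N$ and produces the extra factor $N$ seen in \eqref{eq:boundonmani}; off the manifold, i.e.\ when $\sum_j\omega_j\ne 0\mod 2\pi$, the kernel summed over the full block $\{0,\ldots,N-1\}$ vanishes at the (Fourier) frequencies at which the statistic is evaluated, so that only the boundary defects survive. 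A short computation shows these defects are of order $|t_j-t_k|$ rather than $N$, and summability of the kernel against the weights $(1+|t_j|^\ell)$ with $\ell\ge1$ in \eqref{eq:kapmix} renders their total contribution $O(1)$. One therefore loses precisely one factor of $N$ relative to the on-manifold case, yielding $O(N^{-k/2})$.

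The main obstacle will be making this off-manifold cancellation rigorous in the genuinely non-stationary regime, since the joint cumulant of the observations is not exactly a function of the time differences: before extracting the Dirichlet kernel one must substitute the locally stationary approximation via \autoref{cumglsp}(iii)–(iv) and control the resulting Taylor-type remainder uniformly in $u$. This is the estimate already carried out in the proof of \autoref{lem:cumfixu}, which is why the corollary follows without repeating the argument; the only additional observation needed is that off the manifold the leading Dirichlet-kernel term is annihilated, leaving a residual smaller by a factor $N$.
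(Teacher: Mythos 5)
Your proposal is correct and follows essentially the route the paper intends: the paper gives no standalone proof of this corollary, deferring to the (omitted) proof of \autoref{lem:cumfixu} and the arguments of \citet{vDCD18}, which are exactly the Brillinger-type expansion you describe --- multilinearity of the cumulant, summability of the cumulant kernel from \autoref{cumglsp} to absorb the difference variables, and the Dirichlet kernel in the centre-of-mass variable contributing a factor $N$ on the frequency manifold and only $O(1)$ boundary terms off it. Your preliminary reduction to the $S_1$-norm via monotonicity of the Schatten norms is a correct and clean way to dispose of the ``for any $p\ge 1$'' clause.
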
 
Additionally, when the local fDFT's are evaluated on different midpoints then we have the following lemma. 
\begin{lemma}\label{lem:cumdifu}
If \autoref{cumglsp} is satisfied and $|{j_1}-{j_2}|>1$ for some midpoints $u_{j_1}$ and $u_{j_2}$ then 
\begin{align*} & \bigsnorm{\cm\left(D_{i_1}^{u_1,\omega_1},\dots,D_{i_k}^{u_k,\omega_k}\right)}_1 = O\left(N^{-k/2} M^{-1} \right)
\end{align*}
uniformly in $\omega_1,\ldots,\omega_k$.
\end{lemma}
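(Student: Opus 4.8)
The plan is to expand the fDFT cumulant by multilinearity, reduce the triangular‑array cumulant to a lag‑summable quantity via \autoref{cumglsp}, and then convert the block separation $|j_1-j_2|>1$ into a single large temporal lag, on which the polynomial weight in \eqref{eq:kapmix} produces the gain $M^{-1}$ over the off‑manifold rate of \autoref{cor:cumbound}.

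Writing $t_\ell=\lfloor u_\ell T\rfloor-N/2+s_\ell+1$ for the time index of the $s_\ell$‑th summand of $D_{i_\ell}^{u_\ell,\omega_\ell}$, multilinearity of $\cm(\cdot)$ gives
\[
\cm\big(D_{i_1}^{u_1,\omega_1},\dots,D_{i_k}^{u_k,\omega_k}\big)=\frac{1}{(2\pi N)^{k/2}}\sum_{s_1,\dots,s_k=0}^{N-1}\cm\big(X_{i_1,t_1,T},\dots,X_{i_k,t_k,T}\big)\,e^{-\im\sum_{\ell}\omega_\ell s_\ell}.
\]
Taking $\snorm{\cdot}_1$ and using the triangle inequality discards the unit‑modulus exponentials, which is exactly what renders the estimate uniform in $\omega_1,\dots,\omega_k$. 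I would then invoke \autoref{cumglsp}: replacing each array variable by its stationary approximation—telescoping, with the replacement errors controlled by parts (i)--(ii) and of strictly smaller order—and bounding the resulting cumulant by part (iii), one finds that each summand is dominated by $\snorm{\kappa_{k;\,r_1,\dots,r_{k-1}}}_1$, a quantity depending on the time indices only through the lags $r_\ell=t_\ell-t_k$ and summable in these lags with the polynomial weight \eqref{eq:kapmix}.

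The decisive step is to read a large lag off the separation hypothesis. Since the midpoints of blocks $j_1$ and $j_2$ lie at least two block‑widths apart, the base times obey $|\lfloor u_{j_1}T\rfloor-\lfloor u_{j_2}T\rfloor|\ge cN$ for a fixed $c>0$, whereas each $s_\ell$ varies only over $\{0,\dots,N-1\}$; choosing the reference index to be the fDFT sitting on block $j_2$ therefore forces the lag $r_a=t_a-t_k$ of the fDFT on block $j_1$ to satisfy $|r_a|\ge c'N$ for every admissible choice of $s_1,\dots,s_k$. I would reindex the multiple sum by $(r_1,\dots,r_{k-1})$ together with a single anchor $s_k$: for each fixed anchor the unconstrained lags are summed by plain absolute summability of $\snorm{\kappa_{k;\cdot}}_1$, contributing $O(1)$, while the constrained lag is summed using the weight,
\[
\sum_{|r_a|\ge c'N}\snorm{\kappa_{k;\,r_1,\dots,r_{k-1}}}_1\le (c'N)^{-\ell}\sum_{r_1,\dots,r_{k-1}\in\znum}|r_a|^{\ell}\,\snorm{\kappa_{k;\,r_1,\dots,r_{k-1}}}_1=O\big(N^{-\ell}\big).
\]
The anchor sum over $\{0,\dots,N-1\}$ restores a factor $N$, so that with the prefactor $N^{-k/2}$ the whole expression is $O\big(N^{1-k/2-\ell}\big)$; since the weight exponent satisfies $\ell\ge2$ and $M\le N$ eventually (by $N/M\to\infty$ in \autoref{ratesNM}), this is at most $N^{-k/2}N^{-1}\le N^{-k/2}M^{-1}$, which is the claim.

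The main obstacle is the rate bookkeeping in this last step, not the passage to the stationary regime. The overall time‑shift, encoded by the free anchor $s_k$, unavoidably restores a factor $N$, which precisely annihilates the naive gain from the single large lag; a genuine improvement over the off‑manifold rate $O(N^{-k/2})$ of \autoref{cor:cumbound} is available only because \eqref{eq:kapmix} supplies a polynomial weight $|t_j|^\ell$ with $\ell\ge2$ rather than mere absolute summability. A secondary point requiring care is verifying that the lower bound $|r_a|\ge c'N$ holds simultaneously for all values of the anchor, so that the weighted estimate may legitimately be pulled inside the anchor summation.
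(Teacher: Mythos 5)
Your proof follows exactly the route the paper itself relies on: the authors omit the argument and defer to \citet{vDCD18}, where the mechanism is precisely yours --- expand the fDFT cumulant by multilinearity, pass to the stationary approximants via \autoref{cumglsp}(i)--(iii), bound each summand by $\snorm{\kappa_{k;r_1,\ldots,r_{k-1}}}_1$, observe that $|j_1-j_2|>1$ forces one lag to exceed a constant multiple of $N$ for every choice of the inner summation indices, and then trade that large lag against the polynomial weight in \eqref{eq:kapmix} while the free anchor sum restores a factor $N$. Your bookkeeping of the resulting rate $O(N^{1-k/2-\ell})$ is correct, as is the observation that the separation must hold uniformly over the anchor (it does, since the two blocks occupy disjoint index intervals at distance at least $N$).

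Two points need tightening. First, the stationary-approximation errors are not ``of strictly smaller order'': the term coming from \autoref{cumglsp}(ii) carries a factor $|t_j-t_k|/T$, and after reindexing by lags and summing the anchor it contributes exactly $O(N^{-k/2}M^{-1})$ --- the target order. This is harmless for an upper bound, but it is not negligible relative to the main term and should be accounted for explicitly rather than waved away. Second, and more substantively, your final inequality $N^{1-k/2-\ell}\le N^{-k/2}M^{-1}$ genuinely requires $\ell\ge 2$: with $\ell=1$ the anchor factor $N$ exactly cancels the gain from the single large lag and you only recover the trivial rate $O(N^{-k/2})$ of \autoref{cor:cumbound}, with no factor $M^{-1}$. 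You correctly identify this as the crux in your closing paragraph, but you assert $\ell\ge2$ rather than obtain it: \autoref{cumglsp} as written only posits ``some $\ell\in\nnum$''. The requirement $\ell\ge2$ is implicit elsewhere in the paper (the $O(N/M^2)$ error in \autoref{lem:cumfixu} likewise presupposes a second-order expansion in $u$), so this is as much a looseness of the stated assumption as of your proof, but as it stands this step is the one place where your argument would fail for a process satisfying the assumption only with $\ell=1$.
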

In the following, for random variables $Y_0, Y_1, Y_2, Y_3$ let $\operatorname{cum}_{m_0, m_1, m_2, m_3}(Y_0, Y_1,Y_2, Y_3)$ denote the joint cumulant
\[
\operatorname{cum}(\underbrace{Y_0,\ldots,Y_0}_{m_0\ \text{times}},\underbrace{Y_1,\ldots,Y_1}_{m_1\ \text{times}},\underbrace{Y_2,\ldots,Y_2}_{m_2\ \text{times}},\underbrace{Y_3,\ldots,Y_3}_{m_3\ \text{times}}),
\]
where $0 \le m_i \le m, i=0,1,2,3$ s.t. $\sum_{i=0}^3 m_i =m$. 
Using the previous statements, we can derive the order of the higher order joint cumulants of elements defined in $\eqref{Fij}$.
\begin{thm} \label{thm:highcumF1234}
If \autoref{cumglsp} is satisfied then for finite $m$
\begin{align*}
& T^{m/2}\operatorname{cum}_{m_0,m_1,m_2,m_3}(\hat F_{i_1,i_2},\hat F_{i_3,i_4},\hat F_{i_5,i_6},\hat F_{i_7,i_8})
\\& \frac{1}{T^{m/2}}\sum_{k_1,\ldots,k_m=1}^{\lfloor N/2 \rfloor} \sum_{\substack{j_1,\ldots,j_{m}=1}}^M\Tr\Big(\sum_{\boldsymbol{P} = P_1 \cup \ldots \cup P_G}S_{\boldsymbol{P}}\Big(\otimes_{g=1}^{G}
\operatorname{cum}\big(D_{i_{p}}^{u_{j_p},\omega_{k_p}}\ | p \in P_g\Big)\Big) =O(T^{1-m/2}).
\end{align*}
uniformly in  $0 \le m_i \le m$ s.t. $\sum_{i=0}^3 m_i=m$.
\end{thm}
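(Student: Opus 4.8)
The plan is to convert the statement into an order-of-magnitude bound on a multiple sum and then settle that bound by a partition-counting (diagram) argument built on \autoref{thm:cumHSinprod}, \autoref{cor:cumbound} and \autoref{lem:cumdifu}.

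\textbf{Step 1 (reduction).} Writing each $\hat F_{i_{2l-1}i_{2l}}$ through its definition \eqref{Fij} and using the multilinearity of joint cumulants, I would pull the $m$ prefactors $1/T$ and the $m$ double sums $\sum_{j}\sum_{k}$ outside, so that
\[
\operatorname{cum}_{m_0,m_1,m_2,m_3}(\hat F_{i_1,i_2},\ldots,\hat F_{i_7,i_8})
=\frac1{T^m}\sum_{k_1,\ldots,k_m=1}^{\lfloor N/2\rfloor}\sum_{j_1,\ldots,j_m=1}^{M}\operatorname{cum}\big(\langle I_{i_1}^{u_{j_1},\omega_{k_1}},I_{i_2}^{u_{j_1},\omega_{k_1-1}}\rangle_{HS},\ldots\big).
\]
Applying \autoref{thm:cumHSinprod} to the inner cumulant (here $n=m$) rewrites it as the trace of the sum over all indecomposable partitions of the array \eqref{tab:highcumF1}; this is exactly the second line of the displayed equation, since $T^{m/2}/T^{m}=T^{-m/2}$. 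As there are only finitely many such partitions for finite $m$, it then suffices to show that for each indecomposable $\boldsymbol P=P_1\cup\cdots\cup P_G$
\[
\sum_{k_1,\ldots,k_m}\sum_{j_1,\ldots,j_m}\Big|\Tr\big(S_{\boldsymbol P}(\otimes_{g=1}^{G}\operatorname{cum}(D_{i_p}^{u_{j_p},\omega_{k_p}}\mid p\in P_g))\big)\Big| = O(T),
\]
and the uniformity over $0\le m_i\le m$ is automatic because the bound depends only on $m$ and this fixed finite family of partitions.

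\textbf{Step 2 (one partition term).} For a fixed $\boldsymbol P$ I would bound the trace of the permuted tensor product by a product of Schatten norms of the group cumulants $C_g:=\operatorname{cum}(D_{i_p}^{u_{j_p},\omega_{k_p}}\mid p\in P_g)$. Since $S_{\boldsymbol P}$ only reorders tensor slots and $\boldsymbol P$ is indecomposable, the trace is a single connected contraction of $\otimes_g C_g$, and the identities of \autoref{tensorprop} together with Hölder's inequality for Schatten norms yield a bound of the form $\prod_{g=1}^{G}\snorm{C_g}_{p_g}$ for an admissible choice of exponents $p_g\ge1$ (for the pair blocks this is just $|\langle C_1,C_2\rangle_{HS}|\le\snorm{C_1}_2\snorm{C_2}_2$). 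Each factor is then controlled by \autoref{cor:cumbound}: if $\sum_{p\in P_g}\omega_p\equiv0\ (\mathrm{mod}\ 2\pi)$ the bound is $O(N^{1-|P_g|/2})$, and otherwise $O(N^{-|P_g|/2})$; moreover, by \autoref{lem:cumdifu}, whenever the midpoints indexing $P_g$ are not all mutually adjacent the factor gains an extra $M^{-1}$.

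\textbf{Step 3 (counting).} It remains to count the free frequency- and midpoint-indices once these constraints are imposed and to check that the resulting power of $N$ and $M$ never exceeds $MN=T$. For the midpoints, indecomposability makes the row-communication graph connected; if the $m$ midpoint indices fall into $c$ clusters of coinciding values, at least $c-1$ straddling groups must then have separated midpoints, each contributing $M^{-1}$, so the midpoint sum is $O(M^{c}\cdot M^{-(c-1)})=O(M)$. An analogous balance for the frequencies — each group taken off its manifold frees one $k$-index but loses the factor $N^{-1}$ of \autoref{cor:cumbound} — combines with the per-group powers $\prod_g N^{1-|P_g|/2}=N^{G-2m}$ to give a frequency contribution of order at most $O(N)$, the maximum $M\cdot N=T$ being attained by the all-pairs partition $G=2m$. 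Together this gives $O(MN)=O(T)$ for every indecomposable partition, which is the claim; note that no relation between $N$ and $M$ is needed, consistent with the theorem assuming only \autoref{cumglsp}.

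\textbf{Main obstacle.} The delicate part is the counting in Step 3. Because the $4m$ frequencies are parametrized by only the $m$ indices $k_1,\ldots,k_m$ (each group mixing the adjacent frequencies $\omega_{k_l}$ and $\omega_{k_l-1}$ with signs), the $G$ manifold relations are highly redundant, and one must track the exact rank of this constraint system to determine how many $k$-indices remain free. The crux is to show that, for \emph{every} indecomposable partition, freeing an index is always paired with a compensating $N^{-1}$ (off manifold) or $M^{-1}$ (separated midpoints), so that the net contribution never exceeds one free midpoint factor $M$ and one free frequency factor $N$. Carrying out this bookkeeping uniformly over all indecomposable partitions — and thereby obtaining the bound without any rate restriction relating $N$ and $M$ — is the heart of the argument and is the step transcribed from the analogous counting in \citet{vDCD18}.
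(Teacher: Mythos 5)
Your proposal is correct and follows essentially the same route as the paper's proof: reduce via \autoref{thm:cumHSinprod} to a sum over indecomposable partitions, bound each partition term by a product of group-cumulant norms using \autoref{cor:cumbound} and \autoref{lem:cumdifu}, and then balance free frequency/midpoint indices against the compensating $N^{-1}$ and $M^{-1}$ factors to get $O(T)$ per partition. Like the paper, you correctly isolate the rank-counting of the (redundant) manifold constraints as the crux and defer it to the combinatorial results of \citet{vDCD18} (the paper invokes their Lemma 4.3 and Proposition 4.1 for precisely this step, showing a partition of size $G=m+r_1+1$ must carry at least $r_1$ frequency restrictions).
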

\begin{proof}[Proof of \autoref{thm:highcumF1234}]
For a fixed partition $P=\{P_1,\ldots,P_G\}$,  let the cardinality of set $P_g$ be denoted by $|P_{g}|=\mathscr{C}_g$. 
By \eqref{eq:boundonmani} of Corollary \ref{cor:cumbound} and Lemma \ref{lem:cumdifu} an upperbound of \eqref{tab:highcumF1} is given by
\begin{align*}
O \Big ( T^{-m/2} \sum_{k_1,\ldots,k_m=1}^{\lfloor N/2 \rfloor}\sum_{\substack{j_1,\ldots,j_{m}=1}}^M \prod_{g=1}^{G} \frac{1}{N^{\mathscr{C}_g/2-1}} M^{-\delta_{\{\exists p_1, p_2 \in P_g: |j_{p_1} - j_{p_2}|>1\}}}  \Big )
\tageq \label{cumhighF1UB}
\end{align*}
Similar to Lemma 4.3 of \citet{vDCD18}, we can show inductively that the indecomposability of the array \eqref{tab:highcumF1} and the behavior of the joint cumulants of the local fDFT's at different midpoints imply this is at most of order 
\[
O(N^{m/2} M^{-m/2} E^m M N^{-2m+G})   =O(T^{1-m/2} N^{G-m-1}). 
\]
Thus, partitions of size $G \le m+1$ will vanish as $T \to \infty$. For $G \ge m+2$, indecomposability of the array requires to stay on the frequency manifold  (see equation \eqref{eq:boundoffmani} of Corollary \ref{cor:cumbound}) and therefore imposes additional restrictions in frequency direction. It can be shown \citep[][Proposition 4.1]{ vDCD18} that for a partition of size $G=m+r_1+1$ with $r_1 \ge 1$ of the array \eqref{tab:highcumF1} only partitions with at least $r_1$ restrictions in frequency direction are indecomposable if $m > 2$, while if $m=2$ there must be at least 1 restriction in frequency direction. Consequently, the joint cumulant is at most of order $O(T^{1-n/2} N^{m+r_1+1-m-1}N^{-{r_1}})=O(T^{1-n/2})$.
\end{proof}

\begin{proof}[\bf  Proof of \autoref{thm:con}]
Using \autoref{thm:cumHSinprod} with $n=1$ implies

\begin{align*}
\E {F}_{i_1,i_2} &	=\frac{1}{T}\sum_{k=1}^{\lfloor N/2 \rfloor} \sum_{j=1}^M \Tr \Big( \E \big[\fdft{1}{1}{i_1} \otimes \fdftc{1}{1}{i_1} \otimes \fdftcl{1}{1}{i_2}\otimes\fdftl{1}{1}{i_2} \big]\Big).
\end{align*}
Rewriting this expectation in cumulant tensors, we get
\begin{align*}
\E {F}_{i_1,i_2} &	=\frac{1}{T}\sum_{k=1}^{\lfloor N/2 \rfloor} \sum_{j=1}^M \Tr \Big(S_{1234 } \cm \big((\fdft{1}{1}{i_1}, \fdftc{1}{1}{i_1},\fdftcl{1}{1}{i_2},\fdftl{1}{1}{i_2}) \big)
\\&+\frac{1}{T}\sum_{k=1}^{\lfloor N/2 \rfloor} \sum_{j=1}^M \Tr \Bigg(S_{1234 }\Big(\cm(\fdft{1}{1}{i_1}, \fdftc{1}{1}{i_1}) \otimes \cm(\fdftcl{1}{1}{i_2},\fdftl{1}{1}{i_2}) \Big)\Bigg)
\\& +\frac{1}{T}\sum_{k=1}^{\lfloor N/2 \rfloor} \sum_{j=1}^M \Tr\Bigg( S_{1324 }\Big(\cm(\fdft{1}{1}{i_1}, \fdftcl{1}{1}{i_2}) \otimes \cm(\fdftc{1}{1}{i_1},\fdftl{1}{1}{i_2}) \Big)\Bigg)
\\&+\frac{1}{T}\sum_{k=1}^{\lfloor N/2 \rfloor} \sum_{j=1}^M \Tr \Bigg(S_{1423 }\Big(\cm(\fdft{1}{1}{i_1}, \fdftl{1}{1}{i_2}) \otimes \cm(\fdftc{1}{1}{i_1},\fdftcl{1}{1}{i_2}) \Big)\Bigg)
\end{align*}
By \autoref{cor:cumbound} and\autoref{lem:cumfixu} we thus find
\begin{align*}
\E {F}_{i_1,i_2} &	= \frac{1}{T}\sum_{k=1}^{\lfloor N/2 \rfloor} \sum_{j=1}^M \innprod{\mathcal{F}_{u_j,\omega_{k}}^{i_1}}{\mathcal{F}_{u_j,\omega_{k-1}}^{i_2}}_{HS} + O(\frac{1}{M^2})+O(\frac{1}{N}).
\end{align*}
Hence,
\[
\lim_{N,M\to\infty}\operatorname \E F_{i_1,i_2}
	=\frac1{2\pi}\int_0^\pi\int_0^1 \innprod{\mathcal{F}_{u,\omega}^{i_1}}{\mathcal{F}_{u,\omega}^{i_2}}_{HS} \mathrm du\mathrm d\omega = \frac1{4\pi}\int_{-\pi}^\pi\int_0^1 \innprod{\mathcal{F}_{u,\omega}^{i_1}}{\mathcal{F}_{u,\omega}^{i_2}}_{HS}  \mathrm du\mathrm d\omega.
\]
Secondly, we have for any $i_1, i_2, i_3, i_4 \in \{1, \ldots,d\}$ 
\[T \cv({F}_{i_1,i_2}, {F}_{i_3,i_4})  = T\cm\big({{F}_{i_1,i_2}, \overline{{F}_{i_3,i_4}}}\big)\]
Hence, if \autoref{cumglsp} is satisfied with $m=8$, \autoref{thm:highcumF1234} implies this term is of order $O(1)$. summarizing, we find ${F}_{i_1,i_1}, {F}_{i_2,i_2}, {F}_{i_1,i_2}$ and ${F}_{i_2,i_1}$ are asymptotically unbiased and jointly convergence in probability. The continuous mapping theorem establishes then that $\hat{\A}_{i_1,i_2}$ is a $\sqrt{T}$-consistent estimator of ${\A}_{i_1,i_2}$ for any $i_1, i_2,\in \{ 1 \ldots, d\}$.
\end{proof}

\begin{proof}[\bf Proof of \autoref{thm:AN}]
 If \autoref{cumglsp} holds for all moments,  then \autoref{thm:highcumF1234}, yields that for $m>2$
\[ T^{m/2}\operatorname{cum}_{m_0,m_1,m_2,m_3}(F_{i_1,i_2},F_{i_3,i_4}, F_{i_5,i_6}, F_{i_7,i_8}) \to 0 \quad \text{as } T \to \infty,\]
from which asymptotic joint normality of $F_{i_1,i_2}, F_{i_3,i_4},  F_{i_5,i_6}, F_{i_7,i_8}$ follows, i.e.,  we have
\begin{eqnarray*}
\hspace*{-20pt}
\sqrt{T}
\begin{pmatrix}
4 \pi F_{i_1,i_1}-\E(F_{i_1,i_1})\\
4 \pi F_{i_2,i_2}-\E(F_{i_2,i_2})\\
4 \pi F_{i_1,i_2}-\E(F_{i_1,i_2})\\
4 \pi F_{i_2,i_1}-\E(F_{i_1,i_1}) 
\end{pmatrix} \longrightarrow  \mathcal{N} (\mathbf{0}, \boldsymbol{\Sigma}) ,
\end{eqnarray*}
where
\begin{eqnarray} \label{eq:bigSig}
\boldsymbol{\Sigma}=
\left(\begin{array}{cccc}
 \V({F}_{i_1,i_1}) &\cv({F}_{i_1,i_1},{F}_{i_2,i_2}) &\cv({F}_{i_1,i_1},{F}_{i_1,i_2}) & \cv({F}_{i_1,i_1},{F}_{i_2,i_1})\\
\cv({F}_{i_1,i_1},{F}_{i_2,i_2})&  \V({F}_{i_2,i_2}) &\cv({F}_{i_2,i_2},{F}_{i_1,i_2}) & \cv({F}_{i_2,i_2},{F}_{i_2,i_1})\\
\cv({F}_{i_1,i_1},{F}_{i_1,i_2}) &\cv({F}_{i_2,i_2},{F}_{i_1,i_2}) &  \V({F}_{i_1,i_2}) & \cv({F}_{i_1,i_2},{F}_{i_2,i_1})\\
 \cv({F}_{i_1,i_1},{F}_{i_2,i_1})& \cv({F}_{i_2,i_2},{F}_{i_2,i_1}) &\cv({F}_{i_1,i_2},{F}_{i_2,i_1}) & \V({F}_{i_2,i_1})
\end{array}\right).
\end{eqnarray}
To derive from this the distribution of $\hat{\A}_{i_1,i_2}$, consider the function $g: \mathbb{R}^4 \to \mathbb{R}$
\[g(x_1,x_2,x_3,x_4) =1-\frac{x_3}{(x_1+x_2)}-\frac{x_4}{(x_1+x_2)} \]
of which the gradient is given by
\[
\nabla g^{\top}(\boldsymbol{x})
 = \left(\begin{array}{c}
x_3 (x_1+x_2)^{-2}+x_4 (x_1+x_2)^{-2}\\
x_3 (x_1+x_2)^{-2}+x_4 (x_1+x_2)^{-2}\\
- (x_1+x_2)^{-1}\\
- (x_1+x_2)^{-1}
\end{array}\right)
=
\frac{1}{(x_1+x_2)} \left(\begin{array}{c}
\frac{x_3+x_4}{ (x_1+x_2)}\\
\frac{x_3+x_4}{ (x_1+x_2)}\\
-1\\
-1
\end{array}\right).
\]
Then since we can write
\[\hat{\A}_{i_1,i_2}=g\big(F_{i_1,i_1}, F_{i_2,i_2}, F_{i_1,i_2}, F_{i_2,i_1}\big) =1-\frac{F_{i_1,i_2}}{(F_{i_1,i_1}+F_{i_2,i_2})}-\frac{F_{i_2,i_1}}{(F_{i_1,i_1}+F_{i_2,i_2})},\]
the Delta method implies, that as $T \to \infty$,
\begin{align} \label{eq:ANalt}
\left\{\sqrt{T}\left(\hat{\A}_{i_1,i_2} - \A_{i_1,i_2}\right)\right\}_{\{i_1,i_2 \in [d]\}} \to  \mathcal{N} \big(\boldsymbol{0}, \nabla g^{\top}(\boldsymbol{x}) \boldsymbol{\Sigma} \nabla g(\boldsymbol{x}) \big),
\end{align}
where for fixed $i_1, i_2 \in  \{1,\ldots, d\}$,
\[ \boldsymbol{x}=\left(\begin{array}{c}
{x}_{1}\\
x_2 \\
x_3\\
x_4\\
\end{array}\right)
=\left(\begin{array}{c}
 \frac1{4\pi}\int_{-\pi}^\pi\int_0^1 \innprod{\mathcal{F}_{u,\omega}^{i_1}}{\mathcal{F}_{u,\omega}^{i_1}}_{HS}  \mathrm du\mathrm d\omega
 \\
 \frac1{4\pi}\int_{-\pi}^\pi\int_0^1 \innprod{\mathcal{F}_{u,\omega}^{i_2}}{\mathcal{F}_{u,\omega}^{i_2}}_{HS}  \mathrm du\mathrm d\omega\\
  \frac1{4\pi}\int_{-\pi}^\pi\int_0^1 \innprod{\mathcal{F}_{u,\omega}^{i_1}}{\mathcal{F}_{u,\omega}^{i_2}}_{HS}  \mathrm du\mathrm d\omega
  \\
   \frac1{4\pi}\int_{-\pi}^\pi\int_0^1 \innprod{\mathcal{F}_{u,\omega}^{i_2}}{\mathcal{F}_{u,\omega}^{i_1}}_{HS}  \mathrm du\mathrm d\omega,
\end{array} \right) \tageq \label{eq:x}
\]  
and  $\boldsymbol{\Sigma} $ is defined in \eqref{eq:bigSig}.
Next, consider the covariance structure. We have,
\begin{align*}
& T \cv({F}_{i_1,i_2}, {F}_{i_3,i_4}) 
= T \cm\left(\frac{1}{T}\sum_{k_1=1}^{\lfloor N/2 \rfloor} \sum_{j_1=1}^M \innprod{ I_{i_1}^{u_{j_1},\omega_{k_1}}}{I_{i_2}^{u_{j_1},\omega_{k_1-1}}}_{HS},\frac{1}{T}\sum_{k_2=1}^{\lfloor N/2 \rfloor} \sum_{j_2=1}^M \overline{\innprod{I_{i_3}^{u_{j_2},\omega_{k_2}}}{I_{i_4}^{u_{j_2},\omega_{k_2-1}}}}_{HS}\right) 
\end{align*}
By \autoref{thm:highcumF1234}, this satisfies
\begin{align*}
T \cv({F}_{i_1,i_2}, {F}_{i_3,i_4}) & 
=\frac{1}{T}\sum_{k_1,k_2=1}^{\lfloor N/2 \rfloor} \sum_{j_1,j_2=1}^M \cm\Big(\Tr \big(\fdft{1}{1}{i_1}\otimes\fdftc{1}{1}{i_1} \otimes \fdftcl{1}{1}{i_2} \otimes \fdftl{1}{1}{i_2})\big),\\&  
\phantom{\frac{1}{T}\sum_{k_1,k_2=1}^{\lfloor N/2 \rfloor} \sum_{j_1,\ldots,j_n=1}^M \cm\Big(\Tr}\Tr \big(\fdftc{2}{2}{i_3}\otimes\fdft{2}{2}{i_3} \otimes \fdftl{2}{2}{i_4} \otimes \fdftcl{2}{2}{i_4}\big)\Big) 
 \\& =\frac{1}{T}\sum_{k_1,k_2=1}^{\lfloor N/2 \rfloor} \sum_{j_1,j_2=1}^M\Tr\Big(\sum_{\boldsymbol{P} = P_1 \cup \ldots \cup P_G}S_{\boldsymbol{P}}\Big(\otimes_{g=1}^{G}
 \operatorname{cum}\big(D_{i_{p}}^{u_{j_p},\omega_{k_p}}\ | p \in P_g\Big)  \Big)\end{align*}
where $p=(l,m)$ with $k_{p} = (-1)^{l-m} k_l-\delta_{\{m \in \{3,4\}\}}$, $j_p= j_l$ and $i_{p} = i_{2l - \delta\{m \in \{1,2\}\}}$ for $l \in\{1, 2\}$ and $m \in \{1,2,3,4\}$.
 In particular, we are interested in all indecomposable partitions of the array where the summation is overall indecomposable partitions of the array
\[\begin{matrix}
\underbrace{D_{i_1}^{u_{j_1},\omega_{k_1}}}_{1}&\underbrace{D_{i_1}^{u_{j_1},-\omega_{k_1}}}_2 & \underbrace{D_{i_2}^{u_{j_1},-\omega_{k_1-1}}}_3& \underbrace{D_{i_2}^{u_{j_1},\omega_{k_1-1}}}_4\\ 
\underbrace{D_{i_3}^{u_{j_2},-\omega_{k_2}}}_5 &\underbrace{D_{i_3}^{u_{j_2},\omega_{k_2}}}_6&\underbrace{D_{i_4}^{u_{j_2},\omega_{k_2-1}}}_7& \underbrace{D_{i_4}^{u_{j_2},-\omega_{k_2-1}}}_8
\end{matrix}\]
The significant partitions of the form $\cm_4 \cm_2 \cm2$ are
\begin{align*}
& \Tr\Big(S_{(1256)(34)(78)}\Big( \delta_{j_1,j_2}\left[(\frac{2 \pi}{N}\F^{i_1,i_3}_{u_{j_1},\omega_{k_1},-\omega_{k_1},-\omega_{k_2}}+\Eps_4) \otimes (\F^{i_2}_{u_{j_1},-\omega_{k_1-1}}+\Eps_2) \otimes (\F^{i_4}_{u_{j_2},\omega_{k_2-1}}+\Eps_2)\right] \Big) \Big)\\ 
& \Tr\Big(S_{(1278)(34)(56)}\Big(\delta_{j_1,j_2}\left[ (\frac{2 \pi}{N}\F^{i_1,i_4}_{u_{j_1},\omega_{k_1},-\omega_{k_1},\omega_{k_2-1}}+\Eps_4) \otimes (\F^{i_2}_{u_{j_1},-\omega_{k_1-1}}+\Eps_2) \otimes ( \F^{i_3}_{u_{j_2},-\omega_{k_2}}+\Eps_2) \right]\Big) \Big)\\ 
&\Tr\Big(S_{(3456)(12)(78)}\Big(\delta_{j_1,j_2}\left[(\frac{2 \pi}{N}\F^{i_2,i_3}_{u_{j_1},-\omega_{k_1-1},\omega_{k_1-1},-\omega_{k_2}}+\Eps_4)\otimes (\F^{i_1}_{u_{j_1},\omega_{k_1}}+\Eps_2) \otimes(\F^{i_4}_{u_{j_2},\omega_{k_2-1}}+\Eps_2)\right] \Big) \Big)  \\ 
& \Tr\Big(S_{(3478)(12)(56)} \Big(\delta_{j_1,j_2}\left[(\frac{2 \pi}{N}\F^{i_2,i_4}_{u_{j_1},-\omega_{k_1-1},\omega_{k_1-1},\omega_{k_2-1}}+\Eps_4) \otimes ( \F^{i_1}_{u_{j_1},\omega_{k_1}}+\Eps_2) \otimes (\F^{i_3}_{u_{j_2},-\omega_{k_2}}+\Eps_2\right]\Big) \Big).  
\end{align*}

The significant partitions of the form $\cm_2 \cm_2 \cm_2 \cm_2$ are
\begin{align*}
\Tr\Big(S_{(12)(37)(56)(48)}&\Big(\delta_{j_1,j_2} \delta_{k_1,k_2} \big[\F^{i_1}_{u_{j_1},\omega_{k_1}}\otimes \F^{i_2,i_4}_{u_{j_1},-\omega_{k_1-1}}\otimes \F^{i_3}_{u_{j_2},-\omega_{k_2}} \otimes \F^{i_2,i_4}_{u_{j_1},\omega_{k_1-1}}+\Eps_2\big]\Big)\\ 
\Tr\Big(S_{(12)(36)(78)(45)}& \Big(\delta_{j_1,j_2} \delta_{k_1-1,k_2} \big[\F^{i_1}_{u_{j_1},\omega_{k_1}}\otimes\F^{i_2,i_3}_{u_{j_1},-\omega_{k_1-1}}\otimes \F^{i_4}_{u_{j_2},\omega_{k_2-1}} \otimes \F^{i_2,i_3}_{u_{j_1},\omega_{k_1-1}}+\Eps_2\big]\Big)\\  
\Tr\Big(S_{(15)(26)(37)(48)}&\Big(\delta_{j_1,j_2} \delta_{k_1,k_2} \big[\F^{i_1,i_3}_{u_{j_1},\omega_{k_1}}\otimes \F^{i_1,i_3}_{u_{j_1},-\omega_{k_1}}\otimes \F^{i_2,i_4}_{u_{j_1},-\omega_{k_1-1}} \otimes \F^{i_2,i_4}_{u_{j_1},\omega_{k_1-1}}+\Eps_2\big]\Big) \\ 
\Tr\Big(S_{(15)(26)(34)(78)}&\Big(\delta_{j_1,j_2} \delta_{k_1,k_2} \big[\F^{i_1,i_3}_{u_{j_1},\omega_{k_1}}\otimes \F^{i_1,i_3}_{u_{j_1},-\omega_{k_1}}\otimes \F^{i_2}_{u_{j_1},-\omega_{k_1-1}}\otimes \F^{i_4}_{u_{j_2},\omega_{k_2-1}}+\Eps_2\big]\Big)  \\ 
\Tr\Big(S_{(18)(27)(34)(56)}&\Big(\delta_{j_1,j_2} \delta_{k_1,k_2-1} \big[\F^{i_1,i_4}_{u_{j_1},\omega_{k_1}}\otimes \F^{i_1,i_4}_{u_{j_1},-\omega_{k_1}}\otimes \F^{i_2}_{u_{j_1},-\omega_{k_1-1}}\otimes \F^{i_3}_{u_{j_2},-\omega_{k_2}}+\epsilon_2\big]\Big).
\end{align*}
The covariance structure are derived in more detail in the proof of \autoref{dist_H0}. 
\end{proof}

\begin{proof}[\bf Proof of \autoref{dist_H0}]
If the series are independent, possible cross spectral terms drop out. The significant partitions of the form  $\cm_4 \cm_2 \cm2$ become therefore
\begin{align*}
& \Tr\Big(S_{(1256)(34)(78)}\Big(\delta_{i_1,i_3} \delta_{j_1,j_2}\left[(\frac{2 \pi}{N}\F^{i_1}_{u_{j_1},\omega_{k_1},-\omega_{k_1},-\omega_{k_2}}+\Eps_4) \otimes (\F^{i_2}_{u_{j_1},-\omega_{k_1-1}}+\Eps_2) \otimes (\F^{i_4}_{u_{j_2},\omega_{k_2-1}}+\Eps_2)\right] \Big) \Big)\\ 
& \Tr\Big(S_{(1278)(34)(56)}\Big(\delta_{i_1,i_4}\delta_{j_1,j_2}\left[ (\frac{2 \pi}{N}\F^{i_1}_{u_{j_1},\omega_{k_1},-\omega_{k_1},\omega_{k_2-1}}+\Eps_4) \otimes (\F^{i_2}_{u_{j_1},-\omega_{k_1-1}}+\Eps_2) \otimes ( \F^{i_3}_{u_{j_2},-\omega_{k_2}}+\Eps_2) \right]\Big) \Big)\\ 
&\Tr\Big(S_{(3456)(12)(78)}\Big(\delta_{i_2,i_3}\delta_{j_1,j_2}\left[(\frac{2 \pi}{N}\F^{i_2}_{u_{j_1},-\omega_{k_1-1},\omega_{k_1-1},-\omega_{k_2}}+\Eps_4)\otimes (\F^{i_1}_{u_{j_1},\omega_{k_1}}+\Eps_2) \otimes(\F^{i_4}_{u_{j_2},\omega_{k_2-1}}+\Eps_2)\right] \Big) \Big)  \\ 
& \Tr\Big(S_{(3478)(12)(56)} \Big(\delta_{i_2,i_4}\delta_{j_1,j_2}\left[(\frac{2 \pi}{N}\F^{i_2}_{u_{j_1},-\omega_{k_1-1},\omega_{k_1-1},\omega_{k_2-1}}+\Eps_4) \otimes ( \F^{i_1}_{u_{j_1},\omega_{k_1}}+\Eps_2) \otimes (\F^{i_3}_{u_{j_2},-\omega_{k_2}}+\Eps_2\right]\Big) \Big).  
\end{align*}
The significant partitions of the form $\cm_2 \cm_2 \cm_2 \cm_2$ are
\begin{align*}
\Tr\Big(S_{(12)(37)(56)(48)}&\Big(\delta_{i_2,i_4}\delta_{j_1,j_2} \delta_{k_1,k_2} \big[\F^{i_1}_{u_{j_1},\omega_{k_1}}\otimes \F^{i_2}_{u_{j_1},-\omega_{k_1-1}}\otimes \F^{i_3}_{u_{j_2},-\omega_{k_2}} \otimes \F^{i_2}_{u_{j_1},\omega_{k_1-1}}+\Eps_2\big]\Big)\\ 
\Tr\Big(S_{(12)(36)(78)(45)}& \Big(\delta_{i_2,i_3}\delta_{j_1,j_2} \delta_{k_1-1,k_2} \big[\F^{i_1}_{u_{j_1},\omega_{k_1}}\otimes\F^{i_2}_{u_{j_1},-\omega_{k_1-1}}\otimes \F^{i_4}_{u_{j_2},\omega_{k_2-1}} \otimes \F^{i_2}_{u_{j_1},\omega_{k_1-1}}+\Eps_2\big]\Big)\\  
\Tr\Big(S_{(15)(26)(37)(48)}&\Big(\delta_{i_1,i_3}\delta_{i_2,i_4}\delta_{j_1,j_2} \delta_{k_1,k_2} \big[\F^{i_1}_{u_{j_1},\omega_{k_1}}\otimes \F^{i_1}_{u_{j_1},-\omega_{k_1}}\otimes \F^{i_2}_{u_{j_1},-\omega_{k_1-1}} \otimes \F^{i_2}_{u_{j_1},\omega_{k_1-1}}+\Eps_2\big]\Big) \\ 
\Tr\Big(S_{(15)(26)(34)(78)}&\Big(\delta_{i_1,i_3}\delta_{j_1,j_2} \delta_{k_1,k_2} \big[\F^{i_1}_{u_{j_1},\omega_{k_1}}\otimes \F^{i_1}_{u_{j_1},-\omega_{k_1}}\otimes \F^{i_2}_{u_{j_1},-\omega_{k_1-1}}\otimes \F^{i_4}_{u_{j_2},\omega_{k_2-1}}+\Eps_2\big]\Big)  \\ 
\Tr\Big(S_{(18)(27)(34)(56)}&\Big(\delta_{i_1,i_4}\delta_{j_1,j_2} \delta_{k_1,k_2-1} \big[\F^{i_1}_{u_{j_1},\omega_{k_1}}\otimes \F^{i_1}_{u_{j_1},-\omega_{k_1}}\otimes \F^{i_2}_{u_{j_1},-\omega_{k_1-1}}\otimes \F^{i_3}_{u_{j_2},-\omega_{k_2}}+\epsilon_2\big]\Big).
\end{align*}

In order to give meaning to the covariance structure, we need to investigate how it `operates' as a result of the permutation that occurs due to the cumulant operation. For the second order cumulant structure, Theorem \ref{thm:cumHSinprod}] implies that the original order of the simple tensors has structure $\Tr(S_{1234}\cdot \otimes \cdot \otimes \cdot \otimes \cdot \widetilde{\bigotimes} S_{5678} \cdot \otimes \cdot \otimes \cdot \otimes \cdot )$. Using the properties listed in  \autoref{tensorprop} we obtain the following.
\begin{prop}[Trace of permutations of order 8]\label{prop:2ndstruc}
Let $S_{\nu}$ be the permutation operator as defined in \eqref{prodcumthm} that acts on a tensor Hilbert space of appropriate order. The object $\Tr(S_{1234}\cdot \otimes \cdot \otimes \cdot \otimes \cdot \widetilde{\bigotimes} S_{5678} \cdot \otimes \cdot \otimes \cdot \otimes \cdot )$ leads to the following correspondence of simple tensors; $1 \leftrightarrow3$, $2 \leftrightarrow 4$, $5 \leftrightarrow 7$, $6 \leftrightarrow 8$. Then for $X \in  \mathcal{H}^{{\otimes^4}}$ and $Y, Z,  X \in  \mathcal{H}^{\otimes^2}$, properties \autoref{tensorprop} imply
\begin{align*}
 \Tr\Big(S_{(1256)(12)(56)} X \otimes Y \otimes Z)&= \Tr\big( X (\overline{Y} \bigotimes {Z}) ^{\dagger}\big) =\innprod{X}{\overline{Y} \bigotimes {Z}}_{HS}\\
 \Tr\Big(S_{(1256)(12)(56)} X \otimes Y \otimes Z)&= \Tr\big( X (\overline{Y} \bigotimes {Z}) ^{\dagger}\big) =\innprod{X}{\overline{Y} \bigotimes {Z}}_{HS}\\
\Tr\Big(S_{(1256)(12)(56)} X \otimes Y \otimes Z)&=  \Tr\big( X (\overline{Y} \bigotimes {Z}) ^{\dagger}\big) =\innprod{X}{\overline{Y} \bigotimes {Z}}_{HS}\\
 \Tr\Big(S_{(1256)(12)(56)}X \otimes Y \otimes Z)&= \Tr\big( X (\overline{Y} \bigotimes {Z}) ^{\dagger}\big) =\innprod{X}{\overline{Y} \bigotimes {Z}}_{HS},\tageq \label{eq:2ndcum4}
\end{align*}
and for $W, X, Y, Z \in X \in \mathcal{H}^{\otimes^2}$, these imply
\begin{align*}
\Tr\Big(S_{(12)(15)(56)(26)} W \otimes  X \otimes Y \otimes Z\Big) &
=\innprod{\overline{W}^{\dagger}X}{\overline{Z}Y^{\dagger}}_{HS}\\ 
\Tr\Big(S_{(12)(16)(56)(25)} W \otimes  X \otimes Y \otimes Z\Big)&=
 \innprod{\overline{W}^{\dagger}X}{\overline{Z}\overline{Y}}_{HS}\\
\Tr\Big(S_{(15)(26)(15)(26)} W \otimes  X \otimes Y \otimes Z\Big)&= 
\innprod{W}{\overline{Y}}_{HS}\innprod{X}{\overline{Z}}_{HS}\\ 
\Tr\Big(S_{(15)(26)(12)(56)} W \otimes  X \otimes Y \otimes Z\Big) &=\innprod{W \widetilde{\bigotimes} \overline{X}}{(\overline{Y} \bigotimes Z)}_{HS} \\ 
\Tr\Big(S_{(16)(25)(12)(56)}  W \otimes  X \otimes Y \otimes Z\Big)&=\innprod{W \widetilde{\bigotimes}_{\top} {X}}{(\overline{Y} \bigotimes Z)}_{HS}. \tageq \label{eq:2ndcum2}
\end{align*}
\end{prop}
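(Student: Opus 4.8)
The plan is to reduce every trace of a permutation acting on the eight tensor slots to a product of Hilbert--Schmidt inner products, by combining the \emph{base} pairing that comes from the two Hilbert--Schmidt inner products with the pairing imposed by the cumulant partition $S_{\boldsymbol P}$, and then invoking \autoref{tensorprop}. First I would make the base structure explicit. By \autoref{thm:cumHSinprod} each covariance term originates from $\Tr\big(S_{1234}\,\cdot\otimes\cdot\otimes\cdot\otimes\cdot\ \widetilde{\bigotimes}\ S_{5678}\,\cdot\otimes\cdot\otimes\cdot\otimes\cdot\big)$, where the first four slots realize the first Hilbert--Schmidt inner product and the last four the conjugated second one. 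Writing each inner product as a trace via property (1) of \autoref{tensorprop}, $\innprod{A}{B}_{HS}=\Tr(AB^\dagger)$, pairs slot $1$ with slot $3$ and slot $2$ with slot $4$ in the first factor, and slots $5,7$ and $6,8$ in the second; this is exactly the stated correspondence $1\leftrightarrow3$, $2\leftrightarrow4$, $5\leftrightarrow7$, $6\leftrightarrow8$. Composing this base permutation with $S_{\boldsymbol P}$ yields, for each listed partition, a single permutation of the eight slots whose cycle structure can be read off directly.

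For the partitions of $\cm_4\cm_2\cm_2$ type the fourth--order cumulant is an element $X\in\mathcal H^{\otimes4}$ and the two second--order cumulants are $Y,Z\in\mathcal H^{\otimes2}\cong S_2(\mathcal H)$. I would verify the identity on elementary tensors $X=x_1\otimes x_2\otimes x_3\otimes x_4$, $Y=y_1\otimes y_2$, $Z=z_1\otimes z_2$, where by property (3) each side becomes an explicit product of scalar inner products; the grouping dictated by the permutation reassembles $Y$ and $Z$ into the Hilbert tensor $\overline Y\bigotimes Z$, and the contraction against $X$ is precisely $\Tr\big(X(\overline Y\bigotimes Z)^\dagger\big)=\innprod{X}{\overline Y\bigotimes Z}_{HS}$. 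The conjugation on $Y$ is forced by the adjoint in property (1) together with conjugate--linearity of $\innprod{\cdot}{\cdot}_{HS}$ in its second argument. The identity then extends to all of $S_2(\mathcal H)$ by bilinearity and the density of finite--rank operators.

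For the partitions of $\cm_2\cm_2\cm_2\cm_2$ type all four factors $W,X,Y,Z$ lie in $S_2(\mathcal H)$, and the permutation determines \emph{how} factors are joined: ordinary operator composition when two slots meet ``head to tail'', the Kronecker product $\widetilde{\bigotimes}$ when they meet in direct order, and the transpose Kronecker product $\widetilde{\bigotimes}_{\top}$ in the crossed order. Property (5) of \autoref{tensorprop}, which rewrites a Hilbert tensor of two rank--one operators as a (transpose) Kronecker product, is the identity that converts each permuted tensor into the stated form; again I would check each line on rank--one $W,X,Y,Z$ and extend by density. For instance, $S_{(12)(15)(56)(26)}$ joins $W,X$ through a shared middle slot as $\overline W^\dagger X$ and $Y,Z$ as $\overline Z Y^\dagger$, contracting into $\innprod{\overline W^\dagger X}{\overline Z Y^\dagger}_{HS}$, whereas $S_{(15)(26)(37)(48)}$ factors into the product $\innprod{W}{\overline Y}_{HS}\innprod{X}{\overline Z}_{HS}$, and $S_{(15)(26)(12)(56)}$, $S_{(16)(25)(12)(56)}$ produce the Kronecker-- and transpose--Kronecker--type inner products $\innprod{W\widetilde{\bigotimes}\overline X}{\overline Y\bigotimes Z}_{HS}$ and $\innprod{W\widetilde{\bigotimes}_{\top}X}{\overline Y\bigotimes Z}_{HS}$ respectively.

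The main obstacle is purely the bookkeeping of \emph{conjugations}: each time two slots are contracted through an adjoint or through the conjugate--linear second argument of $\innprod{\cdot}{\cdot}_{HS}$, exactly one factor acquires a bar, and the various $\overline Y$, $\overline W^\dagger$, $\overline Z Y^\dagger$ and $\widetilde{\bigotimes}_{\top}$ appearing in the statement are entirely accounted for by these. There is no analytic content beyond \autoref{tensorprop}; the safe route is to fix the index and conjugation conventions once, verify every identity on elementary tensors where each side is a transparent product of scalars, and then extend by linearity and the density of finite--rank operators in $(S_2(\mathcal H),\snorm{\cdot}_2)$.
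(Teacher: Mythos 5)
Your proposal is correct and follows essentially the same route as the paper, which states the proposition without an explicit argument and simply asserts that it follows from the identities in \autoref{tensorprop} after composing the base pairing $1\leftrightarrow3$, $2\leftrightarrow4$, $5\leftrightarrow7$, $6\leftrightarrow8$ with the cumulant partition. Your plan of verifying each line on elementary (rank-one) tensors via properties (1), (3) and (5) and extending by linearity and density merely makes explicit the bookkeeping the paper leaves implicit.
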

Therefore using Proposition \ref{prop:2ndstruc}and in particular \eqref{eq:2ndcum4}\eqref{eq:2ndcum2}, the corresponding terms of the covariance equal
	\begin{align*}
		T \cv({F}_{i_1,i_2}, &{F}_{i_3,i_4})
	=\frac{1}{T}\sum_{k_1,k_2=1}^{\lfloor N/2 \rfloor} \sum_{j_1,j_2=1}^M\delta_{i_1,i_3} \delta_{j_1,j_2}\left[ \biginnprod{\frac{2 \pi}{N}\F^{i_1}_{u_{j_1},\omega_{k_1},-\omega_{k_1},-\omega_{k_2}}}{\F^{i_2}_{u_{j_1},\omega_{k_1-1}} \bigotimes \F^{i_4}_{u_{j_2},\omega_{k_2-1}} }_{HS} +O(\frac{1}{T})\right]\\
	&+\frac{1}{T}\sum_{k_1,k_2=1}^{\lfloor N/2 \rfloor} \sum_{j_1,j_2=1}^M\delta_{i_1,i_4}\delta_{j_1,j_2}\left[ \biginnprod{ \frac{2 \pi}{N}\F^{i_1}_{u_{j_1},\omega_{k_1},-\omega_{k_1},\omega_{k_2-1}}}{ \F^{i_2}_{u_{j_1},\omega_{k_1-1}}\bigotimes \F^{i_3}_{u_{j_2},-\omega_{k_2}}}_{HS} +O(\frac{1}{T})\right]\\
	&+\frac{1}{T}\sum_{k_1,k_2=1}^{\lfloor N/2 \rfloor} \sum_{j_1,j_2=1}^M\delta_{i_2,i_3}\delta_{j_1,j_2}\left[ \biginnprod{\frac{2 \pi}{N}\F^{i_2}_{u_{j_1},-\omega_{k_1-1},\omega_{k_1-1},-\omega_{k_2}}}{\F^{i_1}_{u_{j_1},-\omega_{k_1}} \bigotimes \F^{i_4}_{u_{j_2},\omega_{k_2-1}}}_{HS} +O(\frac{1}{T})\right]\\
	&+\frac{1}{T}\sum_{k_1,k_2=1}^{\lfloor N/2 \rfloor} \sum_{j_1,j_2=1}^M\delta_{i_2,i_4}\delta_{j_1,j_2}\left[ \biginnprod{(\frac{2 \pi}{N}\F^{i_2}_{u_{j_1},-\omega_{k_1-1},\omega_{k_1-1},\omega_{k_2-1}}}{\F^{i_1}_{u_{j_1},-\omega_{k_1}} \bigotimes \F^{i_3}_{u_{j_2},-\omega_{k_2}}}_{HS} +O(\frac{1}{T})\right]\\
	&+\frac{1}{T}\sum_{k_1,k_2=1}^{\lfloor N/2 \rfloor} \sum_{j_1,j_2=1}^M\delta_{i_2,i_4}\delta_{j_1,j_2} \delta_{k_1,k_2}  \left[ 
\innprod{{\F^{\dagger,i_1}_{u_{j_1},-\omega_{k_1}}} \F^{i_2}_{u_{j_1},-\omega_{k_1-1}}}{{\F^{i_2}_{u_{j_1},-\omega_{k_1-1}}}\F^{\dagger,i_3}_{u_{j_2},-\omega_{k_2}}} +O(\frac{1}{M^2})\right]\\
	&+\frac{1}{T}\sum_{k_1,k_2=1}^{\lfloor N/2 \rfloor} \sum_{j_1,j_2=1}^M\delta_{i_2,i_3}\delta_{j_1,j_2} \delta_{k_1-1,k_2} \left[ \innprod{{\F^{\dagger,i_1}_{u_{j_1},-\omega_{k_1}}}\F^{i_2}_{u_{j_1},-\omega_{k_1-1}}}{\F^{i_2}_{u_{j_1},-\omega_{k_1-1}}\F^{i_4}_{u_{j_2},-\omega_{k_2-1}}}_{HS}
 +O(\frac{1}{M^2})\right]
 \\& +\frac{1}{T}\sum_{k_1,k_2=1}^{\lfloor N/2 \rfloor} \sum_{j_1,j_2=1}^M\delta_{i_1,i_3}\delta_{i_2,i_4}\delta_{j_1,j_2} \delta_{k_1,k_2} \ \left[ \innprod{\F^{i_1}_{u_{j_1},\omega_{k_1}}}{\F^{i_2}_{u_{j_1},\omega_{k_1-1}}}_{HS}\innprod{ \F^{i_1}_{u_{j_1},-\omega_{k_1}}}{ \F^{i_2}_{u_{j_1},-\omega_{k_1-1}}}_{HS}
 +O(\frac{1}{M^2})\right]
  \\& +\frac{1}{T}\sum_{k_1,k_2=1}^{\lfloor N/2 \rfloor} \sum_{j_1,j_2=1}^M\delta_{i_1,i_3}\delta_{j_1,j_2} \delta_{k_1,k_2}  \left[\innprod{\F^{i_1}_{u_{j_1},\omega_{k_1}} \widetilde{\bigotimes} \F^{i_1}_{u_{j_1},\omega_{k_1}}}{\F^{i_2}_{u_{j_1},\omega_{k_1-1}} \bigotimes \F^{i_4}_{u_{j_2},\omega_{k_2-1}}}_{HS}
 +O(\frac{1}{M^2})\right]
   \\& +\frac{1}{T}\sum_{k_1,k_2=1}^{\lfloor N/2 \rfloor} \sum_{j_1,j_2=1}^M
 \delta_{i_1,i_4}\delta_{j_1,j_2} \delta_{k_1,k_2-1}\left[\innprod{\F^{i_1}_{u_{j_1},\omega_{k_1}} \widetilde{\bigotimes}_{\top}  \F^{i_1}_{u_{j_1},-\omega_{k_1}}}{ \F^{i_2}_{u_{j_1},\omega_{k_1-1}} \bigotimes \F^{i_3}_{u_{j_2},-\omega_{k_2}}}_{HS}+O(\frac{1}{M^2})\right]
\end{align*}

So that, as $N, M \to \infty$,
\begin{align*}
		T \cv({F}_{i_1,i_2}, {F}_{i_3,i_4})
	&\to   \delta_{i_1,i_3} \frac{1}{8\pi} \int_{-\pi}^{\pi} \int_{-\pi}^{\pi}\int_0^{1}  \biginnprod{\F^{i_1}_{u,\omega_{1},-\omega_{1},-\omega_{2}}}{\F^{i_2}_{u,\omega_{1}} \bigotimes \F^{i_4}_{u,\omega_{2}} }_{HS}du d\omega_1 d\omega_2\\
	&+\delta_{i_1,i_4}  \frac{1}{8\pi} \int_{-\pi}^{\pi} \int_{-\pi}^{\pi}\int_0^{1} \biginnprod{\F^{i_1}_{u,\omega_{1},-\omega_{1},\omega_{2}}}{ \F^{i_2}_{u,\omega_{1}}\bigotimes \F^{i_3}_{u,-\omega_{2}}}_{HS} du d\omega_1 d\omega_2\\
	&+\delta_{i_2,i_3}  \frac{1}{8\pi} \int_{-\pi}^{\pi} \int_{-\pi}^{\pi}\int_0^{1}  \biginnprod{\F^{i_2}_{u,-\omega_{1},\omega_{1},-\omega_{2}}}{\F^{i_1}_{u,-\omega_{1}} \bigotimes \F^{i_4}_{u,\omega_{2}}}_{HS}  du d\omega_1 d\omega_2\\
	&+\delta_{i_2,i_4}  \frac{1}{8\pi} \int_{-\pi}^{\pi} \int_{-\pi}^{\pi}\int_0^{1} \biginnprod{\F^{i_2}_{u,-\omega_{1},\omega_{1},\omega_{2}}}{\F^{i_1}_{u,-\omega_{1}} \bigotimes \F^{i_3}_{u,-\omega_{2}}}_{HS}   du d\omega_1 d\omega_2\\
	&+\delta_{i_2,i_4}  \frac{1}{4\pi} \int_{-\pi}^{\pi} \int_0^{1} 
\innprod{\F^{\dagger,i_1}_{u,-\omega} \F^{i_2}_{u,-\omega}}{\F^{i_2}_{u,-\omega}\F^{\dagger,i_3}_{u,-\omega}}du d\omega\\
	&+\delta_{i_2,i_3}  \frac{1}{4\pi} \int_{-\pi}^{\pi} \int_0^{1} \innprod{\F^{\dagger,i_1}_{u,-\omega}\F^{i_2}_{u,-\omega}}{\F^{i_2}_{u,-\omega}\F^{i_4}_{u,-\omega}}_{HS}du d\omega
 \\& +\delta_{i_1,i_3}\delta_{i_2,i_4} \frac{1}{4\pi} \int_{-\pi}^{\pi} \int_0^{1} \innprod{\F^{i_1}_{u,\omega}}{\F^{i_2}_{u,\omega}}_{HS}\innprod{ \F^{i_1}_{u,-\omega}}{ \F^{i_2}_{u,-\omega}}_{HS}
du d\omega
  \\& +\delta_{i_1,i_3}  \frac{1}{4\pi}  \int_{-\pi}^{\pi} \int_0^{1} \innprod{\F^{i_1}_{u,\omega} \widetilde{\bigotimes} \F^{i_1}_{u,\omega}}{\F^{i_2}_{u,\omega} \bigotimes \F^{i_4}_{u,\omega}}_{HS}du d\omega
   \\& + \delta_{i_1,i_4}  \frac{1}{4\pi} \int_{-\pi}^{\pi} \int_0^{1}\innprod{\F^{i_1}_{u,\omega} \widetilde{\bigotimes}_{\top}  \F^{i_1}_{u,-\omega}}{ \F^{i_2}_{u,\omega} \bigotimes \F^{i_3}_{u,-\omega}}_{HS}du d\omega.
\end{align*}
From this we can now derive the structures of the four distinct components of the asymptotic variance of $\A^{T}_{(i_1,i_2)}$.
\begin{enumerate}
\item Setting $i_1=i_3=i_2=i_4$
\begin{align*}T\text{Var}(F_{i_1})~ \to &  \frac{2}{8\pi} \int_{-\pi}^{\pi} \int_{-\pi}^{\pi}\int_0^{1} 
\biginnprod{\F^{i_1}_{u,\omega_{1},-\omega_{1},-\omega_{2}}}{\F^{i_1}_{u,\omega_{1}} \bigotimes \F^{i_1}_{u,\omega_{2}} }_{HS}du d\omega_1 d\omega_2 \\ 
&+   \frac{2}{8\pi} \int_{-\pi}^{\pi} \int_{-\pi}^{\pi}\int_0^{1} \biginnprod{ \F^{i_1}_{u,\omega_{1},-\omega_{1},\omega_{2}}}{ \F^{i_1}_{u,\omega_{1}}\bigotimes \F^{i_1}_{u,-\omega_{2}}}_{HS}  du d\omega_1 d\omega_2 \\ 
 &+  \frac{2}{4\pi} \int_{-\pi}^{\pi} \int_0^{1} \snorm{{(\F^{i_1}_{u,\omega})}^2}^2_2 du d\omega \\ 
  &+  \frac{1}{4\pi} \int_{-\pi}^{\pi} \int_0^{1}\snorm{\F^{i_1}_{u,\omega}}^4_2 du d\omega \\ 
 &+  \frac{1}{4\pi} \int_{-\pi}^{\pi} \int_0^{1}   \innprod{\F^{i_1}_{u,\omega} \widetilde{\bigotimes} \F^{i_1}_{u,\omega}}{\F^{i_1}_{u,\omega} \bigotimes \F^{i_1}_{u,\omega}}_{HS}du d\omega \\ 
 &+  \frac{1}{4\pi} \int_{-\pi}^{\pi} \int_0^{1} \innprod{\F^{i_1}_{u,\omega} \widetilde{\bigotimes}_{\top} \F^{i_1}_{u,-\omega}}{ \F^{i_1}_{u,\omega}\bigotimes\F^{i_1}_{u,-\omega}}_{HS} du d\omega \tageq \label{eq:VarF1}
\end{align*}
where we used the self-adjointness of the spectral density operator and that, for any function $g: \mathbb{R} \to \mathbb{K}$, we have $\int_{-\pi}^{\pi} g(\omega) d\omega=\int_{-\pi}^{\pi} g(-\omega) d\omega $. From this it follows that the terms 1 and 4, 2 and 3 and 5 and 6 are respectively equal in the limit. 
\item Setting $i_1=i_3$ and $i_2=i_4$, $i_1\ne i_2 \to i_3\ne i_4, i_1 \ne i_4, i_3 \ne i_2$
\begin{align*}
T\text{Var}(F_{i_1i_2}) ~ & \to ~  \frac{1}{8\pi} \int_{-\pi}^{\pi} \int_{-\pi}^{\pi}\int_0^{1}  \biginnprod{\F^{i_1}_{u,\omega_{1},-\omega_{1},-\omega_{2}}}{\F^{i_2}_{u,\omega_{1}} \bigotimes \F^{i_2}_{u,\omega_{2}} }_{HS}du d\omega_1 d\omega_2\\
	&+  \frac{1}{8\pi} \int_{-\pi}^{\pi} \int_{-\pi}^{\pi}\int_0^{1} \biginnprod{\F^{i_2}_{u,-\omega_{1},\omega_{1},\omega_{2}}}{\F^{i_1}_{u,-\omega_{1}} \bigotimes \F^{i_1}_{u,-\omega_{2}}}_{HS}   du d\omega_1 d\omega_2\\
	&+ \frac{1}{4\pi} \int_{-\pi}^{\pi} \int_0^{1} 
\innprod{\F^{i_1}_{u,-\omega} \F^{i_2}_{u,-\omega}}{\F^{i_2}_{u,-\omega}\F^{i_1}_{u,-\omega}}du d\omega\\
& + \frac{1}{4\pi} \int_{-\pi}^{\pi} \int_0^{1} \innprod{\F^{i_1}_{u,\omega}}{\F^{i_2}_{u,\omega}}_{HS}\innprod{ \F^{i_1}_{u,-\omega}}{ \F^{i_2}_{u,-\omega}}_{HS}
du d\omega
  \\& +  \frac{1}{4\pi}  \int_{-\pi}^{\pi} \int_0^{1} \innprod{\F^{i_1}_{u,\omega} \widetilde{\bigotimes} \F^{i_1}_{u,\omega}}{\F^{i_2}_{u,\omega} \bigotimes \F^{i_2}_{u,\omega}}_{HS}du d\omega.
  \tageq \label{eq:VarF1F2}
\end{align*}
\item Setting $i_1=i_2$ and $i_3=i_4$, $i_1\ne i_3 $, we have to due independence
\begin{align}
T \cv({F}_{i_1,i_1}, {F}_{i_3,i_3}) =0.   \tageq \label{eq:cvF11F22}
\end{align}
\item Setting $i_3=i_2, i_4=i_1$, $i_1\ne i_2 \to i_3 \ne i_4, i_1 \ne i_3, i_2 \ne i_4$
\begin{align*}
T \cv({F}_{i_1,i_2}, {F}_{i_2,i_1}) ~~&  \to
   \frac{1}{8\pi} \int_{-\pi}^{\pi} \int_{-\pi}^{\pi}\int_0^{1} \biginnprod{\F^{i_1}_{u,\omega_{1},-\omega_{1},\omega_{2}}}{ \F^{i_2}_{u,\omega_{1}}\bigotimes \F^{i_2}_{u,-\omega_{2}}}_{HS} du d\omega_1 d\omega_2\\
	&+ \frac{1}{8\pi} \int_{-\pi}^{\pi} \int_{-\pi}^{\pi}\int_0^{1}  \biginnprod{\F^{i_2}_{u,-\omega_{1},\omega_{1},-\omega_{2}}}{\F^{i_1}_{u,-\omega_{1}} \bigotimes \F^{i_1}_{u,\omega_{2}}}_{HS}  du d\omega_1 d\omega_2\\
	&+  \frac{1}{4\pi} \int_{-\pi}^{\pi} \int_0^{1} \innprod{\F^{i_1}_{u,-\omega}\F^{i_2}_{u,-\omega}}{\F^{i_2}_{u,-\omega}\F^{i_1}_{u,-\omega}}_{HS}du d\omega
   \\& + \frac{1}{4\pi} \int_{-\pi}^{\pi} \int_0^{1}\innprod{\F^{i_1}_{u,\omega} \widetilde{\bigotimes}_{\top}  \F^{i_1}_{u,-\omega}}{ \F^{i_2}_{u,\omega} \bigotimes \F^{i_2}_{u,-\omega}}_{HS}du d\omega. \tageq \label{eq:cvF1221}
\end{align*}
\item Setting $i_2=i_3=i_1$ with $i_1 \neq i_4$,
\begin{align*}
T\cv(F_{i_1i_1},F_{i_1 i_4})~  \to
&   \frac{1}{8\pi} \int_{-\pi}^{\pi} \int_{-\pi}^{\pi}\int_0^{1}  \biginnprod{\F^{i_1}_{u,\omega_{1},-\omega_{1},-\omega_{2}}}{\F^{i_1}_{u,\omega_{1}} \bigotimes \F^{i_4}_{u,\omega_{2}} }_{HS}du d\omega_1 d\omega_2\\
	&+ \frac{1}{8\pi} \int_{-\pi}^{\pi} \int_{-\pi}^{\pi}\int_0^{1}  \biginnprod{\F^{i_1}_{u,-\omega_{1},\omega_{1},-\omega_{2}}}{\F^{i_1}_{u,-\omega_{1}} \bigotimes \F^{i_4}_{u,\omega_{2}}}_{HS}  du d\omega_1 d\omega_2\\
	&+ \frac{1}{4\pi} \int_{-\pi}^{\pi} \int_0^{1} \innprod{\F^{\dagger,i_1}_{u,-\omega}\F^{i_1}_{u,-\omega}}{\F^{i_1}_{u,-\omega}\F^{i_4}_{u,-\omega}}_{HS}du d\omega
  \\& +  \frac{1}{4\pi}  \int_{-\pi}^{\pi} \int_0^{1} \innprod{\F^{i_1}_{u,\omega} \widetilde{\bigotimes} \F^{i_1}_{u,\omega}}{\F^{i_1}_{u,\omega} \bigotimes \F^{i_4}_{u,\omega}}_{HS}du d\omega. 
 \tageq \label{eq:cvF1112}
\end{align*}
\item Setting $i_2=i_4=i_1$ with $i_1 \neq i_3$,
\begin{align*}
T\cv(F_{i_1i_1},F_{i_3 i_1})~  \to
	&+ \frac{1}{8\pi} \int_{-\pi}^{\pi} \int_{-\pi}^{\pi}\int_0^{1} \biginnprod{\F^{i_1}_{u,\omega_{1},-\omega_{1},\omega_{2}}}{ \F^{i_1}_{u,\omega_{1}}\bigotimes \F^{i_3}_{u,-\omega_{2}}}_{HS} du d\omega_1 d\omega_2\\
	&+  \frac{1}{8\pi} \int_{-\pi}^{\pi} \int_{-\pi}^{\pi}\int_0^{1} \biginnprod{\F^{i_1}_{u,-\omega_{1},\omega_{1},\omega_{2}}}{\F^{i_1}_{u,-\omega_{1}} \bigotimes \F^{i_3}_{u,-\omega_{2}}}_{HS}   du d\omega_1 d\omega_2\\
	&+  \frac{1}{4\pi} \int_{-\pi}^{\pi} \int_0^{1} 
\innprod{\F^{i_1}_{u,-\omega} \F^{i_2}_{u,-\omega}}{\F^{i_1}_{u,-\omega}\F^{i_3}_{u,-\omega}}du d\omega\\
& +\frac{1}{4\pi} \int_{-\pi}^{\pi} \int_0^{1}\innprod{\F^{i_1}_{u,\omega} \widetilde{\bigotimes}_{\top}  \F^{i_1}_{u,-\omega}}{ \F^{i_1}_{u,\omega} \bigotimes \F^{i_3}_{u,-\omega}}_{HS}du d\omega
\tageq \label{eq:cvF1121}
\end{align*}
\end{enumerate}
 
Let $\boldsymbol{e} $ be the identity vector in $\mathbb{R}^4$. We note that we the variance structure can be written in the form
\begin{align*}
\boldsymbol{e} ^\top\big[ \boldsymbol{\Sigma} \nabla g(\boldsymbol{x})\nabla g^{\top}(\boldsymbol{x}) \big]\boldsymbol{e} 
 =\boldsymbol{e} ^\top &\frac{  \boldsymbol{\Sigma}}{(x_1+x_2)^2} \left(\begin{array}{cccc}
\frac{(x_3+x_4)^2}{{(x_1+x_2)}^2}&\frac{(x_3+x_4)^2}{{(x_1+x_2)}^2}&-\frac{x_3+x_4}{ (x_1+x_2)}&-\frac{x_3+x_4}{ (x_1+x_2)}\\
\frac{(x_3+x_4)^2}{{(x_1+x_2)}^2}&\frac{(x_3+x_4)^2}{{(x_1+x_2)}^2}&-\frac{x_3+x_4}{ (x_1+x_2)}&-\frac{x_3+x_4}{ (x_1+x_2)}\\
-\frac{x_3+x_4}{ (x_1+x_2)}&-\frac{x_3+x_4}{ (x_1+x_2)}&1&1\\
-\frac{x_3+x_4}{ (x_1+x_2)}&-\frac{x_3+x_4}{ (x_1+x_2)}&1&1,\\
\end{array}\right) \boldsymbol{e}
\end{align*}
where $\boldsymbol{\Sigma}$ is defined in \eqref{eq:bigSig} where the expression of the individiual entries are given by
\eqref{eq:VarF1}-\eqref{eq:cvF1121} and where the vector is evaluated as in \eqref{eq:x}. Under the null hypothesis $H_0$, the matrix becomes
\begin{align*}
 \boldsymbol{\Sigma} \nabla g(\boldsymbol{x})\nabla g^{\top}(\boldsymbol{x}) &=\frac{1}{\big(\frac{2}{4 \pi}\int_{-\pi}^{\pi} \int_0^{1}\snorm{\F^{i_1}_{u,\omega}}^2_2 du d\omega\big)^2} \left(\begin{array}{cccc}
1&1&-1&-1\\
1&1&-1&-1\\
-1&-1&1&1\\
-1&-1&1&1\\
\end{array}\right)\\  &
\times 
\left(\begin{array}{cccc}
 \V({F}_{i_1,i_1}) &0 &\cv({F}_{i_1,i_1},{F}_{i_1,i_2}) & \cv({F}_{i_1,i_1},{F}_{i_2,i_1})\\
0 &  \V({F}_{i_2,i_2}) &\cv({F}_{i_2,i_2},{F}_{i_1,i_2}) & \cv({F}_{i_2,i_2},{F}_{i_2,i_1})\\
\cv({F}_{i_1,i_1},{F}_{i_1,i_2}) &\cv({F}_{i_2,i_2},{F}_{i_1,i_2}) &  \V({F}_{i_1,i_2}) & \cv({F}_{i_1,i_2},{F}_{i_2,i_1})\\
 \cv({F}_{i_1,i_1},{F}_{i_2,i_1})& \cv({F}_{i_2,i_2},{F}_{i_2,i_1}) &\cv({F}_{i_1,i_2},{F}_{i_2,i_1}) & \V({F}_{i_2,i_1}). 
\end{array}\right) \tageq  \label{bigcov}
\end{align*} 
where we used that $\cv({F}_{i_1,i_1},{F}_{i_2,i_2})=0$ because of the independence assumption. Using the expression of the entries of the asympototic variance \eqref{eq:VarF1}-\eqref{eq:cvF1121} we exploit that all of these components are restricted forms of \eqref{eq:VarF1} and that this equation consists of 6 distinct terms. For the structure of \eqref{bigcov}, a tedious derivation yields
\begin{enumerate}
\item the first and second term of (the fourth order terms) \eqref{eq:VarF1}: $2$ times in $\V({F}_{i_1,i_1})$, $\V({F}_{i_2,i_2})$, once in $\V({F}_{i_2,i_1})$ and $\V({F}_{i_2,i_1}),  \cv({F}_{i_1,i_2},{F}_{i_2,i_1})$, and once in $\cv({F}_{i_1,i_1},{F}_{i_1,i_2}),  \cv({F}_{i_1,i_1},{F}_{i_2,i_1})$. Hence using the number of occurrence in the covariance matrix of these entries, we find these terms to arise $4+2+2-4-4=0$ times and hence cancels.
\item the third term of \eqref{eq:VarF1}: two times in $\V({F}_{i_1,i_1})$, $\V({F}_{i_2,i_2})$, once in $\V({F}_{i_2,i_1})$ and $\V({F}_{i_2,i_1})$,  $\cv({F}_{i_1,i_2},{F}_{i_2,i_1})$, and once in  $\cv({F}_{i_1,i_1},{F}_{i_1,i_2}),  \cv({F}_{i_1,i_1},{F}_{i_2,i_1})$. Hence using the number of occurrence in the matrix, we find this term to arise $4+2+2-4-4=0$ times and hence cancels.
\item  the fourth term of \eqref{eq:VarF1}: once in $\V({F}_{i_1,i_1})$, $\V({F}_{i_2,i_2})$ and in once in $\V({F}_{i_2,i_1})$ and $\V({F}_{i_2,i_1})$ and does not aris in the other components. Therefore, we find a term $\frac{4}{4\pi} \int_{-\pi}^{\pi} \int_0^{1}\snorm{\F^{i_1}_{u,\omega}}^4_2 du d\omega$ to remain.
\item  the fifth term of \eqref{eq:VarF1}: once in $\V({F}_{i_1,i_1})$, $\V({F}_{i_2,i_2})$ and in once in $\V({F}_{i_2,i_1})$ and $\V({F}_{i_2,i_1})$ and off diagonal it occurs in the structures of the form $\cv({F}_{i_1,i_1},{F}_{i_1,i_2})$, which arises 4 times. Hence $4-4=0$ and cancels.
\item the sixth term of \eqref{eq:VarF1}: once in $\V({F}_{i_1,i_1})$, $\V({F}_{i_2,i_2})$, and from the off-diagonal terms once in $\cv({F}_{i_1,i_2},{F}_{i_2,i_1})$ and once in  $\cv({F}_{i_1,i_1},{F}_{i_2,i_1})$. As $\cv({F}_{i_1,i_2},{F}_{i_2,i_1})$ occurs twice off-diagonal in \eqref{bigcov} and $\cv({F}_{i_1,i_1},{F}_{i_2,i_1})$ four times, we find the sixth term of \eqref{eq:VarF1} to occur $2+2-4=0$ and cancels as well.
\end{enumerate}
Thus, under the null hypothesis $H_0$
\begin{align*}
\V(\hat{\A}_{(i_1,i_2)})
\overset{p}{\to} \frac{\frac{4}{4\pi} \int_{-\pi}^{\pi} \int_0^{1}\snorm{\F^{i_1}_{u,\omega}}^4_2 du d\omega}{\big(\frac{2}{4 \pi}\int_{-\pi}^{\pi} \int_0^{1}\snorm{\F^{i_1}_{u,\omega}}^2_2 du d\omega\big)^2} = 4\pi\frac{ \int_{-\pi}^{\pi} \int_0^{1}\snorm{\F^{i_1}_{u,\omega}}^4_2 du d\omega}{\big(\int_{-\pi}^{\pi} \int_0^{1}\snorm{\F^{i_1}_{u,\omega}}^2_2 du d\omega\big)^2}.
\end{align*}
\end{proof}
\begin{proof}[\bf  Proof of \autoref{lem:est_sig}]
We have,
\[
\innprod{I_{p}^{u_j,\omega_k}}{I_{p}^{u_j,\omega_{k-1}}}_{HS} 
=\biginnprod{ \frac{1}{2}\big(I_{i_1}^{u_j,\omega_k}+ I_{i_2}^{u_j,\omega_k}\big)}{\frac{1}{2} \big(I_{i_1}^{u_j,\omega_{k-1}}+ I_{i_2}^{u_j,\omega_{k-1}}\big)}_{HS}.
\]
Hence the proof of theorem \autoref{thm:con} shows that 
\[
\Big( \frac{2 }{T}\sum_{j=1}^M \sum_{k=1}^{\lfloor N/2 \rfloor} \langle I_{p}^{u_j,\omega_k}, I_{p}^{u_j,\omega_{k-1}}\rangle_{HS}\Big)^2 \overset{p}{\to} \big(\frac{2}{4 \pi}\int_{-\pi}^{\pi} \int_0^{1}\snorm{\F^{i_1}_{u,\omega}}^2_2 du d\omega\big)^2,\]
and it remains to show that under $H_0$
\[ \frac{2}{3T}\sum_{j=1}^M\sum_{k=1}^{\lfloor N/2 \rfloor}  \big(\innprod{I_{p}^{u_j,\omega_k}}{I_{p}^{u_j,\omega_{k-1}}}_{HS}\big)^2 \overset{p}{\to} \frac{1}{\pi} \int_{-\pi}^{\pi} \int_0^{1}\snorm{\F^{i_1}_{u,\omega}}^4_2 du d\omega. \]
To this purpose note that
\begin{align*}
\E  \frac{1}{T}\sum_{j=1}^M\sum_{k=1}^{\lfloor N/2 \rfloor}  \big(\innprod{I_{p}^{u_j,\omega_k}}{I_{p}^{u_j,\omega_{k-1}}}_{HS}\big)^2 & =  \frac{1}{T}\sum_{j=1}^M\sum_{k=1}^{\lfloor N/2 \rfloor}  \cv \big(\innprod{I_{p}^{u_j,\omega_k}}{I_{p}^{u_j,\omega_{k-1}}}_{HS}\big) \\& 
  + \frac{1}{T}\sum_{j=1}^M\sum_{k=1}^{\lfloor N/2 \rfloor} \big(\E \innprod{I_{p}^{u_j,\omega_k}}{I_{p}^{u_j,\omega_{k-1}}}_{HS}\big)^2,  \tageq \label{eq:varnum}
\end{align*} 
and that the second term converges to $ \frac{1}{4\pi}\int_{-\pi}^{\pi} \int_0^{1}\snorm{\F^{i_1}_{u,\omega}}^4_2 du d\omega$. For the first term we write
\begin{align*}
\cv & \big(\innprod{I_{p}^{u_j,\omega_k}}{I_{p}^{u_j,\omega_{k-1}}}_{HS}\big) \\ =&\frac{1}{4}\Big( \cv \big(\innprod{I_{i_1}^{u_j,\omega_k}}{I_{i_1}^{u_j,\omega_{k-1}}}_{HS}+\innprod{I_{i_2}^{u_j,\omega_k}}{I_{i_2}^{u_j,\omega_{k-1}}}_{HS} +\innprod{I_{i_1}^{u_j,\omega_k}}{I_{i_2}^{u_j,\omega_{k-1}}}_{HS}+\innprod{I_{2}^{u_j,\omega_k}}{I_{1}^{u_j,\omega_{k-1}}}_{HS}\Big)
 \\ 
=&\frac{1}{4}\Bigg[\V \big(\innprod{I_{i_1}^{u_j,\omega_k}}{I_{i_1}^{u_j,\omega_{k-1}}}_{HS}\big)+\V\big(\innprod{I_{i_2}^{u_j,\omega_k}}{I_{_2}^{u_j,\omega_{k-1}}}_{HS} \big)+\V\big(\innprod{I_{i_1}^{u_j,\omega_k}}{I_{i_2}^{u_j,\omega_{k-1}}}_{HS}\big)+\V\big(\innprod{I_{i_2}^{u_j,\omega_k}}{I_{i_1}^{u_j,\omega_{k-1}}}_{HS}\big) \\& +2\cv(\innprod{I_{i_1}^{u_j,\omega_k}}{I_{i_1}^{u_j,\omega_{k-1}}}_{HS},\innprod{I_{i_1}^{u_j,\omega_k}}{I_{2}^{u_j,\omega_{k-1}}}_{HS}) +2 \cv(\innprod{I_{i_1}^{u_j,\omega_k}}{I_{i_1}^{u_j,\omega_{k-1}}}_{HS},\innprod{I_{i_2}^{u_j,\omega_k}}{I_{i_1}^{u_j,\omega_{k-1}}}_{HS})
\\& +2\cv(\innprod{I_{i_2}^{u_j,\omega_k}}{I_{i_2}^{u_j,\omega_{k-1}}}_{HS},\innprod{I_{1}^{u_j,\omega_k}}{I_{i_2}^{u_j,\omega_{k-1}}}_{HS}) +2 \cv(\innprod{I_{i_2}^{u_j,\omega_k}}{I_{2}^{u_j,\omega_{k-1}}}_{HS},\innprod{I_{i_2}^{u_j,\omega_k}}{I_{1}^{u_j,\omega_{k-1}}}_{HS})\\&
+2\cv(\innprod{I_{i_1}^{u_j,\omega_k}}{I_{i_2}^{u_j,\omega_{k-1}}}_{HS},\innprod{I_{i_2}^{u_j,\omega_k}}{I_{i_1}^{u_j,\omega_{k-1}}}_{HS}) +2\cv(\innprod{I_{i_1}^{u_j,\omega_k}}{I_{i_1}^{u_j,\omega_{k-1}}}_{HS},\innprod{I_{i_2}^{u_j,\omega_k}}{I_{i_2}^{u_j,\omega_{k-1}}}_{HS}) \Bigg]\tageq \label{eq:bigvar}
\end{align*}
Consider the structure of the first two terms of \eqref{eq:bigvar}. By \autoref{thm:highcumF1234}, we are interested in all indecomposable partitions of the array where the summation is overall indecomposable partitions of the array
\[\begin{matrix}
\underbrace{D_{i_1}^{u_{j},\omega_{k}}}_{1}&\underbrace{D_{i_1}^{u_{j},-\omega_{k}}}_2 & \underbrace{D_{i_1}^{u_{j},-\omega_{k-1}}}_3& \underbrace{D_{i_1}^{u_{j},\omega_{k-1}}}_4\\ 
\underbrace{D_{i_1}^{u_{j},-\omega_{k}}}_5 &\underbrace{D_{i_1}^{u_{j},\omega_{k}}}_6&\underbrace{D_{i_1}^{u_{j},\omega_{k-1}}}_7& \underbrace{D_{i_1}^{u_{j},-\omega_{k-1}}}_8
\end{matrix}\]
It is immediate from \autoref{cor:cumbound} that all terms not consisting of second order cumulants will be of lower order. Additionally, certain partitions will be of lower order when it involves a cumulant component that is off the frequency manifold. Indecomposability moreover requires the first row to hook with the second. Those partitions that remain are
\begin{align*}
& \Tr\Big(S_{(15)(26)(34)(78)}\Big( \F^{i_1}_{u_{j},\omega_{k}}\otimes \F^{i_1}_{u_{j},-\omega_{k}}\otimes \F^{i_1}_{u_{j},-\omega_{k-1}} \otimes \F^{i_1}_{u_{j},\omega_{k-1}}+\Eps_2 \Big) \Big)\\
&\Tr\Big(S_{(15)(26)(37)(48)}\Big(  \F^{i_1}_{u_{j},\omega_{k}}\otimes \F^{i_1}_{u_{j},-\omega_{k}}\otimes \F^{i_1}_{u_{j},-\omega_{k-1}} \otimes \F^{i_1}_{u_{j},\omega_{k-1}}+\Eps_2\Big) \Big)\\
& \Tr\Big(S_{(12)(56)(37)(48)}\Big(\F^{i_1}_{u_{j},\omega_{k}}\otimes \F^{i_1}_{u_{j},-\omega_{k}}\otimes \F^{i_1}_{u_{j},-\omega_{k-1}} \otimes \F^{i_1}_{u_{j},\omega_{k-1}}+\Eps_2\Big)\Big)
\end{align*}
Hence, a similar argument as provided in Proposition \autoref{prop:2ndstruc} demonstrates that the orignal order has correspondence $1 \to 3, 2 \to 4, 5 \to 7, 6 \to 8$, if we plug these in, we find 
\begin{align*}
& \Tr\Big(S_{(15)(26)(12)(56)}\Big( \F^{i_1}_{u_{j},\omega_{k}}\otimes \F^{i_1}_{u_{j},-\omega_{k}}\otimes \F^{i_1}_{u_{j},-\omega_{k-1}} \otimes \F^{i_1}_{u_{j},\omega_{k-1}}+\Eps_2 \Big) \Big)\\
&\Tr\Big(S_{(15)(26)(15)(26)}\Big(  \F^{i_1}_{u_{j},\omega_{k}}\otimes \F^{i_1}_{u_{j},-\omega_{k}}\otimes \F^{i_1}_{u_{j},-\omega_{k-1}} \otimes \F^{i_1}_{u_{j},\omega_{k-1}}+\Eps_2\Big) \Big)\\
& \Tr\Big(S_{(12)(56)(15)(26)}\Big(\F^{i_1}_{u_{j},\omega_{k}}\otimes \F^{i_1}_{u_{j},-\omega_{k}}\otimes \F^{i_1}_{u_{j},-\omega_{k-1}} \otimes \F^{i_1}_{u_{j},\omega_{k-1}}+\Eps_2\Big)\Big)
\end{align*}
But because switching the tensors at position $2$ and $5$ has no effect (being the same object), we find using Properties \autoref{tensorprop}
\begin{align*}
 \frac{1}{T}\sum_{j=1}^M\sum_{k=1}^{\lfloor N/2 \rfloor} \text{Var} \big(\innprod{I_{i_1}^{u_j,\omega_k}}{I_{i_1}^{u_j,\omega_{k-1}}}_{HS}\big) & 
= \frac{3}{T}\sum_{j=1}^M\sum_{k=1}^{\lfloor N/2 \rfloor} 
 \innprod{\F^{i_1}_{u_{j},\omega_{k}}\bigotimes \F^{i_1}_{u_{j},\omega_{k}}}{\F^{i_1}_{u_{j},\omega_{k-1}}\bigotimes \F^{i_1}_{u_{j},\omega_{k-1}}}_{HS} 
+O(\frac{1}{M^2})
  \\ & = \frac{3}{T}\sum_{j=1}^M\sum_{k=1}^{\lfloor N/2 \rfloor} \innprod{\F^{i_1}_{u_{j},\omega_{k}}}{\F^{i_1}_{u_{j},\omega_{k-1}}}_{HS} \innprod{\F^{i_1}_{u_{j},\omega_{k}}}{ \F^{i_1}_{u_{j},\omega_{k-1}}}_{HS} 
 +O(\frac{1}{M^2})
    \\ & \to \frac{3}{4\pi} \int_{-\pi}^{\pi} \int_0^{1} \snorm{\F^{i_1}_{u_{j},\omega_{k}}}^2_{2}  \snorm{\F^{i_1}_{u_{j},\omega_{k}}}^2_{2}
  du d\omega  
  \\& =\frac{3}{4\pi} \int_{-\pi}^{\pi} \int_0^{1} \snorm{\F^{i_1}_{u_{j},\omega_{k}}}^4_{2} du d\omega, \tageq \label{eq:VARpooled11}
  \end{align*}
Similarly
\begin{align*}
 \frac{1}{T}\sum_{j=1}^M\sum_{k=1}^{\lfloor N/2 \rfloor} \text{Var} \big(\innprod{I_{i_2}^{u_j,\omega_k}}{I_{i_2}^{u_j,\omega_{k-1}}}_{HS}\big)    =\frac{3}{4\pi} \int_{-\pi}^{\pi} \int_0^{1} \snorm{\F^{i_1}_{u_{j},\omega_{k}}}^4_{2} du d\omega, \tageq \label{eq:VARpooled22}
  \end{align*}
under $H_0$. Consider then the structure of the third and fourth term of  \eqref{eq:bigvar}.  We are again interested in all indecomposable partitions of the array where the summation is overall indecomposable partitions of the array
\[\begin{matrix}
\underbrace{D_{i_1}^{u_{j},\omega_{k}}}_{1}&\underbrace{D_{i_1}^{u_{j},-\omega_{k}}}_2 & \underbrace{D_{i_2}^{u_{j},-\omega_{k-1}}}_3& \underbrace{D_{i_2}^{u_{j},\omega_{k-1}}}_4\\ 
\underbrace{D_{i_1}^{u_{j},-\omega_{k}}}_5 &\underbrace{D_{i_1}^{u_{j},\omega_{k}}}_6&\underbrace{D_{i_2}^{u_{j},\omega_{k-1}}}_7& \underbrace{D_{i_2}^{u_{j},-\omega_{k-1}}}_8
\end{matrix}\]
Because of uncorrelatedness under $H_0$ between series $i_1$ and $i_2$ and that the rows of the array must hook results  in three permutations remainging; $S_{(15)(26)(34)(78)},S_{(15)(26)(37)(48)} and S_{(12)(56)(37)(48)}$. A similar reasoning shows that, for the fifth to eight terms of  \eqref{eq:bigvar} which involve an array of the form
\[\begin{matrix}
\underbrace{D_{i_1}^{u_{j},\omega_{k}}}_{1}&\underbrace{D_{i_1}^{u_{j},-\omega_{k}}}_2 & \underbrace{D_{i_1}^{u_{j},-\omega_{k-1}}}_3& \underbrace{D_{1}^{u_{j},\omega_{k-1}}}_4\\ 
\underbrace{D_{i_1}^{u_{j},-\omega_{k}}}_5 &\underbrace{D_{i_1}^{u_{j},\omega_{k}}}_6&\underbrace{D_{i_2}^{u_{j},\omega_{k-1}}}_7& \underbrace{D_{i_2}^{u_{j},-\omega_{k-1}}}_8,
\end{matrix}\]
will only have the permutation $S_{(15)(26)(34)(78)}$ remaining. Uncorrelatedness under $H_0$ between series $i_1$ and $i_2$ and the lag Fourier imply that the lasts two terms of \eqref{eq:bigvar} will be of lower order. We thus have that the total sum in \eqref{eq:bigvar} consists of $3 \times2+3 \times 2+4 \times 2=20$ terms and similar to the proof of \eqref{eq:VARpooled11} it can be shown these terms all converge to the same limit under $H_0$. Therefore, we find under $H_0$
\begin{align*}
\E \frac{1}{T}\sum_{j=1}^M\sum_{k=1}^{\lfloor N/2 \rfloor}  \cv \big(\innprod{I_{p}^{u_j,\omega_k}}{I_{p}^{u_j,\omega_{k-1}}}_{HS}\big)  =\frac{20}{4} \frac{1}{4\pi}\int_{-\pi}^{\pi} \int_0^{1} \snorm{\F_{u_{j},\omega_{k}}}^4_{2} du d\omega 
\end{align*}
which together with the second term of \eqref{eq:varnum}, implies
 \begin{align*} \E  \frac{1}{T}\sum_{j=1}^M\sum_{k=1}^{\lfloor N/2 \rfloor} \big(\innprod{I_{p}^{u_j,\omega_k}}{I_{p}^{u_j,\omega_{k-1}}}_{HS}\big)^2
 &=   \frac{1}{T}\sum_{j=1}^M\sum_{k=1}^{\lfloor N/2 \rfloor}  \cv \big(\innprod{I_{p}^{u_j,\omega_k}}{I_{p}^{u_j,\omega_{k-1}}}_{HS}\big) \\& 
    + \frac{1}{T}\sum_{j=1}^M\sum_{k=1}^{\lfloor N/2 \rfloor} \big(\E \innprod{I_{p}^{u_j,\omega_k}}{I_{p}^{u_j,\omega_{k-1}}}_{HS}\big)^2
    \\& \to 
  \frac{20}{4}\frac{1}{4\pi} \int_{-\pi}^{\pi} \int_0^{1} \snorm{\F_{u_{j},\omega_{k}}}^4_{2} du d\omega 
 +\frac{1}{4\pi} \int_{-\pi}^{\pi} \int_0^{1} \snorm{\F_{u_{j},\omega_{k}}}^4_{2} du d\omega 
 \\& 
 = \frac{24}{4}\frac{1}{4\pi} \int_{-\pi}^{\pi} \int_0^{1} \snorm{\F_{u_{j},\omega_{k}}}^4_{2} du d\omega 
 \end{align*}
It can be shown along the lines of the proof of \autoref{thm:con} that this is in fact a $\sqrt{T}$ consistent estimator. The joint convergence in probability therefore immediately follows and the result follows from an application of the continuous mapping theorem. 
\end{proof}

\section{Analysis of the spectral clustering algorithm}
\subsection{Consistency of $\hat{L}$ for $L$}

\begin{proof}[Proof of \autoref{lem:Lcons}]
From \autoref{thm:con} we have that $\hat{\A} \in \rnum^{d \times d}$ is a $\sqrt{T}$-consistent estimator of the distance measure $\A$. The continuous mapping theorem therefore implies that $\hat{W}^T$ is consistent, i.e., 
A simple calculation shows that, as $T \to \infty$,
\begin{align} \label{eq:boundW}
 \mathbb{P}( \opnorm{\hat{W}^T-W} \ge \varepsilon) & \le \mathbb{P}( d \max_{i,j}|\hat{W}_{i,j}^T-W_{i,j}| \ge \varepsilon) \to 0.
\end{align}
Similarly, 
\begin{align}  \label{eq:boundD}
 \mathbb{P}( \max_{i} |D_i-\hat{D}_{i}| \ge \varepsilon) = \mathbb{P}( \max_{i} |\sum_{j} \hat{W}^T_{i,j}-\sum_{j} \hat{W}^T_{i,j}| \ge \varepsilon)  \le \mathbb{P}( d \max_{i,j}|\hat{W}_{i,j}^T-W_{i,j}| \ge \varepsilon) \to 0.
\end{align}
Similar to \citet{CR2011}, we 
use the decomposition 
\begin{align*}
\hat{L}-L 
 & = \hat{D}^{-1/2}\hat{W}^T\hat{D}^{-1/2} - {D}^{-1/2}\hat{W}^T {D}^{-1/2} +{D}^{-1/2}\hat{W}^T {D}^{-1/2} -{D}^{-1/2}{W} {D}^{-1/2}  \\
 &
= \big(\hat{D}^{-1/2}-{D}^{-1/2}\big)\hat{W}^T\hat{D}^{-1/2} +
   {D}^{-1/2}\hat{W}^T\big(\hat{D}^{-1/2}- {D}^{-1/2}\big) +{D}^{-1/2}\big(\hat{W}^T -{W}\big) {D}^{-1/2}\\&
=\big(I  -{D}^{-1/2} \hat{D}^{1/2}\big) \hat{D}^{-1/2}\hat{W}^T\hat{D}^{-1/2} +
  \big({D}^{-1/2} \hat{D}^{1/2}\big) \hat{D}^{-1/2}\hat{W}^T\hat{D}^{-1/2} \big(I- \hat{D}^{1/2}{D}^{-1/2}\big) 
  \\
  & ~+{D}^{-1/2}\big(\hat{W}^T -{W}\big) {D}^{-1/2}
\end{align*}
and  bound these terms separately. Note that as $D$ and $\hat{D}$ are degree matrices, they are diagonal with nonnegative entries. We therefore have 
\begin{align*}
\snorm{I  -{D}^{-1/2} \hat{D}^{1/2}}_{\infty} =\max_{i} \Big{|}1-\sqrt{\frac{\hat{D}_{i}}{D_i}}\Big{|} \le \max_{i}  \Big{|}1-{\frac{\hat{D}_{i}}{D_i}} \Big{|} \le  \max_{i} \frac{|D_i-\hat{D}_{i}|}{\min_{i} D_i} .
\end{align*}
The triangle inequality gives
\begin{align*}
\snorm{{D}^{-1/2} \hat{D}^{1/2}}_{\infty} =\snorm{I-\big(I-{D}^{-1/2} \hat{D}^{1/2})}_{\infty} \le 1+\max_{i} \frac{|D_i-\hat{D}_{i}|}{\min_{i} D_i} 
\end{align*}
Additionally, since $\hat{D}_{i}= \sum_{j} {\hat{W}^T_{i,j}}$ it follows that $
\snorm{\hat{D}^{-1/2}\hat{W}^T\hat{D}^{-1/2}}_{\infty} = 1$. Furthermore, \\$
\snorm{{D}^{-1/2}\big(\hat{W}^T -{W}\big) {D}^{-1/2}}_{\infty} \le \frac{1}{\min_{i} D_i} \snorm{\hat{W}^T -{W}}_{\infty}.
$
Therefore,
\begin{align*}
\snorm{\hat{L}-L}_{\infty} & \le \frac{\max_{i} |D_i-\hat{D}_{i}|}{\min_{i} D_i} \Bigg(2+ \frac{\max_{i} |D_i-\hat{D}_{i}|}{\min_{i} D_i}\Bigg) +\frac{1}{\min_{i} D_i} \snorm{\hat{W}^T -{W}}_{\infty}.
\end{align*}
Consequently, \eqref{eq:boundW} and  \eqref{eq:boundD} imply
\[
\forall \varepsilon>0,\quad \lim_{T \to \infty}  \mathbb{P}\big(\snorm{\hat{L}-L}_{\infty} > \varepsilon\big)=0. 
\]
\end{proof}

\subsection{Concentration of $\Vn$}

We shall use \autoref{lem:Lcons} to analyze the concentration of $\Vn$. We first need the following auxiliary lemma, which follows from the Davis-Kahan theorem \citep{DK70}

\begin{lemma} \label{lem:Rotbound}
Let $S \subset \mathbb{R}$ an interval. Let $A, H \in \mathbb{R}^{d \times d}$ be two symmetric matrices and let $\hat{A}= A+H$ denote a perturbed version of A. Denote $\hat{Q}$ and $Q$ be orthornormal matrices of dimension $\mathbb{R}^{d \times k}$ whose column spaces equal the eigenspace of $\hat{A}$ and $A$ respectively. Then there exists an orthonormal rotation matrix $O \in \mathbb{R}^{k \times k}$ such that 
\[
\|\hat{Q}-QO \|_2 \le  \frac{\sqrt{2 k}\snorm{H}_{\infty}}{\delta}
\]
where
\[
\delta = \min\{ | \lambda- s|: \lambda \text{ eigenvalue of } A,\, \lambda \not\in S,\, s \in S \}
\]
\end{lemma}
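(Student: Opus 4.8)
The plan is to obtain the bound in two stages. First I would invoke the classical Davis--Kahan $\sin\Theta$ theorem \citep{DK70} to control the principal angles between the eigenspaces spanned by $Q$ and $\hat Q$ in terms of the perturbation $H$ and the separation $\delta$, and then convert this angular estimate into a bound on the Euclidean distance $\|\hat Q - QO\|_2$ between the orthonormal bases, minimized over the rotation $O$. Here I read $\|\cdot\|_2$ as the Frobenius norm of a matrix and $\snorm{\cdot}_\infty$ as the operator norm, consistent with the statement and with its downstream use in \autoref{cor:conVn}.

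For the first stage I would set $M = Q^\top \hat Q \in \mathbb{R}^{k\times k}$ and recall that its singular values are precisely the cosines $\cos\theta_1,\dots,\cos\theta_k$ of the principal angles between the two column spaces, so that $\sin\Theta=\operatorname{diag}(\sin\theta_1,\dots,\sin\theta_k)$. Applying the Davis--Kahan theorem with the interval $S$ containing the eigenvalues of $A$ associated with $Q$ and separated by $\delta$ from the remainder of the spectrum yields the unitarily invariant bound $\snorm{\sin\Theta}_\infty \le \snorm{H}_\infty/\delta$. Since there are at most $k$ nonzero principal angles, passing to the Frobenius norm gives $\|\sin\Theta\|_2 \le \sqrt{k}\,\snorm{\sin\Theta}_\infty \le \sqrt{k}\,\snorm{H}_\infty/\delta$.

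For the second stage I would take $O$ to be the orthogonal polar factor $U_1 V_1^\top$ extracted from a singular value decomposition $M=U_1\Sigma V_1^\top$, which is the rotation minimizing $\|\hat Q - QO\|_2$. Using $Q^\top Q = \hat Q^\top \hat Q = I_k$ and $O^\top O = I_k$, a direct expansion of the squared norm gives
\[
\|\hat Q - QO\|_2^2 = 2k - 2\,\Tr(O^\top M) = 2\sum_{i=1}^k\bigl(1-\cos\theta_i\bigr),
\]
where the second equality uses $\Tr(O^\top M)=\Tr(\Sigma)=\sum_i\cos\theta_i$. The elementary inequality $1-\cos\theta \le 1-\cos^2\theta = \sin^2\theta$, valid because each $\cos\theta_i\ge 0$, then yields $\|\hat Q - QO\|_2^2 \le 2\|\sin\Theta\|_2^2$, i.e.\ $\|\hat Q - QO\|_2 \le \sqrt{2}\,\|\sin\Theta\|_2$. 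Combining this with the estimate from the first stage produces the claimed bound $\|\hat Q - QO\|_2 \le \sqrt{2k}\,\snorm{H}_\infty/\delta$.

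The hard part will be stating the hypotheses of Davis--Kahan in exactly the form that makes $\delta$ equal to the gap defined in the lemma, and verifying that the polar factor $O$ simultaneously minimizes the distance and reduces the cross term $\Tr(O^\top M)$ to $\sum_i\cos\theta_i$; the passage from the operator-norm $\sin\Theta$ bound to a Frobenius distance, together with the bookkeeping of the constant $\sqrt{2k}$, is the only genuinely delicate point, whereas the remaining manipulations are routine linear algebra.
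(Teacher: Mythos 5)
Your proposal is correct and follows essentially the same route as the paper: both take $O$ to be the orthogonal polar factor of $Q^\top\hat Q$, expand $\|\hat Q - QO\|_2^2 = 2k - 2\sum_i\cos\theta_i$, use $1-\cos\theta_i\le\sin^2\theta_i$ to reduce to $2\|\sin\Theta\|_2^2$, and then invoke Davis--Kahan. The only (cosmetic) difference is where the factor $\sqrt{k}$ enters --- you apply Davis--Kahan in operator norm and pass to the Frobenius norm of the $k\times k$ matrix $\sin\Theta$, while the paper applies it in Frobenius norm and bounds $\|H\|_2\le\sqrt{k}\,\snorm{H}_\infty$; your ordering is arguably the cleaner of the two.
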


\begin{proof}[Proof of \autoref{lem:Rotbound}]
Using the singular value decomposition, we can find orthonormal matrices $P_1$ and $P_2$ such that the singular values of $Q^\top \hat{Q}$ are exactly the cosines of the principal angles $\Theta$, i.e., we can find $P_1$ and $P_2$ such that $Q^\top \hat{Q} = P_1 \Sigma P_2^\top$ where the diagonal of $\Sigma$ contains the principal angles between the column space of $\hat{Q}$ and $Q$. Define the rotation matrix $O$ as $O=P_1 P_2^\top$. Then, by definition of the Frobenius norm, the orthonormality of $\hat{Q}$ and $Q$ 
\begin{align*}
\|\hat{Q}-QO \|^2_2  &= \Tr\big((\hat{Q}-QO)^\top(\hat{Q}-QO)\big) 
\\& = 2k -2\Tr( OQ^\top\hat{Q})
\\& 
=2k -2 \Tr(\cos \Theta) =2k -2\sum_{i=1}^k \cos \theta_i
\\& \le 2k -2\sum_{i=1}^k \cos (\theta)^2_i = 2k -2k + 2\sum_{i=1}^k \sin (\theta)^2_i = 2\|\sin \Theta \|^2_2.
\end{align*}
The classical Davis-Kahan theorem then yields
\begin{align*}
\|\hat{Q}-QO \|^2_2  &\le  2 \|\sin \Theta \|^2_2 \le 2 \frac{\| H \|^2_2}{\delta^2}.
\end{align*}
Finally, since $\| H\|^2_2 \le k \max_j |\lambda^H_j|^2 = k \snorm{H}_{\infty}^2$, we obtain
\[
\|\hat{Q}-QO \|_2 \le  \sqrt{2 k}  \frac{\snorm{ H}_{\infty}}{\delta}.
\]
\end{proof}

\begin{Corollary} \label{cor:rotbound}
There exists an orthonormal rotation matrix $O \in \mathbb{R}^{k \times k}$ such that
\[
\|\hat{U}-UO \|_2 \le  \frac{2\sqrt{k}\snorm{\hat{L}-L}_{\infty}}{\lambda_{k+1}}
\]
where $\lambda_{k+1}$ is the $(k+1)$-th smallest eigenvalue of $L$.
\end{Corollary}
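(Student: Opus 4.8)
The plan is to obtain Corollary \ref{cor:rotbound} as a direct specialization of \autoref{lem:Rotbound} to the pair $(L,\hat L)$. I would set $A = L$, $\hat A = \hat L$, and $H = \hat L - L$, so that $\hat A = A + H$ is a perturbed version of $A$. Since $L$ is symmetric (see the discussion following \eqref{eq:L}) and $\hat L$ is symmetric by construction, the perturbation $H$ is symmetric as well, and the hypotheses of \autoref{lem:Rotbound} are satisfied. I then take $Q = U$ and $\hat Q = \hat U$ to be the orthonormal matrices whose columns span the eigenspaces associated with the $k$ smallest eigenvalues of $L$ and $\hat L$, respectively; these are exactly the matrices appearing in the statement.

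It remains to identify the spectral gap $\delta$. Under the assumption that the graph has $k$ connected components, the smallest $k$ eigenvalues of $L$ all vanish, i.e. $\lambda_1 = \cdots = \lambda_k = 0 < \lambda_{k+1} \le \cdots \le \lambda_d$, which is precisely the structure underlying $\mathcal{U}$ in \eqref{eq:popEmb}. I would therefore choose the interval $S = \{0\}$ (equivalently, any sufficiently small interval around $0$ that excludes $\lambda_{k+1}$), so that the eigenvalues of $L$ lying outside $S$ are exactly $\lambda_{k+1},\ldots,\lambda_d$. The quantity $\delta$ of \autoref{lem:Rotbound} then reduces to $\delta = \min_{j \ge k+1} |\lambda_j - 0| = \lambda_{k+1}$.

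Substituting this gap into the conclusion of \autoref{lem:Rotbound} yields the existence of an orthonormal $O \in \mathbb{R}^{k \times k}$ with $\|\hat U - U O\|_2 \le \sqrt{2k}\,\snorm{\hat L - L}_\infty / \lambda_{k+1}$. Since $\sqrt{2k} = \sqrt{2}\,\sqrt{k} \le 2\sqrt{k}$, the stated bound follows after relaxing the constant, which is harmless for the subsequent consistency arguments.

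I do not anticipate a genuine obstacle here, as the corollary is essentially a bookkeeping application of the already-established \autoref{lem:Rotbound}. The only two points requiring care are (i) verifying that the $k$ smallest eigenvalues of $L$ are exactly $0$ — this is what turns the generic denominator $\delta$ into $\lambda_{k+1}$ rather than a gap of the form $\lambda_{k+1} - \lambda_k$, and it relies on the $k$-connected-component structure of the population graph; and (ii) selecting the separating interval $S$ so that the relevant gap is precisely $\lambda_{k+1}$ and not a smaller quantity.
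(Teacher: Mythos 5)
Your proposal is correct and follows essentially the same route as the paper: both apply \autoref{lem:Rotbound} with $A=L$, $\hat A=\hat L$, take $S=\{0\}$ (or a small interval around $0$), and use the $k$-fold zero eigenvalue of $L$ to identify $\delta=\lambda_{k+1}$. Your explicit remark that $\sqrt{2k}\le 2\sqrt{k}$ is in fact slightly cleaner than the paper, which silently writes the constant $2\sqrt{k}$ when invoking the lemma.
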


\begin{proof}[Proof of \autoref{cor:rotbound}]
By construction, $\hat{L}$ and $L$ are symmetric and it is clear that we can view $\hat{L}$ as a perturbed version of $L$. Additionally, the columns of $\hat{U}$ and $U$ contain the eigenvectors that correspond to the $k$ smallest  eigenvalues of $\hat{L}$ and $L$, respectively. It follows therefore directly from \autoref{lem:Rotbound} that
\[
\|\hat{U}-UO \|_2 \le  \frac{2\sqrt{k}\snorm{\hat{L}-L}_{\infty}}{\delta} \le \frac{2\sqrt{k}\snorm{\hat{L}-L}_{\infty}}{\lambda_{k+1}}.
\]
The last inequality is a consequence of the following observation. The matrix $L$ has exactly k zero eigenvalues. Hence if we take $S=[0, \epsilon)$ for arbitray small $\epsilon >0$ or actually the singleton $S=\{0\}$, then the first k eigenvalues of $L$ all belong to S. The smallest distance between eigenvalues that belong to S and that do not belong to $S$ is thus given by $|0-\lambda_{k+1}|$. Hence $\delta=\lambda_{k+1}$.
\end{proof}

\begin{proof}[Proof of \autoref{cor:conVn}]

We note that by definition we have  $\Vn_{i,\cdot} = \frac{\hat{U}_{i,\cdot}}{\|\hat{U}_{i,\cdot}\|_2}$ and $\mathcal{U}_{i,\cdot} =\frac{{(UO)}_{i,\cdot}}{\|{U}_{i,\cdot}\|_2}$. Therefore standard linear algebra shows 
\begin{align*}
\|\Vn-\mathcal{U} \|^2_2 &= \sum_{i=1}^d \Bignorm{ \frac{\hat{U}_{i,\cdot}}{\|\hat{U}_{i,\cdot}\|_2}-\frac{{(UO)}_{i,\cdot}}{\|{U}_{i,\cdot}\|_2}}^2_2
\\& \le 2 \sum_{i=1}^d  \Bignorm{ \frac{\hat{U}_{i,\cdot} \big( \|{U}_{i,\cdot}\|_2- \|\hat{U}_{i,\cdot}\|_2\big)}{\|\hat{U}_{i,\cdot}\|_2 \|{U}_{i,\cdot}\|_2}}^2_2+ \Bignorm{\frac{\hat{U}_{i,\cdot}-{(UO)}_{i,\cdot}}{\|{U}_{i,\cdot}\|_2}}^2_2
\\& =2 \sum_{i=1}^d   \frac{\big| \|{U}_{i,\cdot}\|_2- \|\hat{U}_{i,\cdot}\|_2 \big|^2}{\|{U}_{i,\cdot}\|^2_2}+ \frac{\|\hat{U}_{i,\cdot}-{(UO)}_{i,\cdot}\|^2_2}{\|{U}_{i,\cdot}\|^2_2}
\\& \le 4 \sum_{i=1}^d  \frac{\|\hat{U}_{i,\cdot}-{(UO)}_{i,\cdot}\|^2_2}{\|{U}_{i,\cdot}\|^2_2}
 \\& \le \frac{4}{\min_{i}\|{U}_{i,\cdot}\|^2_2} \|\hat{U}-{(UO)}\|^2_2  = \frac{4}{\min_{i} D_i} \|\hat{U}-{(UO)}\|^2_2
\end{align*}

The last equality follows since $U$ collects eigenvectors of the form $\sqrt{D}\mathbb{1}_{C_l}$ for $l=1,\ldots,k$, where $\mathbb{1}_{C_l} \in \mathbb{R}^d$ denotes the indicator vector that equals 1 if point $i$ belongs to component $C_l$. This means in particular that $U$ has exactly one nonzero entry per row. A trivial lower bound on  $\min_{i} D_i$ can be given by
\[
\min_{i}\|{U}_{i,\cdot}\|^2_2 \ge \frac{\min_i D_{i}}{\mathcal{C_{\text{max}}}}
\]
where $\mathcal{C}_{\text{max}} =\max_{i}\sum_{i_1 \in G_i} \sum_{i_2 \in G_i}W_{i_1,i_2}$. 
Hence, using \autoref{cor:rotbound} and \autoref{lem:Lcons}
\begin{align*}
\|\Vn-\mathcal{U} \|_2 \le 4 \sqrt{k} \sqrt{\frac{ \mathcal{C_{\text{max}}}}{\min_i D_i}} \frac{\opnorm{\hat{L}-L}}{\lambda_{k+1}} \to 0  \text{ as } T \to \infty.
\end{align*}
\end{proof}

\subsection{Analyzing the $k$-means step}

Using the properties of the row-normalized eigenvectors of $L$, we proceed by providing a definition of the set of points that are clustered correctly and then derive a bound on the complement set.The technique is therefore similar to, among others, \citet{RChY11} and \cite{LR2015}.

\begin{lemma}\label{lem:sigma}
Assume the graph has $k$ components. Let $C^{\star}$ defined in \eqref{eq:kmeansobj} and $\mathcal{U}$ defined in \eqref{eq:popEmb}. Then, the set of correctly clustered points is defined as the complement of the set 
\begin{align}\label{eq:sigma}
\Sigma=\{i: \|C^{\star}_{i,\cdot} - \mathcal{U}_{i,\cdot}\|_2 \ge \frac{1}{\sqrt{2}}\}\end{align}
\end{lemma}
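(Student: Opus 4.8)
The plan is to justify that the numerical threshold $1/\sqrt{2}$ appearing in the definition of $\Sigma$ is exactly the critical radius that encodes the intuitive notion of correct clustering stated just above the lemma: namely that $C^\star_{i,\cdot}$ must be strictly closer to $\mathcal{U}_{i,\cdot}$ than to any row $\mathcal{U}_{j,\cdot}$ belonging to a different component. Thus what has to be shown is that if $i\notin\Sigma$, then $\mathcal{U}_{i,\cdot}$ is the unique closest among the distinct rows of $\mathcal{U}$ to the $k$-means centroid $C^\star_{i,\cdot}$.

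First I would recall the geometry of the population embedding $\mathcal{U}$ from \eqref{eq:popEmb}. Since the graph $G$ has exactly $k$ connected components, the eigenvalue $0$ of $L$ has multiplicity $k$ and its eigenspace is spanned by $\{D^{1/2}\mathbb{1}_{G_l}\}_{l=1}^{k}$, each supported on a single component. Every vector in this eigenspace is therefore constant on each component up to the factor $\sqrt{D_i}$, so each row $U_{i,\cdot}$ of the matrix $U$ of bottom eigenvectors depends on $i$ only through its component index $l(i)$ and the scalar $\sqrt{D_i}$. After row normalization, $\mathcal{U}_{i,\cdot}$ consequently takes only $k$ distinct values $w_1,\dots,w_k\in\mathbb{R}^k$, one per component. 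Using the orthonormality of the columns of $U$ together with this block structure, a direct computation (weighting each component by its volume $\sum_{i\in G_l} D_i$) shows that $\{w_l\}_{l=1}^{k}$ is an orthonormal family; in particular, for nodes $i,j$ in different components,
\[
\|\mathcal{U}_{i,\cdot}-\mathcal{U}_{j,\cdot}\|_2=\|w_{l(i)}-w_{l(j)}\|_2=\sqrt{2}.
\]

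With this separation established, suppose $i\notin\Sigma$, i.e.\ $\|C^\star_{i,\cdot}-\mathcal{U}_{i,\cdot}\|_2<1/\sqrt{2}$. For any index $j$ whose component differs from that of $i$, the reverse triangle inequality and the orthonormality above give
\[
\|C^\star_{i,\cdot}-\mathcal{U}_{j,\cdot}\|_2 \;\ge\; \|\mathcal{U}_{i,\cdot}-\mathcal{U}_{j,\cdot}\|_2-\|C^\star_{i,\cdot}-\mathcal{U}_{i,\cdot}\|_2 \;>\; \sqrt{2}-\tfrac{1}{\sqrt{2}} \;=\; \tfrac{1}{\sqrt{2}} \;>\; \|C^\star_{i,\cdot}-\mathcal{U}_{i,\cdot}\|_2 .
\]
Hence $\mathcal{U}_{i,\cdot}$ is strictly the closest of all distinct rows of $\mathcal{U}$ to the centroid $C^\star_{i,\cdot}$ returned by $k$-means, which is precisely the requirement for node $i$ to be correctly clustered. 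Therefore $\Sigma^{c}$ indeed consists of the correctly clustered points, and the definition in \eqref{eq:sigma} is well posed.

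The main obstacle here is conceptual rather than analytic: one must fix the correct notion of ``correct clustering'' in terms of the population centers $\mathcal{U}_{j,\cdot}$ (not of the empirical $\hat{\mathcal{U}}$) and verify that the mutually orthonormal geometry of the rows of $\mathcal{U}$ makes $\sqrt{2}$ the inter-cluster gap, so that half of it, $1/\sqrt{2}$, is the critical radius. Once the rows of $\mathcal{U}$ are identified as an orthonormal family indexed by components, the separation and the triangle-inequality argument are immediate and require no estimation, which is why this lemma can be stated as a definition.
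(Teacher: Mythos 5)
Your argument is correct and rests on the same two geometric facts as the paper's proof: the distinct rows of $\mathcal{U}$ form an orthonormal family, so rows from different components are at distance exactly $\sqrt{2}$, and the threshold $1/\sqrt{2}$ is half of that gap. The difference is in how you formalize ``correctly clustered.'' You take the closest-center criterion stated informally before the theorem and verify it directly via the reverse triangle inequality: if $\|C^\star_{i,\cdot}-\mathcal{U}_{i,\cdot}\|_2<1/\sqrt{2}$, then every other distinct row of $\mathcal{U}$ is at distance greater than $1/\sqrt{2}$ from $C^\star_{i,\cdot}$, so $\mathcal{U}_{i,\cdot}$ is the unique nearest population row. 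The paper instead applies the forward triangle inequality to a \emph{pair} $i,j\in\Sigma^{\mathsf c}$ and shows that $C^\star_{i,\cdot}=C^\star_{j,\cdot}$ forces $\|\mathcal{U}_{i,\cdot}-\mathcal{U}_{j,\cdot}\|_2<\sqrt{2}$, hence $i$ and $j$ lie in the same true component; combined with the fact that $C^\star\in\mathcal{M}(d,k)$, this says the $k$-means partition restricted to $\Sigma^{\mathsf c}$ agrees with the true partition. Your version is the more literal verification of the definition the authors announce; theirs additionally delivers the pairwise agreement property that is what one actually wants when counting misclustered points in Theorem \ref{thm:miscl}. Both are sound, and your derivation of the orthonormality of the normalized rows from the eigenspace structure is a harmless elaboration of what the paper takes as immediate from the indicator form of \eqref{eq:popEmb}.
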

\begin{proof}[Proof of \autoref{lem:sigma}]
By construction and using the properties of the Laplacian, $\mathcal{U}$ has exactly one 1 per row.  All other entries in that row are zero. In total, there are $k$ distinct rows which are orthonormal. Therefore, $\|\mathcal{U}_{i,\cdot}-\mathcal{U}_{j,\cdot}\|_2=0$ if the embedded points $i$ and $j$ belong to the same component and $\|\mathcal{U}_{i,\cdot}-\mathcal{U}_{j,\cdot}\|_2=\sqrt{2}$ if they belong to different components. At the same time, $\|{C}^{\star}_{i,\cdot}-{C}^\star_{j,\cdot}\|_2=0$ if and only if the algorithm has clustered $i,j$ in the same cluster. So let, $i$ and $j$ belong to $\Sigma^\mathsf{c}$. Minkowski's inequality yields 
\[
\|\mathcal{U}_{i,\cdot}-\mathcal{U}_{j,\cdot}\|_2 \le \| \mathcal{U}_{i,\cdot}-{C}^\star_{i,\cdot}\|_2 +\|C^{\star}_{i,\cdot} -C^{\star}_{j,\cdot}\|_2+ \|C^{\star}_{j,\cdot} - \mathcal{U}_{j,\cdot}\|_2 \le 2 \frac{1}{\sqrt{2}} = \sqrt{2}
\]
if and only if $i$ and $j$ are clustered in the same cluster. Otherwise, we have a contradiction. Additionally, since  $C{^\star} \in \mathcal{M}(d,k)$ clusters cannot be split. Hence, points in $\Sigma^\mathsf{c}$ must be correctly clustered
\end{proof}
\begin{proof}[Proof of \autoref{thm:miscl}]
 First note that $\mathcal{U}\in  \mathcal{M}(d,k)$ since it has exactly $k$ distinct rows. Consequently, 
\[\argmin_{C \in \mathcal{M}(d,k)} \|\Vn-C \|^2_2= \|\Vn-C^\star \|^2_2\le \|\Vn-\mathcal{U}\|^2_2\]
\begin{align*}
|\Sigma|=\sum_{i \in \Sigma} 1 \le & \sum_{i \in \Sigma}2\|{C}^\star_{i,\cdot} - \mathcal{U}_{i,\cdot}\|^2_2 \\
&\le2\|{C}^\star - \mathcal{U}\|^2_2 \\
&\le 4 \big(\|{C}^\star- \Vn\|^2_2+\|\Vn - \mathcal{U}\|^2_2 \big)\\
&= 8\|\Vn-\mathcal{U}\|^2_2
\end{align*}
The result now follows from \autoref{cor:conVn}.
\end{proof}
\end{appendices}


\end{document}